 \newtheorem{thm}{Theorem}[section]
 \newtheorem{cor}[thm]{Corollary}
 \newtheorem{lem}[thm]{Lemma}
 \newtheorem{prop}[thm]{Proposition}
 \theoremstyle{definition}
 \theoremstyle{remark}
 \numberwithin{equation}{section}
\newcommand{\CC}{\mathbb{C}}
\newcommand{\RR}{\mathbb{R}}
\newcommand{\wt}[1]{\widetilde{#1}}
\newcommand{\SPn}[2]{\langle \,#1\,|\,#2\, \rangle}
\renewcommand{\le}{\leqslant} 
\renewcommand{\ge}{\geqslant}
\newcommand{\ple}{\prec}
\renewcommand{\imath}{i}
\newcommand{\dom}{\mathrm{dom}}
\renewcommand{\Im}{\mathrm{Im}\,}
\newcommand{\one}{\mathbbm{1}}
\renewcommand{\L}{\mathcal{L}}
\renewcommand{\d}{\Delta}
\newcommand{\C}{\mathcal{C}}
\newcommand{\s}{S}
\newcommand{\bbbone}{\one}
\renewcommand{\r}{R}
\newcommand{\F}{{\mathfrak F}}
\begin{document}

\author{M. K\"onenberg}
\address{Department of Mathematics and Statistics, Memorial University, St. John's, NL, Canada}
\email{mkonenberg@mun.ca}

\author{M. Merkli}
\email{merkli@mun.ca}

\author{H. Song}
\email{hs1858@mun.ca}

\title {Ergodicity of the Spin-Boson Model for arbitrary coupling strength}

\begin{abstract}
We prove that the spin-boson system is ergodic, for arbitrary strengths of the coupling between the spin and the boson bath, provided the spin tunneling matrix element is small enough.
\end{abstract}

\maketitle

\section{Introduction and main result}

The Hilbert space of pure states of the spin-boson system is ${\mathbb C}^2\otimes{\mathcal F}$, where
\begin{equation}
{\mathcal F} = \bigoplus_{n\geq 0} L^2_{\rm sym}({\mathbb R}^{3n},d^{3n}k)
\label{Fockspace}
\end{equation}
is the symmetric Fock space over the one-particle (momentum representation) space $L^2({\mathbb R}^3,d^3k)$. The spin-boson Hamiltonian is the self-adjoint operator (see \cite{Leggett}, equation (1.4))
\begin{equation}
H = -\textstyle\frac{1}{2}\Delta\sigma_x +\frac{1}{2}\varepsilon\sigma_z +H_R +\textstyle\frac12 q_0 \sigma_z\otimes\phi(h),
\label{1.3}
\end{equation}
where $\sigma_x$ and $\sigma_z$ are Pauli matrices,
\begin{equation}
\sigma_x = \left(
\begin{array}{cc}
0 & 1\\
1 & 0
\end{array}
\right), \qquad
\sigma_z = \left(
\begin{array}{cc}
1 & 0\\
0 & -1
\end{array}
\right),
\end{equation}
$\Delta\in{\mathbb R}$ and $\varepsilon\in{\mathbb R}$ are the `tunneling matrix element' and the `detuning parameter', respectively. We are using units in which $\hbar$ takes the value one. The free field Hamiltonian is given by
\begin{equation}
H_R = \int_{{\mathbb R}^3} |k| a^*(k) a(k) d^3k,
\end{equation}
where the creation and annihilation operators satisfy the canonical commutation relations $[a(k),a^*(l)]=\delta(k-l)$ (Dirac delta distribution). $q_0\in\mathbb R$ is the coupling constant, and $\phi(h)$ is the field operator, smeared out with a test function $h\in L^2({\mathbb R}^3,d^3k)$,
\begin{equation}
\phi(h) = \frac{1}{\sqrt{2}}\left( a^*(h) +a(h)\right) = \frac{1}{\sqrt{2}}\int_{{\mathbb R}^3} \left( h(k) a^*(k) +\overline{h}(k) a(k)\right)d^3k.
\end{equation}

 In \cite{Leggett}, Leggett et al. consider (among many other things) the average of $\sigma_z$  at time $t\ge 0$, when the spin starts (at $t=0$) in the state `up' and the environment starts in its thermal equilibrium. They call this quantity $P(t)$. For arbitrary $q_0$ fixed, they perform formal time-dependent perturbation theory in $\Delta$ (small) and establish the formula ((3.37) in \cite{Leggett})
\begin{equation}
P(t) = P(\infty)+[1-P(\infty)]\exp -t/\tau,
\label{legfor}
\end{equation}
where $P(\infty)=-\tanh(\beta\varepsilon/2)$ is the equilibrium value and 
\begin{equation}
\tau^{-1} = \Delta^2\int_0^\infty d t\cos(\varepsilon t) \cos\left[\frac{q_0^2}{\pi}\, Q_1(t)\right]e^{-\frac{q_0^2}{\pi}\, Q_2(t)}.
\label{t2}
\end{equation}
Here, 
\begin{eqnarray}
Q_1(t) &=& \int_0^\infty d\omega \frac{J(\omega)}{\omega^2}\sin(\omega t),\label{q1}\\
Q_2(t) &=& \int_0^\infty  d\omega \frac{J(\omega)(1-\cos(\omega t))}{\omega^2} \coth(\beta\omega/2), \label{q2}
\end{eqnarray}
where the {\em spectral density} of the reservoir is defined by 
\begin{equation}
J(\omega) = \textstyle\frac{\pi}{2}\omega^2\int_{S^2} |h(\omega,\Sigma)|^2 d\Sigma,\qquad \omega\geq 0,
\label{1.10}
\end{equation}
the integral being taken over the angular part in ${\mathbb R}^3$. The function $h$ is the form factor in \eqref{1.3}.\footnote{The spectral density is related to the Fourier transform of the reservoir correlation function $C(t)=\omega_{R,\beta}(e^{i tH_R}\varphi(h)e^{-i tH_R}\varphi(h))$ by $J(\omega)=\sqrt{\pi/2}\tanh(\beta\omega/2)[\widehat{C}(\omega)+\widehat{C}(-\omega)]$.} Of course, it is assumed in \cite{Leggett} that the integral in \eqref{t2} does not vanish, so that $\tau<\infty$ is a finite relaxation time. Assuming this as well in the present paper, we show in Corollary \ref{maincor} that the spin-boson system has the property of return to equilibrium, for arbitrary $q_0$ and small $\Delta$. Our main result, Theorem \ref{THM1}, implies the corollary. It describes completely the spectrum of the generator of dynamics, which is purely absolutely continuous covering $\mathbb R$, except for a simple eigenvalue at the origin.

\medskip

The spin-boson system is a $W^*$-dynamical system $({\mathcal H},{\frak M},\alpha)$, where ${\frak M}$ is a von Neumann algebra of observables acting on a Hilbert space $\mathcal H$ and where $\alpha^t$ is a group of $*$automorphisms of $\frak M$. The ``positive temperature Hilbert space''  is given by
\begin{equation}
{\mathcal H} = {\mathbb C}^2\otimes {\mathbb C}^2\otimes{\mathcal F}_\beta,
\label{2.1}
\end{equation}
where ${\mathcal F}_\beta$ is the Fock space 
\begin{equation}
{\mathcal F}_\beta = \bigoplus_{n\geq 0} L^2_{\rm sym}(({\mathbb R}\times S^2)^{n},(d u\times d\Sigma)^{n}).
\label{GluedFock}
\end{equation}
It differs from the `zero-temperature' Fock space \eqref{Fockspace} in that the single-particle space at positive temperature is the `glued' space $L^2({\mathbb R}\times S^2,d u\times d\Sigma)$ \cite{JP} ($d\Sigma$ is the uniform measure on $S^2$). ${\mathcal F}_\beta$ carries a representation of the CCR algebra. The represented Weyl operators are given by $W(f_\beta) = e^{i\phi(f_\beta)}$, where $\phi(f_\beta)=\frac{1}{\sqrt{2}}(a^*(f_\beta)+a(f_\beta))$. Here, $a^*(f_\beta)$ and $a(f_\beta)$ denote creation and annihilation operators on ${\mathcal F}_\beta$, smoothed out with the function
\begin{equation}
f_\beta(u,\Sigma) = \sqrt{\frac{u}{1-e^{-\beta u}}}\ |u|^{1/2} \left\{
\begin{array}{ll}
f(u,\Sigma), & u\geq 0\\
-\overline{f}(-u,\Sigma), & u<0
\end{array}
\right.
\label{2.3}
\end{equation}
belonging to $L^2({\mathbb R}\times S^2,d u\times d\Sigma)$. It is easy to see that the CCR are satisfied, namely,
\begin{equation}
W(f_\beta)W(g_\beta) = e^{-\frac{i}{2}{\rm Im}\SPn{f}{g}} W(f_\beta+g_\beta).
\label{ccr}
\end{equation}
The vacuum vector $\Omega$ represents the infinite-volume equilibrium state of the free Bose field, determined by the formula 
\begin{equation}
\label{thav}
\SPn{\Omega}{W(f_\beta)\Omega} = \exp\left\{ \textstyle-\frac14 \SPn{f}{\coth(\beta|k|/2)f}\right\},
\end{equation}
see also \cite{AW}. The CCR algebra is represented on \eqref{GluedFock} as $W(f)\mapsto W(f_\beta)$, for $f\in L^2({\mathbb R}^3)$ such that $ \SPn{f}{\coth(\beta|k|/2)f}<\infty$. We denote the von Neumann algebra of the represented Weyl operators by ${\mathcal W}_\beta$.

The doubled spin Hilbert space in \eqref{2.1} allows to represent any (pure or mixed) state of the two-level system by a vector, again by the GNS construction. This construction is as follows. Let $\rho$ be a density matrix on ${\mathbb C}^2$. When diagonalized it takes the form $\rho=\sum_i p_i |\varphi_i\rangle\langle\varphi_i|$, to which we associate the vector $\Psi_\rho = \sum_i \sqrt{p_i}\varphi_i\otimes\overline\varphi_i\in {\mathbb C}^2\otimes {\mathbb C}^2$ (complex conjugation in any fixed basis -- we will choose the eigenbasis of $H_\s$ given after \eqref{2.4} below). Then ${\rm Tr}(\rho A)=\SPn{\Psi_\rho}{(A\otimes\one_S)\Psi_\rho}$ for all $A\in{\mathcal B}({\mathbb C}^2)$ and where $\one_S$ is the identity in ${\mathbb C}^2$. This is the GNS representation of the state given by $\rho$ \cite{BR,MSB}. The von Neumann algebra of observables is 
\begin{equation}
\label{vna}
{\frak M}={\mathcal B}({\mathbb C}^2)\otimes\one_S\otimes{\mathcal W}_\beta \subset {\mathcal B}({\mathcal H}).
\end{equation}

The dynamics of the spin-boson system is given by
\begin{equation}
\label{dyn}
\alpha^t(A) =e^{i tL}Ae^{- itL}, \qquad A\in\frak M.
\end{equation}
It is generated by the self-adjoint Liouville operator acting on $\mathcal H$,
\begin{eqnarray}
L &=& L_0+\textstyle\frac{1}{2}q_0 V -\textstyle\frac{1}{2} q_0JVJ \label{2.4}\\
L_0 &=& L_\s+L_R, \label{2.4'}
\end{eqnarray}
where $L_\s=H_\s\otimes\bbbone_\s-\bbbone_\s\otimes H_\s$ with $H_\s=-\frac{1}{2}\Delta\sigma_x+\frac12\varepsilon\sigma_z$ the free two-level part and $L_\r=d\Gamma(u)$ is the second quantization of multiplication by the radial variable $u$, i.e. the free Bose part. The interaction operator in \eqref{2.4} is
\begin{equation}
V = \sigma_z\otimes\bbbone_\s\otimes\phi(h_\beta),
\end{equation}
where $h_\beta$ is the image of the form factor $h$ of \eqref{1.3} under the mapping \eqref{2.3}. The operator $J$ in \eqref{2.4} is the modular conjugation, which acts as
\begin{equation}
J( A\otimes\bbbone_\s\otimes W(f_\beta(u,\Sigma)) )J = \bbbone_\s\otimes\overline A\otimes W(\overline f_\beta(-u,\Sigma)),
\label{2.6}
\end{equation}
where $\overline A$ is the matrix obtained from $A$ by taking entrywise complex conjugation (matrices are represented in the eigenbasis of $H_\s$). Note that by \eqref{2.3}, we have $\overline f_\beta(-u,\Sigma) = -e^{-\beta u/2}f_\beta (u,\Sigma)$. By the Tomita-Takesaki theorem \cite{BR}, conjugation by $J$ maps the von Neumann algebra of observables \eqref{vna} into its commutant. In particular, $V$ and $JVJ$ commute. For more detail about this well-known setup we refer to \cite{JP,BFSrte,MSB} and references therein.

The vector representing the uncoupled ($q_0=0$) KMS state is
\begin{equation}
\Omega_{0,\rm KMS} =  \Omega_{S,\beta}\otimes \Omega,
\label{2.13}
\end{equation}
where $\Omega_{S,\beta}$ is the vector representative of the Gibbs density matrix $\rho_{S,\beta}\propto e^{-\beta H_S}$. For $\Delta=0$, we have
\begin{equation}
\Omega_{S,\beta,\Delta=0}=\frac{ e^{-\beta\varepsilon/4}\varphi_{++}+ e^{\beta\varepsilon/4}\varphi_{--}}{\sqrt{e^{-\beta\varepsilon/2}+e^{\beta\varepsilon/2}}}.
\label{2.13'}
\end{equation}
According to Kato's perturbation theory, the $(\alpha^t,\beta)$-KMS state on $\frak M$ is 
\begin{equation}
\label{equilstate}
\Omega_{\rm KMS} = \frac{e^{-\beta (L_0+\frac12 q_0V)/2}\Omega_{0,\rm KMS}}{\|e^{-\beta (L_0+\frac12 q_0V)/2}\Omega_{0,\rm KMS} \|}.
\end{equation}
One shows that $\Omega_{0,\rm KMS}$ is in the domain of $e^{-\beta (L_0+\frac12 q_0V)/2}$ for any $\d, q_0\in\mathbb R$ (see e.g. \cite{DJP,BFSrte,BR}).

Our analysis requires a regularity assumption on the form factor $h$. Let $\alpha\ge 0$. We say $h$ satisfies the {\bf Condition (${\rm A}_\alpha$)} if
\begin{equation}
\label{ffregularity}
(1+ |i\partial_u|^\alpha) (i h/u)_\beta\in L^2(\mathbb R\times S^2,du\times d\Sigma),
\end{equation}
where $(h/u)_\beta$ is obtained from $h/u$ via \eqref{2.3}.

\begin{thm}
\label{THM1}   The spectrum of $L$ is all of $\mathbb R$, for arbitrary $q_0,\Delta\in\RR$. For any $q_0\in\mathbb R$, $q_0\neq 0$, there is a constant $\d_0$ such that if $0<|\d|\le \d_0$, then we have the following.

{\rm \bf (a)} If {\rm (A$_\alpha$)}, \eqref{ffregularity}, holds for some $\alpha >3/2$, then $L$ has no eigenvalues except for a simple one at the origin, and $L\Omega_{\rm KMS}=0$.

{\rm  \bf (b)} If {\rm (A$_\alpha$)}, \eqref{ffregularity}, holds for some $\alpha> 2$, then the absolutely continuous spectrum of $L$ is all of $\mathbb R$ and the singular continuous spectrum of $L$ is empty.
\end{thm}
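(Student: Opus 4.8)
The plan is to establish this result with the standard positive-temperature spectral-deformation (complex translation) technique, combined with a Mourre/Feshbach analysis, adapted to the spin-boson setting at arbitrary coupling $q_0$ but small tunneling $\Delta$. The essential idea is that for $\Delta=0$ the Liouvillian $L$ is explicitly diagonalizable: since $H_S=\frac12\varepsilon\sigma_z$ is diagonal, the interaction $V=\sigma_z\otimes\one_S\otimes\phi(h_\beta)$ and its commutant partner $JVJ$ can be removed by a (spin-dependent) Weyl translation $T=e^{-\frac{q_0}{2}(\sigma_z\otimes\one_S)\otimes(a^*(ih_\beta/u)-a(ih_\beta/u))}$-type dressing transformation (a polaron/Lang-Firsov transformation), after which $L|_{\Delta=0}$ becomes unitarily equivalent to $L_S\otimes\one + \one\otimes L_R$ plus a constant, whose spectrum is $\{0\}$ (a simple eigenvalue carried by the dressed KMS vector) together with the absolutely continuous part $\mathbb{R}$ coming from $L_R=d\Gamma(u)$. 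Condition (A$_\alpha$) guarantees that $ih_\beta/u\in L^2$ (so the dressing transformation is well-defined) and provides the extra $u$-derivative regularity needed to run the complex deformation.

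The key steps, in order. First I would perform the polaron transformation $\widetilde L=T^{-1}LT$ and check that the $\Delta=0$ part decouples as described; the tunneling term $-\frac12\Delta\sigma_x$ becomes, after dressing, $-\frac12\Delta(\sigma_x\otimes\one_S)W(\cdots)$ type off-diagonal operators involving Weyl operators with the translated form factor, so $\widetilde L = \widetilde L_0 + \Delta\, \widetilde W$ with $\widetilde L_0$ explicitly known. Second, I would apply the complex dilation/translation in the radial variable $u\mapsto u+i\theta$ (as in Jak\v{s}i\'c--Pillet / Bach--Fr\"ohlich--Sigal), using Condition (A$_\alpha$) to control the analyticity of $\widetilde W(\theta)$ and of the deformed KMS vector; the deformed free operator $\widetilde L_0(\theta)$ has spectrum consisting of the discrete eigenvalues $\{0, \pm\varepsilon\}$ of $L_S$ sitting on horizontal lines $\mathbb{R}$ and $\mathbb{R}-i\theta\,\mathrm{Im}$-shifted continua attached below each eigenvalue. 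Third, I would run a Feshbach--Schur reduction onto the three-dimensional spin subspace at each of the unperturbed eigenvalues $0,\pm\varepsilon$: for the eigenvalue $0$ one checks that the effective $3\times3$ (really after symmetry $2\times2$, or $1$-dimensional) operator has an eigenvalue exactly at $0$ (preserved because $L\Omega_{\rm KMS}=0$ by construction — this is forced, not computed) and that the remaining eigenvalue has strictly negative imaginary part at order $\Delta^2$; the coefficient is precisely the Fermi-golden-rule quantity, which up to normalization is $\tau^{-1}$ of \eqref{t2} and is assumed nonzero. For the eigenvalues $\pm\varepsilon$ one likewise shows the deformed Feshbach operator has no real eigenvalue for small $\Delta\neq0$. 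This yields (a): no eigenvalues of $L$ except the simple one at $0$. Fourth, for (b), with the stronger $\alpha>2$ one upgrades to a limiting absorption principle: the deformed resolvent $(\widetilde L(\theta)-z)^{-1}$ extends continuously to the real axis away from a complex neighborhood of $\{0,\pm\varepsilon\}$ of size $O(\Delta^2)$, and near those points the Feshbach analysis again gives boundedness of the deformed resolvent up to the real axis; standard arguments then give absence of singular continuous spectrum and $\sigma_{\rm ac}(L)=\mathbb{R}$ (the last equality also following from the fact that $\sigma(L_R)=\mathbb{R}$ is stable and the perturbation cannot destroy the a.c. part, combined with $\sigma_{\rm sc}=\emptyset$ and the eigenvalue analysis). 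That the full spectrum is $\mathbb{R}$ for \emph{all} $q_0,\Delta$ is separate and easy: $L$ contains $d\Gamma(u)$ and commutes (in the appropriate sense) with enough translations that $\mathbb{R}\subseteq\sigma(L)$, while $L$ restricted away from the vacuum sector already fills $\mathbb{R}$.

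The main obstacle is the Feshbach step at arbitrary coupling strength $q_0$: the dressing transformation makes $\widetilde W$ a sum of Weyl operators with form factors of the schematic shape $h_\beta\,\exp(\pm q_0\cdot(\text{field operator}))$, which are \emph{not} of the simple ``linearly coupled'' form, so the second-order effective operator and the needed analytic-vector estimates must be derived by hand and shown to behave well uniformly in bounded $q_0$ (and the smallness of $\Delta$ must be allowed to depend on $q_0$). One must verify that, despite the nonlinearity in $q_0$, (i) the deformed interaction remains relatively bounded with small bound and extends analytically in $\theta$ under (A$_\alpha$), and (ii) the Fermi-golden-rule imaginary part computed from these dressed form factors reproduces exactly the integral \eqref{t2} — this identification, via the reservoir correlation function and the formula in the footnote relating $J(\omega)$ to $\widehat C$, is what ties the abstract nonvanishing hypothesis to the explicit relaxation time and hence to Corollary \ref{maincor}. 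Everything else — the dilation analyticity bookkeeping, the Feshbach formulas, the limiting absorption principle — is by now routine in this literature, but the combination ``arbitrary $q_0$, perturbative $\Delta$'' is precisely what forces the polaron transformation to be done first so that $\Delta$ is genuinely the only small parameter.
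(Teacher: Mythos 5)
Your opening move --- the polaron (Lang--Firsov) dressing $U$ that decouples the $\Delta=0$ part and turns the interaction into Weyl operators --- is exactly what the paper does (Section 2). But the central analytic step you then propose, complex translation $u\mapsto u+i\theta$ followed by a dilation-analytic Feshbach analysis, is precisely the step the paper explains \emph{cannot} be carried out here, and the obstruction is structural, not technical. After the dressing, the perturbation is built from Weyl operators $W(f)=e^{i\phi(f)}$. Under spectral translation this becomes $W_\theta(f)=e^{\frac{i}{\sqrt 2}(a^*(f_\theta)+a(f_{\bar\theta}))}$, and for $\theta\notin\mathbb R$ the generator $a^*(f_\theta)+a(f_{\bar\theta})$ is no longer self-adjoint; the resulting exponential is unbounded and is not relatively bounded by any power of $N$. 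Consequently there is no known way to establish analyticity of $(\theta,z)\mapsto e^{i\theta A}(\mathcal L-z)^{-1}e^{-i\theta A}$, which is the engine of the entire deformation/Feshbach scheme you describe. Your sentence ``one must verify that the deformed interaction remains relatively bounded with small bound and extends analytically in $\theta$ under (A$_\alpha$)'' is not a verification that can succeed under any $\alpha$: it is exactly where the approach fails. This is the main difficulty the paper has to circumvent, and (A$_\alpha$) is used for something weaker than analyticity.

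What the paper does instead is avoid analyticity altogether. For the eigenvalue part it uses a positive-commutator (Mourre-type) argument with a \emph{regularized} conjugate operator $A_\nu=d\Gamma(w_\nu(-i\partial_u))$, controlling only the first two \emph{real} commutators $[A_\nu,\mathcal L]$ and $[A_\nu,[A_\nu,\mathcal L]]$; these are respectively $N^{1/2}$- and $N$-bounded under (A$_\alpha$), $\alpha>3/2$. A virial identity (which has to be justified separately via a regularity-of-eigenvectors argument, since eigenvectors are a priori not in $\dom(A)$) together with the Fermi-golden-rule nonvanishing $\tau^{-1}\neq 0$ then yields a quantitative contradiction ruling out all eigenvalues except the simple zero. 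For absolute continuity it proves a limiting absorption principle via H\"older continuity in $\eta$ of the resolvent of the dissipative regularization $\bar{\mathcal L}(\eta)=\bar{\mathcal L}_0-i\eta\bar N+\Delta\bar I(\eta)$, combined with a Feshbach reduction onto the vacuum sector. So you share the first idea (the unitary dressing so that $\Delta$ is the only small parameter), but the mechanism you rely on afterward is the one tool ruled out by the very structure that the dressing creates.
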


Admissible form factors satisfying (${\rm A}_\alpha$) with $\alpha>2$  are for instance 
$h(u)=u^{1/2}e^{-u^2}$, $h(u)=u^p\, e^{-u}$ or $h(u)=u^p\, e^{-u^2}$ with $p>3$. We mention that the `glueing' of the function $f$ into $f_\beta$ given in \eqref{2.3} can be done in various ways. In particular, the minus sign in the second line ($u<0$) can be changed into an arbitrary phase $e^{i\phi}$. This phase can be chosen to accommodate different form factors to satisfy (A$_\alpha$). A discussion of this has been given in \cite{FMRTE}.

\medskip

The spectral properties of $L$ given in Theorem \ref{THM1} imply readily that any initial state converges to the equilibrium state, see e.g. \cite{JP,BFSrte}.

\begin{cor}[Return to equilibrium]
\label{maincor}
Assume the conditions of Theorem \ref{THM1}, (b). For any normal state $\omega$ of \, $\frak M$ and any $A\in\frak M$, we have 
$$
\lim_{t\rightarrow\infty}\omega(\alpha^t(A)) = \SPn{\Omega_{\rm KMS}}{A\Omega_{\rm KMS}}.
$$
\end{cor}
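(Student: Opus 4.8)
The plan is to reduce the convergence statement to the spectral information about $L$ furnished by Theorem \ref{THM1}(b), using the standard machinery relating return to equilibrium of a $W^*$-dynamical system to the absence of point spectrum of its standard Liouvillean away from $0$ (see \cite{JP,BFSrte}). First I would note that it suffices to treat vector states: every normal state $\omega$ of $\frak M$ is a (possibly infinite) convex combination of vector states, and since $\|\alpha^t(A)\|=\|A\|$ the convergence for each vector state passes to the combination by dominated convergence on the index set. Moreover, because $\frak M$ acts on $\cH$ in standard form with cyclic and separating vector $\Omega_{\rm KMS}$ (the modular conjugation $J$ of \eqref{2.6} implements the standard form, and $L$ is the standard Liouvillean with $L\Omega_{\rm KMS}=0$ by Theorem \ref{THM1}(a)), every normal state has a vector representative in the natural cone, so I may fix a unit vector $\Psi\in\cH$ and prove $\SPn{\Psi}{e^{itL}Ae^{-itL}\Psi}\to\SPn{\Omega_{\rm KMS}}{A\Omega_{\rm KMS}}$.

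Next I would write $\SPn{\Psi}{e^{itL}Ae^{-itL}\Psi}=\SPn{e^{-itL}\Psi}{A\,e^{-itL}\Psi}$ and insert the spectral decomposition of $L$. By Theorem \ref{THM1}(b), under condition (A$_\alpha$) with $\alpha>2$ the spectrum of $L$ decomposes as $\{0\}\cup\sigma_{\rm ac}(L)$ with $\sigma_{\rm ac}(L)=\RR$ and $\sigma_{\rm sc}(L)=\emptyset$, the eigenvalue $0$ being simple with eigenvector $\Omega_{\rm KMS}$. Decompose $\Psi=\SPn{\Omega_{\rm KMS}}{\Psi}\Omega_{\rm KMS}+\Psi_{\rm ac}$ with $\Psi_{\rm ac}$ in the absolutely continuous subspace. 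For the cross terms and the $\Psi_{\rm ac}$--$\Psi_{\rm ac}$ term one uses that $e^{-itL}\Psi_{\rm ac}\to 0$ weakly as $t\to\infty$: indeed $\SPn{\Phi}{e^{-itL}\Psi_{\rm ac}}$ is the Fourier transform of an $L^1$ density (the Radon--Nikodym derivative of the complex spectral measure $\SPn{\Phi}{E_L(\cdot)\Psi_{\rm ac}}$, which is absolutely continuous since $\Psi_{\rm ac}$ is), so it tends to $0$ by the Riemann--Lebesgue lemma. Applying this with $\Phi=A^*\Omega_{\rm KMS}$ kills the cross terms; for the remaining term one writes $\SPn{e^{-itL}\Psi_{\rm ac}}{A\,e^{-itL}\Psi_{\rm ac}}$ and uses that $A\frak M\Omega_{\rm KMS}$-type vectors are dense together with weak convergence of $e^{-itL}\Psi_{\rm ac}$ to $0$, plus the uniform bound $\|e^{-itL}\Psi_{\rm ac}\|\le\|\Psi\|$, to conclude it also vanishes (a standard $\varepsilon/3$ argument). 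What survives is only $|\SPn{\Omega_{\rm KMS}}{\Psi}|^2\SPn{\Omega_{\rm KMS}}{A\Omega_{\rm KMS}}$, and since $\Psi$ is a unit vector representing a state while $\Omega_{\rm KMS}$ is separating, one checks $|\SPn{\Omega_{\rm KMS}}{\Psi}|=1$ is not automatic — rather, the correct bookkeeping is that the $0$-eigenprojection applied on both sides of $A$ contributes $\SPn{\Omega_{\rm KMS}}{A\Omega_{\rm KMS}}$ because $P_0AP_0=\SPn{\Omega_{\rm KMS}}{A\Omega_{\rm KMS}}P_0$ would require $A$ to commute with $P_0$, which it need not; instead one argues directly that $\SPn{P_0\Psi}{A\,e^{-itL}\Psi}\to\SPn{P_0\Psi}{A\,P_0\Psi}$ and, crucially, that $\SPn{P_0\Psi}{A\,P_0\Psi}=\SPn{\Omega_{\rm KMS}}{A\Omega_{\rm KMS}}$ using $P_0=|\Omega_{\rm KMS}\rangle\langle\Omega_{\rm KMS}|$ and $\langle\Omega_{\rm KMS}|\Psi\rangle\langle\Psi|\Omega_{\rm KMS}\rangle=1$, the last equality being exactly the statement that the vector state of $\Psi$ and the KMS state agree on the centre; since $\frak M''=\frak M$ has trivial centre intersecting... — more cleanly: the projection onto $\ker L$ in standard form equals the rank-one projection onto $\Omega_{\rm KMS}$ precisely because $0$ is a simple eigenvalue, and for any state vector $\Psi$ one has, by positivity preservation of $e^{-itL}$ on the natural cone and uniqueness of the KMS state, $\lim_t \SPn{\Psi}{e^{itL}Ae^{-itL}\Psi}=\SPn{\Omega_{\rm KMS}}{A\Omega_{\rm KMS}}$; I would cite \cite{JP,BFSrte} for this last reduction.

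The main obstacle is the last point: turning "weak convergence $e^{-itL}\Psi_{\rm ac}\to 0$ plus simplicity of the kernel" into the clean limit $\SPn{\Omega_{\rm KMS}}{A\Omega_{\rm KMS}}$, which requires knowing that the surviving $0$-mode contribution is exactly the KMS expectation and not merely a multiple of it. The resolution is that the natural cone representation forces the limiting state to be the unique normal $(\alpha^t,\beta)$-KMS state: any weak-$*$ limit point of $\omega\circ\alpha^t$ is $\alpha^t$-invariant, normal (by the spectral gap structure it stays in the folium), hence equals $\SPn{\Omega_{\rm KMS}}{\cdot\,\Omega_{\rm KMS}}$ by uniqueness of the KMS state at the given $\beta$ for this system. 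Since all limit points coincide and the net is bounded, the limit exists and equals the KMS expectation. I would present this compactness-plus-uniqueness argument rather than tracking the $0$-eigenprojection explicitly, as it is both shorter and the standard route in \cite{JP,BFSrte}.
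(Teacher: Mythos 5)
The paper itself gives no detailed argument for Corollary~\ref{maincor} --- it simply cites \cite{JP,BFSrte} --- so what matters is whether your proposed reconstruction would actually close the gap. It would not, and the obstacle you sense midway through is real.

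The direct spectral decomposition breaks on the quadratic term. After writing $\Psi=\SPn{\Omega_{\rm KMS}}{\Psi}\Omega_{\rm KMS}+\Psi_{\rm ac}$, the term $\SPn{e^{-itL}\Psi_{\rm ac}}{A\,e^{-itL}\Psi_{\rm ac}}$ does \emph{not} go to zero: take $A=\one$ and it equals $\|\Psi_{\rm ac}\|^2$ for every $t$. Weak convergence $e^{-itL}\Psi_{\rm ac}\to 0$ kills bilinear pairings against a fixed vector, but not a sesquilinear form $\SPn{\cdot}{A\cdot}$ with $A$ merely bounded; your ``$\varepsilon/3$'' step is exactly where this fails, since nothing forces $A$ to be compact. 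Moreover, even if it did vanish you would be left with $|\SPn{\Omega_{\rm KMS}}{\Psi}|^2\SPn{\Omega_{\rm KMS}}{A\Omega_{\rm KMS}}$, and again testing with $A=\one$ shows this cannot equal $\SPn{\Omega_{\rm KMS}}{A\Omega_{\rm KMS}}$ unless $\Psi$ is a phase multiple of $\Omega_{\rm KMS}$. Your fallback --- compactness of the state space plus uniqueness of the KMS state --- also does not close: weak-$*$ limits of normal states need not be normal, the invoked ``spectral gap structure'' does not exist here (the spectrum of $L$ is all of $\RR$), and an $\alpha$-invariant normal state is not automatically KMS, so uniqueness of the $(\alpha,\beta)$-KMS state does not by itself identify the limit.

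The missing ingredient is the standard commutant trick, which is what \cite{JP,BFSrte} actually use and which turns the offending quadratic expression into a linear one. By cyclicity of $\Omega_{\rm KMS}$ for $\frak M'$ (it is separating for $\frak M$), approximate $\Psi$ in norm by $B'\Omega_{\rm KMS}$ with $B'\in\frak M'$. Then, since $B'$ commutes with $\alpha^t(A)\in\frak M$ and $L\Omega_{\rm KMS}=0$,
\begin{align*}
\SPn{B'\Omega_{\rm KMS}}{\alpha^t(A)B'\Omega_{\rm KMS}}
&= \SPn{\Omega_{\rm KMS}}{B'^*B'\,e^{itL}A\,e^{-itL}\Omega_{\rm KMS}}
= \SPn{B'^*B'\Omega_{\rm KMS}}{\,e^{itL}A\Omega_{\rm KMS}}.
\end{align*}
Now decompose $A\Omega_{\rm KMS}=\SPn{\Omega_{\rm KMS}}{A\Omega_{\rm KMS}}\Omega_{\rm KMS}+\phi$ with $\phi$ in the absolutely continuous subspace of $L$ (Theorem~\ref{THM1}), apply Riemann--Lebesgue to $\SPn{B'^*B'\Omega_{\rm KMS}}{e^{itL}\phi}\to 0$, and use $\SPn{\Omega_{\rm KMS}}{B'^*B'\Omega_{\rm KMS}}=\|B'\Omega_{\rm KMS}\|^2\to\|\Psi\|^2=1$ as the approximation improves. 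Since $\|\alpha^t(A)\|=\|A\|$, the error from replacing $\Psi$ by $B'\Omega_{\rm KMS}$ is uniform in $t$. This yields the limit $\SPn{\Omega_{\rm KMS}}{A\Omega_{\rm KMS}}$ without ever tracking $|\SPn{\Omega_{\rm KMS}}{\Psi}|$, and the reduction from general normal states to vector states proceeds as you indicate. That commutant step is the idea your sketch is missing.
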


{\em Remarks.\ } 1. Here, a state $\omega$ of $\frak M$ is called normal if it is represented by a vector $\psi\in\mathcal H$, $\omega(A) = \SPn{\psi}{A\psi}$ (see \cite{BR} for more detail).

2. The corollary shows that $\lim_{t\rightarrow\infty}P(t)=P(\infty)+O(\Delta)$, in accordance with Leggett et al.'s formula \eqref{legfor} (they only exhibit the lowest order term in $\d$). 


\bigskip

{\bf Outline of the strategy.\ } The Liouvillean $L$ \eqref{2.4} is unitarily equivalent to $\L$ \eqref{2.9}. We describe this transformation, inspired by \cite{Leggett}, in Section \ref{utsect}. The advantage of working with $\L$ is that the coupling constant $q_0$ appears in $\L$ in a uniformly bounded way as opposed to a linear function as in $L$ (see \eqref{2.9''}-\eqref{2.7}). This enables us to obtain results for all $q_0\in\RR$.

We analyze the eigenvalues of $\L$ in Section \ref{evsect}, using the conjugate operator method. We take for the conjugate operator $A_\nu$ a regularized version of the translation generator $A=d\Gamma(-i\partial_u)$. It is important to note that the ``spectral deformation'' technique cannot be applied here. This is so since the interaction is essentially given by (a spin operator times) a Weyl operator $W(f)=e^{i\phi(f)}$. When applying a spectral translation with parameter $\theta\in\CC$ to the interaction, the Weyl operator transforms into $W_\theta(f)=e^{i\theta A}W(f)e^{-i\theta A}= e^{\frac{i}{\sqrt 2}(a^*(f_\theta) +a(f_{\bar\theta}))}$. The operator $a^*(f_\theta) +a(f_{\bar\theta})$ is not self-adjoint for $\theta\not\in\RR$ and hence the interaction becomes huge and is not relatively bounded with respect to the number operator  $N$. It is not known how to
show analyticity of $(\theta,z)\mapsto e^{i\theta A}(\L-z)^{-1}e^{-i\theta A}\in \mathcal{B}(\mathcal{H})$ in this situation. The idea is then to assume, instead of analyticity in $\theta$, that only the first few real derivatives $\partial^\alpha_t|_{t=0} W_t(f)$ exist (we manage with $\alpha=1,\,2$). The $\alpha$-th derivative is the $\alpha$-fold commutator of $W$ with $A$, which is relatively bounded w.r.t. $N^{\alpha/2}$, becoming more singular with increasing $\alpha$. This presents a difficulty we have to overcome in our analysis, which is not present in previous works, to our knowledge. Indeed, the typically considered interaction is linear in field operators, so it is $N^{1/2}$-bounded. In this case, commutators with $A$ of all orders are as well $N^{1/2}$-bounded (see e.g. \cite{HuSp,BS,Merkli2000,DJ,FaMoSk}).

Using a positive commutator argument, we show in Theorem \ref{eigenvaluethm} that $\L$ has no eigenvalues except for a simple one at zero, with corresponding eigenvector the KMS state $\psi_{\rm KMS}$. Two important ingredients of the proof are: a regularity result on eigenvectors of $\L$ with the ensuing virial identity (Theorem \ref{regevectthm}) and a usually called a Fermi Golden Rule Condition on the effectiveness of the coupling. The latter is expressed here by the fact that Leggett et al.'s ``relaxation time'' $\tau$  is finite (which is also assumed in \cite{Leggett}). Regularity of eigenvectors based on positive commutator estimates has been shown before for Pauli-Fierz type models, see e.g. \cite{FM}. Our approach to showing instability of eigenvalues under perturbation via a positive commutator argument is inspired by \cite{BFSS,Merkli2000}.

We show in Section \ref{acsect} that the continuous spectrum of $\L$ is purely absolutely continuous. To do so, we control the boundary values  of the resolvent $(\L-z)^{-1}$, as ${\rm Im}z\rightarrow 0_+$ (see \eqref{c2}). More precisely, we show that $\SPn{\varphi}{(\L-z)^{-1}\psi}$ is bounded as ${\rm Im}z\rightarrow 0_+$, for any $\varphi,\psi$ in a dense set, in the following way. Using the Feshbach map, we relate the resolvent to a ``reduced resolvent'' $(\bar \L-z)^{-1}$ and a ``Feshbach part'' ${\frak F}(z)^{-1}$, see \eqref{c3}. The reduced resolvent acts on the reduced Hilbert space ${\rm Ran}\bar P_\Omega$, while ${\frak F}(z)^{-1}$ is an operator on ${\rm Ran}P_\Omega$ (of dimension four). The control (boundedness) of the boundary values of $(\L-z)^{-1}$ is implied by that of  $(\bar\L-z)^{-1}$ and ${\frak F}(z)^{-1}$, shown in Theorems \ref{thmc0} and \ref{thmc2}, respectively. To prove Theorem \ref{thmc0}, we analyze the reduced resolvent based on a suitable approximation $(\bar\L(\eta)-z)^{-1}$, $\eta>0$, with $\bar\L(0)=\bar \L$. Regularizations of this type have are often used in ``Mourre theory''. They have been introduced in \cite{Mourre} and have also been used in \cite{ABG,BS,DJ,FaMoSk,HuSp}. We show that $\partial_z (\bar\L(\eta)-z)^{-1}$ is H\"older continuous in $\eta>0$, weakly on a dense set of vectors and uniformly in $\Im z>0$. This implies that $(\bar\L-z)^{-1}$ has a bounded extension to $\Im z=0_+$. In order to prove Theorem \ref{thmc2}, namely boundedness of the boundary values of ${\frak F}(z)^{-1}$, we first use the proven regularity of $(\bar\L-z)^{-1}$ to derive the existence of boundary values of ${\frak F}(z)$, as $\Im z\rightarrow 0_+$. We then show the invertibility of ${\frak F}(x)$, $x\in\RR\backslash\{0\}$, by using the fact that the only eigenvalue of $\L$ is zero and is simple.

\section{Unitary transformation} 
\label{utsect}

By a suitable unitary transformation, the Hamiltonian \eqref{1.3} with $\Delta=0$ can be diagonalized explicitly, see (3.28) of \cite{Leggett}. We modify this idea for application to the Liouville operator \eqref{2.4}. The unitarily transformed Liouville operator is
\begin{eqnarray}
\qquad \L &=& U L U^* = {\mathcal L}_0 +\d I\label{2.9}\\
\L_0 &=& \L_S+\L_R =  \textstyle\frac\varepsilon2(\sigma_z\otimes\bbbone_\s-\bbbone_\s\otimes\sigma_z) +L_\r\label{2.9'}\\
I&=& -\textstyle\frac12({\mathcal V}-J{\mathcal V}J)\label{2.9''}\\
{\mathcal V} &=& \sigma_+\otimes\bbbone_\s\otimes W(2f_\beta) +\sigma_-\otimes\bbbone_\s\otimes W(-2f_\beta).
\label{2.9'''}
\end{eqnarray}
The raising and lowering operators are given by
$$
\sigma_+=\left(
\begin{array}{cc}
0 & 1\\
0 & 0
\end{array}
\right)\qquad \sigma_- = \left(
\begin{array}{cc}
0 & 0\\
1 & 0
\end{array}
\right)
$$
and 
\begin{equation}
f_\beta = (-\textstyle\frac{i}{2}q_0 h/u)_\beta,
\label{2.7}
\end{equation} 
where $h$ and $q_0$ are the form factor and coupling constant given in the interaction in \eqref{1.3}, with $f\mapsto f_\beta$ given in \eqref{2.3}. Note that ${\mathcal V}$ is self-adjoint and bounded and satisfies ${\mathcal V}^2=\bbbone$. Since $\|\mathcal V\|=1$, we have $\|I\|\le 1$. Define the unitary operator
\begin{equation}
U = \exp i\big[ \sigma_z\otimes\bbbone_\s\otimes\phi(f_\beta) - J\{\sigma_z\otimes\bbbone_\s\otimes\phi(f_\beta)\}J\big],
\label{2.8}
\end{equation} 
where the action of $J$ is given in \eqref{2.6}.  Note that $U$ depends on the coupling parameter $q_0$. For the uncoupled system $q_0=0$, we have $U=\one$. 
The KMS vector associated to $\L_0$ is
\begin{equation}
\label{psiokms}
\psi_{0,\rm KMS} = \psi_{S,\beta}\otimes \Omega =  U \Omega_{0,\rm KMS} \quad \mbox{where \ \ $\psi_{S,\beta}:=\Omega_{S,\beta,\Delta=0}$}.
\end{equation}
By Araki's perturbation theory, the KMS vector associated to $\L$ is 
\begin{equation}
\psi_{\rm KMS} = \frac{e^{-\beta (\L_0-\frac12 \Delta{\mathcal V})/2}\psi_{0,\rm KMS}}{\|e^{-\beta (\L_0-\frac12\Delta {\mathcal V})/2}\psi_{0,\rm KMS} \|} = U\Omega_{\rm KMS}.
\label{2.12}
\end{equation}

\begin{thm}
\label{thm1} 
The spectrum of $L$ is all of $\mathbb R$, for arbitrary $q_0,\Delta\in\RR$. For any $q_0\in\mathbb R$, $q_0\neq 0$, there is a constant $\d_0$ such that if $0<|\d|\le \d_0$, then we have the following.

{\rm \bf (a)} If {\rm (A$_\alpha$)}, \eqref{ffregularity}, holds for some $\alpha >3/2$, then $\L$ has no eigenvalues except for a simple one at the origin, and $\L\psi_{\rm KMS}=0$.

{\rm  \bf (b)} If {\rm (A$_\alpha$)}, \eqref{ffregularity}, holds for some $\alpha>2$, then the absolutely continuous spectrum of $\L$ is all of $\mathbb R$ and the singular continuous spectrum of $\L$ is empty.
\end{thm}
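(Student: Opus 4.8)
The plan is to deduce Theorem~\ref{thm1} from Theorem~\ref{THM1} (equivalently, the other way around) via the unitary equivalence $\L = ULU^*$ established in \eqref{2.9}, together with the spectral-analytic inputs whose proofs occupy Sections~\ref{evsect} and \ref{acsect}. Since $U$ is unitary, $L$ and $\L$ have the same spectrum, the same eigenvalues with the same multiplicities, and unitarily equivalent absolutely continuous and singular continuous parts; moreover $\L\psi_{\rm KMS}=0$ if and only if $L\Omega_{\rm KMS}=0$ because $\psi_{\rm KMS}=U\Omega_{\rm KMS}$ by \eqref{2.12}. Thus Theorem~\ref{thm1} and Theorem~\ref{THM1} are literally equivalent statements, and it suffices to prove the version for $\L$.

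First I would dispose of the statement that $\sigma(L)=\RR$ for all $q_0,\Delta\in\RR$. For this one does not need the smallness of $\Delta$: the free Liouvillean $L_0 = L_\s + L_R$ already has $\sigma(L_R) = \RR$ (the operator $d\Gamma(u)$ has purely absolutely continuous spectrum filling $\RR$ once there is at least one mode, which holds because the glued one-particle space $L^2(\RR\times S^2)$ supports multiplication by $u$ ranging over all of $\RR$), and $L_\s$ is bounded, so $\sigma(L_0)=\RR$. The interaction $\frac12 q_0 V - \frac12 q_0 JVJ$ is a symmetric perturbation, and one checks (this is standard for spin-boson, cf.\ \cite{JP,BFSrte,DJP}) that $L$ is self-adjoint on $\dom(L_0)$ (or at least essentially self-adjoint on a natural core) with $\RR \subseteq \sigma(L)$; the cleanest route is to note $\psi_{\rm KMS}\in\dom(L)$ with $L\psi_{\rm KMS}=0$ and then use that the continuous spectrum cannot shrink under the perturbation, or simply invoke the detailed spectral description coming from parts (a)–(b). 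Equivalently, working with $\L$: $\L_0$ has $\sigma(\L_0)=\RR$ and $\|\d I\|\le|\d|$, and in any case parts (a)+(b) already exhibit spectrum (eigenvalue $0$ plus a.c.\ spectrum equal to $\RR$) covering all of $\RR$, so the first sentence is subsumed once $\Delta\ne0$ is small; for $\Delta=0$ the operator is $\L_0$ (in the $\L$-picture) resp.\ unitarily equivalent to it, whose spectrum is visibly $\RR$, and for large $\Delta$ one argues by the same continuity/stability of continuous spectrum.

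Next, part~(a): assume $q_0\ne0$ and (A$_\alpha$) with $\alpha>3/2$. This is exactly the content of the eigenvalue analysis announced in the outline. I would invoke Theorem~\ref{regevectthm} (regularity of eigenvectors of $\L$ and the virial identity) and Theorem~\ref{eigenvaluethm}: the positive commutator estimate with the regularized conjugate operator $A_\nu$, combined with the Fermi Golden Rule condition (finiteness of Leggett et al.'s relaxation time $\tau$, i.e.\ non-vanishing of the integral \eqref{t2}), shows that for $0<|\Delta|\le\Delta_0$ the operator $\L$ has no nonzero eigenvalues, and that $0$ is a simple eigenvalue with eigenvector $\psi_{\rm KMS}$ (that $\L\psi_{\rm KMS}=0$ follows from the KMS property and \eqref{2.12}, or from the explicit construction via Araki perturbation theory). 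The constant $\Delta_0$ is the one produced there; note it depends on $q_0$ but, crucially because $I$ in \eqref{2.9''} is uniformly bounded ($\|I\|\le1$), the construction goes through for \emph{every} $q_0\in\RR$, which is the whole point of passing to $\L$.

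Finally, part~(b): assume in addition (A$_\alpha$) with $\alpha>2$. Here I would combine part~(a) with the limiting absorption principle proved in Section~\ref{acsect}. Concretely: using the Feshbach map one writes $(\L-z)^{-1}$ in terms of the reduced resolvent $(\bar\L-z)^{-1}$ on $\Ran\bar P_\Omega$ and the $4\times4$ Feshbach operator $\mathfrak F(z)^{-1}$ on $\Ran P_\Omega$, as in \eqref{c3}; Theorem~\ref{thmc0} gives Hölder-continuous boundary values of $(\bar\L(\eta)-z)^{-1}$ uniformly in $\Im z>0$ and hence of $(\bar\L-z)^{-1}$ as $\Im z\to0_+$, and Theorem~\ref{thmc2} gives boundedness of $\mathfrak F(z)^{-1}$ on $\RR\setminus\{0\}$ — the latter uses precisely the simplicity of the eigenvalue $0$ from part~(a) to guarantee invertibility of $\mathfrak F(x)$ for $x\ne0$. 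Putting these together, $\SPn{\varphi}{(\L-z)^{-1}\psi}$ is bounded as $\Im z\to0_+$ for $\varphi,\psi$ in a dense set, away from $0$; by a standard argument (e.g.\ the version of the limiting absorption principle in \cite{JP,BFSrte}) this forces the spectral measure of $\L$ on $\RR\setminus\{0\}$ to be purely absolutely continuous, so the singular continuous part is empty and, since the spectrum is all of $\RR$ while the only eigenvalue is at $0$, the absolutely continuous spectrum is all of $\RR$. Transporting back by $U$ yields Theorem~\ref{THM1}(b).

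I expect the genuine obstacles to lie not in this assembly — which is essentially bookkeeping once the unitary equivalence is in hand — but in the two technical pillars it rests on: first, the positive commutator / virial estimate for $\L$ when the conjugate operator $A_\nu$ fails to be analytic-dilation-compatible, so that one only has the first and second commutators of the Weyl-operator interaction with $A$, each more singular in the number operator $N$ (relatively $N^{\alpha/2}$-bounded rather than $N^{1/2}$-bounded); and second, the limiting absorption estimate on the reduced resolvent with the $\eta$-regularization, where controlling Hölder continuity in $\eta$ uniformly in $\Im z>0$ against these $N^{\alpha/2}$-bounded commutator errors is what dictates the thresholds $\alpha>3/2$ and $\alpha>2$. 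The present theorem, though, follows formally from those results plus unitarity of $U$.
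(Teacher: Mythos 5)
Your assembly for parts (a) and (b) matches the paper's structure: part (a) is exactly Theorem~\ref{eigenvaluethm} (itself resting on the regularity/virial result Theorem~\ref{regevectthm} and the Fermi Golden Rule input of Proposition~\ref{lsoprop}), and part (b) is obtained from the Feshbach expansion \eqref{c3} together with Theorems~\ref{thmc0} and~\ref{thmc2}, exactly as you describe. Your observation that $\|I\|\le 1$ is what makes the smallness threshold $\d_0$ uniform in $q_0$ is also correct and is the point of the unitary conjugation.

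The gap is in the claim $\sigma(L)=\RR$ \emph{for arbitrary} $q_0,\Delta$. Your proposed justifications do not close this: ``the continuous spectrum cannot shrink under the perturbation'' is not a valid general principle (even for self-adjoint operators, a bounded self-adjoint perturbation can strictly shrink the spectrum — e.g.\ $T\mapsto T-\tfrac12 T$), and ``continuity/stability'' of spectrum only gives upper semicontinuity, i.e.\ inclusion of $\sigma(\L)$ in a neighbourhood of $\sigma(\L_0)$, never the reverse. Invoking parts (a)--(b) only covers the regime $|\Delta|\le\Delta_0$. The paper instead produces, for every $s\in\RR$ and every $q_0,\Delta$, an explicit Weyl sequence $\psi_n\propto a^*(f_n)\psi_{\rm KMS}$ with $f_n(u,\Sigma)=\sqrt{n/8\pi}\,\one_{[s-1/n,\,s+1/n]}(u)$; since $\psi_{\rm KMS}$ exists as a zero-eigenvector of $\L$ for all coupling parameters (Araki perturbation theory, \eqref{2.12}) and $a^*(f_n)$ adds one boson of radial energy concentrated near $s$, one gets $\|(\L-s)\psi_n\|\to 0$ and hence $s\in\sigma(\L)$. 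You come close — you notice $\L\psi_{\rm KMS}=0$ holds for all parameters — but you would need to take the further step of dressing $\psi_{\rm KMS}$ with $a^*(f_n)$ to generate the whole line. Without that, the first sentence of the theorem is unproven for $|\Delta|>\Delta_0$.
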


The proof of Theorem \ref{THM1} follows immediately from this result and relation \eqref{2.9}.

\section{Proofs: Eigenvalues of $\L$}
\label{evsect}

\subsection{Conjugate operator}

We will assume throughout this section that \eqref{ffregularity} is satisfied for $\alpha>3/2$. Let $0<\nu\le 1$, $0<\epsilon<\alpha-3/2$  and set 
\begin{equation*}
w_\nu(u) = \int_0^u \frac{d s}{(\nu|s|+1)^{1+\epsilon}},\qquad u\in\mathbb R.
\end{equation*}
The derivative $w'_\nu(u)=(\nu|u|+1)^{-1-\epsilon}$ is strictly positive and converges to the constant function one as $\nu$ tends to zero. We abbreviate
$$
w_\nu=w_\nu(-i\partial_u),\quad w'_\nu=w'_\nu(-i\partial_u).
$$
The $\epsilon$ is arbitrary but fixed, determined by the regularity of the form factor, see \eqref{ffregularity}. We define the self-adjoint operators
$$
A_\nu=d\Gamma(w_\nu),\quad \C_\nu = d\Gamma(w'_\nu).
$$
The domains of both operators contain that of $N= d\Gamma({\mathbf 1})$, and the inequalities  $0<\C_\nu\le N$ and $\pm A_\nu \le N/\epsilon\nu$ hold in the sense of quadratic forms on $\dom(N)$. Moreover, $A_\nu$, $\C_\nu$ and $N$ commute on $\dom(N)$ and as a quadratic form on $\dom(N)\cap\dom(\L_R)$, we have
\begin{equation}
i[A_\nu,\L_R] = \C_\nu.
\label{m1}
\end{equation}

\begin{lem}\label{LemOperators}
1. For $g\in \dom((w_\nu')^{-1/2})$ and $\psi\in \dom(\C_\nu^{1/2})$,
\begin{align}
\label{RelBounds}
\|a(g)\psi\|^2&\le \|(w'_\nu  )^{-1/2}g\|^2\,\|\C_\nu^{1/2}\psi\|^2,\\ \nonumber
\|a^*(g)\psi\|^2&\le \|(w'_\nu)^{-1/2}g\|^2\,\|\C_\nu^{1/2}\psi\|^2+\|g\|^2\,\|\psi\|^2.
\end{align}

2. For $\psi\in\dom(\C_\nu^{1/2})$,
\begin{equation}
|\SPn{\psi}{i[ A_\nu, I] \psi}|\le c_1 \|\C_\nu^{1/2}\psi\|\,\|\psi\|+ c_2\|\psi\|^2,
\label{m6}
\end{equation}
where $c_1=4\sqrt{2} \|(1+|i\partial_u|^{3/2+\epsilon})f_\beta\|$ and  $c_2=c_1 (1+\|f_\beta\|)/\sqrt{2}$.
\end{lem}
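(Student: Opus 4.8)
The plan is to prove both estimates by writing out $A_\nu$ and the relevant operators in terms of creation/annihilation operators and using the one-particle relative bounds.

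\medskip

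\textbf{Part 1.} Since $A_\nu = d\Gamma(w_\nu)$ and $\C_\nu = d\Gamma(w'_\nu)$ with $w'_\nu > 0$, the estimate \eqref{RelBounds} is a standard $N_\tau$-type bound relative to a positive second-quantized operator. First I would recall the elementary pull-through inequality: for a nonnegative multiplication operator $T$ on the one-particle space with $T = d\Gamma(\tau)$ and $\tau > 0$, one has $\|a(g)\psi\|^2 \le \|\tau^{-1/2} g\|^2 \, \|T^{1/2}\psi\|^2$ and $\|a^*(g)\psi\|^2 \le \|\tau^{-1/2}g\|^2\|T^{1/2}\psi\|^2 + \|g\|^2\|\psi\|^2$. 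Applying this with $\tau = w'_\nu(-i\partial_u)$ (a strictly positive operator, so $\dom((w'_\nu)^{-1/2})$ makes sense) gives exactly \eqref{RelBounds}. The proof of the pull-through bound itself is the usual computation on each $n$-particle sector using Cauchy--Schwarz against $\sqrt{w'_\nu}$; I would state it for completeness but not belabor it.

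\medskip

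\textbf{Part 2.} The key point is to compute $i[A_\nu, I]$. Recall $I = -\frac12(\mathcal V - J\mathcal V J)$ with $\mathcal V = \sigma_+\otimes\bbbone_\s\otimes W(2f_\beta) + \sigma_-\otimes\bbbone_\s\otimes W(-2f_\beta)$. Since $A_\nu$ acts only on the field factor (and commutes with the $J$-conjugated copy appropriately -- $J A_\nu J$ acts on the other sector), it suffices to compute $i[A_\nu, W(\pm 2 f_\beta)]$ and handle the $J\mathcal V J$ term identically. Using the Weyl relation and $A_\nu = d\Gamma(w_\nu)$, one has formally
\begin{equation*}
i[A_\nu, W(g)] = -\phi(i w_\nu g)\, W(g) \quad\text{or}\quad W(g)\,(-\phi(i w_\nu g)),
\end{equation*}
up to a bounded (scalar) correction term coming from the commutator $[\phi(iw_\nu g), \phi(g)]$; more precisely $e^{i\phi(g)}$ conjugates $A_\nu$ to $A_\nu + \phi(-w_\nu g) + \text{const}$. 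Thus $i[A_\nu, W(2f_\beta)] = W(2f_\beta)\big(\phi(-2 w_\nu f_\beta) + \text{bdd}\big)$ (with the analogous expression for $W(-2f_\beta)$), where crucially $w_\nu f_\beta = w_\nu(-i\partial_u) f_\beta$. The function $w_\nu$ grows at most like $|u|^{1/2}$ for large $|u|$ (since $\epsilon < \alpha - 3/2$, and up to the constant $1/\epsilon\nu$ one has $|w_\nu(u)| \lesssim \langle u\rangle^{1/2}$ uniformly... actually one should keep $\nu$-dependence: $|w_\nu(u)| \le |u|$ and also $|w_\nu(u)|\le $ a $\nu$-dependent constant, but for the bound we want $\nu$-independent constants, so I would use $|w_\nu(u)| \le \frac{1}{\epsilon\nu}$... ) — here I must be careful: the constants $c_1, c_2$ in the statement are $\nu$-independent, so the bound must come through the $\C_\nu^{1/2}$ factor, not through $\|w_\nu f_\beta\|$. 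The right move is: write $\phi(-2w_\nu f_\beta) = \phi(-2 (w'_\nu)^{1/2} \cdot (w'_\nu)^{-1/2} w_\nu f_\beta)$ and apply Part 1 with $g = -2 w_\nu f_\beta$, giving $\|a^{(*)}(2w_\nu f_\beta)\psi\| \le 2\|(w'_\nu)^{-1/2} w_\nu f_\beta\|\,\|\C_\nu^{1/2}\psi\| + (\text{l.o.t.})$. Now $(w'_\nu(u))^{-1/2} w_\nu(u) = (\nu|u|+1)^{(1+\epsilon)/2} \int_0^u (\nu|s|+1)^{-1-\epsilon}ds$, and an elementary estimate shows $|(w'_\nu(u))^{-1/2} w_\nu(u)| \le C \langle u\rangle^{1/2}\cdot\langle u \rangle^{\epsilon}$ uniformly in $0<\nu\le 1$ — more precisely one checks $\le \frac{1}{\sqrt{1+\epsilon}}\langle u\rangle^{1/2+\epsilon}$ or so. Hence $\|(w'_\nu)^{-1/2} w_\nu f_\beta\| \le C\|\langle i\partial_u\rangle^{1/2+\epsilon} f_\beta\| \le C'\|(1+|i\partial_u|^{3/2+\epsilon})f_\beta\|$, which is finite by Condition (A$_\alpha$) with $\alpha > 3/2$ (note $f_\beta$ here is $(-\frac{i}{2}q_0 h/u)_\beta$, so this is controlled by \eqref{ffregularity}). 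Collecting the $\mathcal V$ and $J\mathcal V J$ contributions and the bounded correction terms, and bounding $\|a^*\|$ terms via the second line of \eqref{RelBounds}, produces \eqref{m6} with the stated $c_1$ and $c_2$.

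\medskip

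The main obstacle is the bookkeeping around the $\nu$-uniform constants: one must route all unbounded-operator factors through $\C_\nu^{1/2}$ rather than through norms like $\|w_\nu f_\beta\|$ (which blow up as $\nu\to 0$), and verify the pointwise bound $|(w'_\nu)^{-1/2}w_\nu| \lesssim \langle u\rangle^{1/2+\epsilon}$ uniformly in $\nu\in(0,1]$. A secondary point requiring care is making the formal commutator $i[A_\nu, W(g)]$ rigorous as a quadratic-form identity on $\dom(\C_\nu^{1/2})$ (equivalently on $\dom(N^{1/2})$); this is done by differentiating $e^{it A_\nu} W(g) e^{-it A_\nu}$ at $t=0$, using that $A_\nu$ is self-adjoint and $W(g)$ maps $\dom(N^{1/2})$ to itself, together with the Weyl relation to extract the scalar correction.
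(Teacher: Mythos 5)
Your proposal follows essentially the same route as the paper: compute $i[A_\nu, W(g)]$ by differentiating the Weyl flow $e^{itd\Gamma(D)}W(g)e^{-itd\Gamma(D)}=W(e^{itD}g)$ (the paper's formula $[d\Gamma(D),W(f)]=W(f)(\phi(iDf)+\frac12\SPn{f}{Df})$ is exactly your "conjugate $A_\nu$ by $W(g)$" observation seen from the other side), then reduce to the one-particle bounds of part~1 applied with $g=w_\nu f_\beta$, and control $\|(w'_\nu)^{-1/2}w_\nu f_\beta\|$ by a $\nu$-uniform norm of $f_\beta$. Part~1 itself is the standard pull-through/Cauchy--Schwarz bound, as in the paper.

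The one place the write-up goes wrong is the pointwise estimate on $(w'_\nu)^{-1/2}w_\nu$. You assert that "$w_\nu$ grows at most like $|u|^{1/2}$" and that $|(w'_\nu(u))^{-1/2}w_\nu(u)|\le C\langle u\rangle^{1/2+\epsilon}$ uniformly in $\nu\in(0,1]$. This is false: for any fixed $u$ and $\nu\to 0$ we have $w'_\nu(u)\to 1$, hence $w_\nu(u)\to u$ and $(w'_\nu(u))^{-1/2}\to 1$, so the product tends to $|u|$, which is not $O(\langle u\rangle^{1/2+\epsilon})$ when $\epsilon<1/2$ (and $\epsilon$ can be chosen arbitrarily small since the only constraint is $\epsilon<\alpha-3/2$). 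The correct uniform bound comes from $|w_\nu(u)|\le|u|$ and $(w'_\nu(u))^{-1/2}=(\nu|u|+1)^{(1+\epsilon)/2}\le(|u|+1)^{(1+\epsilon)/2}$ (using $\nu\le 1$), giving $|(w'_\nu(u))^{-1/2}w_\nu(u)|\lesssim 1+|u|^{3/2+\epsilon}$ — this is exactly the estimate the paper uses, and it is what makes the exponent $3/2+\epsilon$ appear in $c_1$. Your conclusion survives only because you subsequently round your bound up to $\|(1+|i\partial_u|^{3/2+\epsilon})f_\beta\|$ regardless; but the intermediate estimate as written is wrong, and if taken seriously it would (incorrectly) indicate that the hypothesis $\alpha>1/2$ suffices for this lemma rather than $\alpha>3/2$. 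You also assert the resulting bound "produces \eqref{m6} with the stated $c_1$ and $c_2$" without tracking the four terms in $I$ or the $\frac12|\SPn{2f_\beta}{w_\nu 2f_\beta}|$ scalar; that bookkeeping (paper: four terms each contributing $\sqrt2\|(1+|i\partial_u|^{3/2+\epsilon})f_\beta\|$, plus the scalar controlled via $\|\,|w_\nu|^{1/2}f_\beta\|^2\le\|f_\beta\|\,\|(1+|i\partial_u|)^{3/2+\epsilon}f_\beta\|$) is what yields the precise constants and should be spelled out.
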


The inequality \eqref{m6} implies that for all $\alpha>0$,
$i[A_\nu,I] \ge -\alpha c_1\C_\nu -(\textstyle\frac{c_1}{4\alpha} +c_2)$, 
as a quadratic form on $\dom(N)$. In combination with \eqref{m1} we obtain that for any $\alpha>0$,
\begin{equation}
i[A_\nu,\L] \ge (1-\alpha |\d| c_1)\C_\nu-|\d|(\textstyle\frac{c_1}{4\alpha} +c_2),
\label{m7}
\end{equation}
as a quadratic form on $\dom(N)\cap\dom(\L_R)$. 

\begin{proof}[Proof of Lemma \ref{LemOperators}]
1. The relative bounds \eqref{RelBounds} are most easily obtained by applying the Fourier transform (so that functions of $-i\partial_u$ become multiplication operators in the Fourier variable). Their derivation is standard, see e.g. \cite{BFS2}. 

2. Let $D$ be a self-adjoint operator on $L^2(\mathbb R\times S^2)$ and let $f\in\dom(D)$. As a quadratic form on $\dom(d\Gamma(D))$, we have
\begin{equation}
[d\Gamma(D),W(f)] = W(f)\big(\phi(iDf)+\textstyle\frac12\SPn{f}{Df}\big),
\label{m4}
\end{equation}
where $\phi$ is the field operator. This relation is readily obtained by taking the derivative $-i\partial_t|_{t=0}$ of $e^{itd\Gamma(D)} W(f) e^{-itd\Gamma(D)} = W(e^{itD}f)$. According to \eqref{2.9''}, \eqref{2.9'''} the interaction $I$ consists of four similar terms. We treat the part $-\frac12\sigma_+\otimes\one_S\otimes W(2f_\beta)$, the others are dealt with in the same way. Taking into account \eqref{m4} with $D=w_\nu$, we obtain for $\psi\in\dom(\C_\nu^{1/2})$
\begin{equation}
\textstyle\frac12 |\SPn{\psi}{\sigma_+\otimes\one_S\otimes [A_\nu, W(2f_\beta)]\psi}| \le  \|\psi\|\,\big(  \|\phi(iw_\nu f_\beta)\psi\| + \|\,|w_\nu|^{1/2} f_\beta\|^2\,\|\psi\|\big).
\label{m5} 
\end{equation}
Using \eqref{RelBounds} gives 
$$
\|\phi(iw_\nu f_\beta)\psi\|\le \sqrt{2} \|w_\nu(w'_\nu)^{-1/2}f_\beta\| \,\|\C_\nu^{1/2}\psi\| + \|w_\nu f_\beta\|\|\psi\|.
$$ 
Next, $|w_\nu(u)(w'_\nu(u))^{-1/2}|\le |u|^{3/2+\epsilon}$ (as $\nu\le 1$) and $|w_\nu(u)|\le |u|$. So both norms  $\|w_\nu(w'_\nu)^{-1/2}f_\beta\|$ and $\|w_\nu f_\beta\|$ are bounded above by $\|(1+|i \partial_u|^{3/2+\epsilon})f_\beta\|$. Furthermore, 
$$
\|\,|w_\nu|^{1/2}f_\beta\|^2\le \|f_\beta\|\,\|\, |i\partial_u|f_\beta\| \le \|f_\beta\|\, \|(1+|i\partial_u|)^{3/2+\epsilon}f_\beta\|.
$$
This shows \eqref{m6} and concludes the proof of Lemma \ref{LemOperators}.
\end{proof}

%
%
\subsection{Regularity of eigenvectors of $\L$.}

Let $0\le \chi\le 1$ be a smooth function which
satisfies $\chi(x)=1$ for $|x|\le 1/2$ and $
\chi(x)=0$ for $|x|\ge 1$. We set
\begin{equation*}
\chi_{\mu} = \chi((N+1)/\mu),\qquad \chi^{(n)}_{\mu}=\chi^{(n)}((N+1)/\mu), 
\end{equation*}
where $\chi^{(n)}$ denotes the $n$-th derivative of $\chi$, and where $\mu\ge 1$.

\begin{lem}
\label{Lem:CommutatorN}
1. The $k$-fold commutator ($k\ge 1$) of $N$ with $\L$, $ad_N^{(k)}(\L)=[N,[N,\ldots [N,\L]\ldots]]$, is relatively $N^{k/2}$ bounded, and 
\begin{equation}
\|ad_N^{(k)}(\L) (N+1)^{-k/2}\|\le |\d| c(k) (1+\|f_\beta\|)^{2k},
\label{m8}
\end{equation}
where $c(k)$ is independent of $f_\beta$.

2. On $\dom(\L_R)$,
\begin{align}\label{eq:CommutatorChi2}
[\chi_\mu,\L] = \mu^{-1}\chi'_\mu \, [N,\L]
  -\textstyle\frac12 \mu^{-2} \chi''_\mu\, [N,[N,\L]] + \d \mu^{-3/2} R_\mu,
\end{align}
with $\sup_{\mu\ge 1}\|R_\mu\|<\infty$.
\end{lem}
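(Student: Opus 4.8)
The plan is to prove part 2 of Lemma~\ref{Lem:CommutatorN} by a Taylor expansion of the smooth function $\chi$ applied to the (discrete) operator $N$, keeping two terms and controlling the remainder using part 1 of the same lemma.

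First I would write $\chi_\mu = \chi(x)$ evaluated at $x=(N+1)/\mu$ and use the fact that the commutator $[\chi_\mu,\L]$ can be expressed via the Helffer--Sj\"ostrand formula (or directly via the spectral representation of $N$, since $N$ has discrete spectrum $\{0,1,2,\dots\}$). The cleanest route is the integral representation
\[
[\chi_\mu,\L] = \frac{1}{2\pi i}\int_{\CC} \partial_{\bar z}\tilde\chi_\mu(z)\,(N-z)^{-1}[N,\L](N-z)^{-1}\,dz\,d\bar z,
\]
valid as a quadratic form on $\dom(\L_R)$, where $\tilde\chi_\mu$ is an almost analytic extension of $x\mapsto\chi((x+1)/\mu)$. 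Expanding $(N-z)^{-1}[N,\L](N-z)^{-1}$ by the resolvent identity, pushing the first-order term $[N,\L]$ past one resolvent produces $[N,[N,\L]]$ and a higher remainder; iterating once more isolates exactly the two explicit terms $\mu^{-1}\chi'_\mu[N,\L]$ and $-\tfrac12\mu^{-2}\chi''_\mu[N,[N,\L]]$, with the $\mu$-powers coming from the scaling $x\mapsto(x+1)/\mu$ (each derivative of $\chi_\mu$ carries a factor $\mu^{-1}$). Here I use part 1: $[N,\L]=ad_N^{(1)}(\L)$ is $N^{1/2}$-bounded and $[N,[N,\L]]=ad_N^{(2)}(\L)$ is $N$-bounded, both with norm $O(|\d|)$, so the expressions $\chi'_\mu[N,\L]$ and $\chi''_\mu[N,[N,\L]]$ are genuinely bounded operators (the cutoffs $\chi'_\mu,\chi''_\mu$ are supported on $N+1\le\mu$, which kills the $N^{1/2}$, resp.\ $N$, growth).

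The remainder term $R_\mu$ collects the third-order piece of the Taylor/resolvent expansion: schematically it involves $ad_N^{(3)}(\L)$ sandwiched between three resolvents $(N-z)^{-1}$, integrated against $\partial_{\bar z}\tilde\chi_\mu$. Here the counting of $\mu$-powers is the crux: naively one gets $\mu^{-3}$ from the three factors of $\mu^{-1}$ produced by differentiating $\chi_\mu$ three times, but $ad_N^{(3)}(\L)$ is only $N^{3/2}$-bounded (part 1, $k=3$), and $N^{3/2}\chi(\cdot/\mu)$-type localization costs a factor $\mu^{3/2}$, leaving $\mu^{-3/2}$. The $\d$ in front is inherited from \eqref{m8} since every multiple commutator of $N$ with $\L=\L_0+\d I$ equals $\d$ times the corresponding commutator with $I$ (as $N$ commutes with $\L_0$ on $\dom(\L_R)$). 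I would make this precise by the standard almost-analytic-extension bounds $|\partial_{\bar z}\tilde\chi_\mu(z)|\le C_k|\Im z|^k$ combined with $\|(N-z)^{-1}(N+1)^{1/2}\|\lesssim |\Im z|^{-1}$ on the relevant spectral subspace, choosing enough terms in the almost analytic extension so the $z$-integral converges.

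The main obstacle I expect is the bookkeeping in the remainder: one has to track simultaneously (i) which multiple commutator $ad_N^{(k)}(\L)$ appears, (ii) how many resolvent factors accompany it, and (iii) the support restriction coming from the derivatives of $\chi$, and show these combine to exactly $\mu^{-3/2}$ uniformly in $\mu\ge1$ — being off by half a power would break the argument. A secondary technical point is justifying the manipulations as quadratic-form identities on $\dom(\L_R)$ (rather than operator identities), since $\L$ is unbounded; this is handled by first working on a core, e.g.\ finite-particle vectors of bounded energy, where everything is a genuine operator identity, and then extending by the uniform bounds just obtained. Once \eqref{eq:CommutatorChi2} is established with $\sup_{\mu\ge1}\|R_\mu\|<\infty$, the lemma follows.
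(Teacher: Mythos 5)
Your proposal is correct and follows essentially the same route as the paper: express $[\chi_\mu,\L]$ via the Helffer--Sj\"ostrand formula, iterate the commutator identity $[R,B]=-\mu^{-1}R[N,B]R$ (with $R=(\tfrac{N+1}{\mu}-z)^{-1}$) twice to extract the $\mu^{-1}\chi'_\mu[N,\L]$ and $-\tfrac12\mu^{-2}\chi''_\mu[N,[N,\L]]$ terms, and bound the remaining piece $\mu^{-3}\!\int\partial_{\bar z}\widetilde\chi\,R^3\,ad_N^{(3)}(I)\,R\,dz$ by trading the $N^{3/2}$-boundedness of $ad_N^{(3)}(I)$ against $\|R^3(N+1)^{3/2}\|\lesssim\mu^{3/2}|\Im z|^{-3}$ on the compact support of $\partial_{\bar z}\widetilde\chi$. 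The only cosmetic difference is that you put the $\mu$-scaling into the almost-analytic extension $\widetilde\chi_\mu$ with unscaled resolvents $(N-z)^{-1}$, while the paper keeps $\widetilde\chi$ fixed and scales the resolvent argument, which makes the derivative formula $\chi^{(n)}_\mu=(-1)^n n!\int\partial_{\bar z}\widetilde\chi\,(\tfrac{N+1}{\mu}-z)^{-1-n}dz$ immediate.
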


\begin{proof}
1. The operators $N$ and $\L_0$ commute, only the interaction contributes to the commutator. Using repeatedly \eqref{m4} with $D=\one$, together with the form equality $[N,a^*(f)]=a^*(f)$ (and its adjoint), one readily sees that $ad_N^{(k)}(\L)$ is a sum of four terms, each of the form $S\otimes W(f_\beta)T_k$, where $S$ is one of $\sigma_\pm\otimes\one_S$ or $\one_S\otimes\sigma_\pm$, and $T_k$ is a polynomial in $a^*(f_\beta)$, $a(f_\beta)$ (of maximal joint degree $k$). The relative bound follows.

2. By means of the Helffer-Sj\"ostrand formula \cite{Davies},
\begin{align}\label{HelfferSjoestrand}
\chi_\mu^{(n)}= (-1)^n n!  \int_{\CC} 
\partial_{\bar{z}}\widetilde{\chi}(z)(\textstyle\frac{N+1}{\mu}-z)^{-1-n}dz 
\end{align}
for $n=0,1,2\ldots$ We have, strongly on $\dom(\L)=\dom(\L_R)$,
\begin{align*}
[\chi_\mu,\L]=\d \int_{\CC} 
\partial_{\bar{z}}\widetilde{\chi}(z)\big[(\textstyle\frac{N+1}{\mu}-z)^{-1},I \big] dz.
\end{align*}
Using the relations \eqref{HelfferSjoestrand} and $[A^{-1},B]=A^{-1}[B,A]A^{-1}$, we arrive at
\begin{equation}
[\chi_\mu,\L] = \d\mu^{-1} \chi'_\mu\, [N,I] +\textstyle\frac12 \d\mu^{-2} \chi_\mu'' \, [N,[N,I]] +\d \mu^{-3/2} R_\mu,
\end{equation}
where 
\begin{equation}
R_\mu = \mu^{-3/2} \int_{\CC} 
\partial_{\bar{z}}\widetilde{\chi}(z) (\textstyle\frac{N+1}{\mu}-z)^{-3} ad_N^{(3)}(I) (\textstyle\frac{N+1}{\mu}-z)^{-1} dz.
\end{equation}
Invoking the relative bound \eqref{m8} and that $|{\rm Re} z|$, $|{\rm Im} z|\le 2$ (since $z$ is in the support of the almost-analytic extension $\partial_{\bar z}\widetilde\chi(z)$), we get
$$
\|  (\textstyle\frac{N+1}{\mu}-z)^{-3} ad_N^{(3)}(I) (\textstyle\frac{N+1}{\mu}-z)^{-1}\| \le C \mu^{3/2} |{\rm Im} z|^{-4},
$$
with a constant $C$ independent of $\mu$ and of $z$. However, $|\partial_{\bar z}\widetilde\chi(z)|\le C'|{\rm Im} z|^4$ for some constant $C'$ and so $\sup_{\mu\ge 1} \|R_\mu\|<\infty$.
\end{proof}

\begin{thm}[Regularity of eigenvectors]
\label{regevectthm}
Let $\psi$ be a normalized eigenvector of $\L$. Then $\psi\in\dom(N^{1/2})$ and for every $0<\xi<1$,
\begin{equation}
\label{ParticleNumberBound}
\|N^{1/2}\psi\|^2\le \frac{\xi^{-1}\d^2 c_1^2/4+|\d| c_2}{1-\xi}.
\end{equation}
Let $A\equiv A_{\nu=0}\equiv d\Gamma(-i\partial_u)$. The commutator $i[A,\L]$ is well defined as a quadratic form on $\dom(N^{1/2})$ and we have the virial identity
\begin{equation}
\label{Thm:Virial}
\SPn{\psi}{i [A,\L] \psi} =0.
\end{equation}
\end{thm}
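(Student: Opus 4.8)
The plan is to prove the two assertions of Theorem~\ref{regevectthm} in sequence, the particle number bound \eqref{ParticleNumberBound} first, then the virial identity \eqref{Thm:Virial}, the former feeding into the latter. Fix a normalized eigenvector $\psi$, $\L\psi=\lambda\psi$. The issue is that $A$ and $N^{1/2}$ are not a priori defined on $\psi$, so everything must be done through the regularizations $\chi_\mu$ and $\C_\nu$ already set up in Lemmas~\ref{Lem:CommutatorN} and \ref{LemOperators}.

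\textbf{Step 1: the bound \eqref{ParticleNumberBound}.} I would test the eigenvalue equation against the commutator $i[A_\nu,\L]$ sandwiched by the cutoff $\chi_\mu$. Since $\chi_\mu\psi\in\dom(N)\subset\dom(\L_R)\cap\dom(\L)$, the quadratic form $\SPn{\chi_\mu\psi}{i[A_\nu,\L]\chi_\mu\psi}$ is legitimate. On one hand, combine \eqref{m7} with $\alpha=\xi/(|\d|c_1)$ to get the lower bound $(1-\xi)\|\C_\nu^{1/2}\chi_\mu\psi\|^2-(\xi^{-1}\d^2c_1^2/4+|\d|c_2)\|\chi_\mu\psi\|^2$. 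On the other hand, $i[A_\nu,\L]=i[A_\nu,\L-\lambda]$ and one pushes $\L-\lambda$ onto $\psi$: write $\SPn{\chi_\mu\psi}{i[A_\nu,\L-\lambda]\chi_\mu\psi}$ as a sum of terms each carrying a factor $(\L-\lambda)\chi_\mu\psi=[\L,\chi_\mu]\psi$, which by Lemma~\ref{Lem:CommutatorN}.2 is $O(\mu^{-1})$ in norm (the $\mu^{-1}\chi'_\mu[N,\L]$ term dominates; note $[N,\L]\psi$ needs $\psi\in\dom(N^{1/2})$ a priori, so I would first run the argument with $N$ replaced by a bounded function of $N$, or more cleanly use that $[N,\L](N+1)^{-1/2}$ is bounded and $\chi'_\mu$ localizes to $N\lesssim\mu$). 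The point is that the right-hand side tends to $0$ as $\mu\to\infty$ while $A_\nu\chi_\mu\psi$ stays controlled because $\pm A_\nu\le N/\epsilon\nu$ and $\chi_\mu$ truncates $N$. Hence $(1-\xi)\|\C_\nu^{1/2}\chi_\mu\psi\|^2\le(\xi^{-1}\d^2c_1^2/4+|\d|c_2)+o(1)$ uniformly in $\nu$; letting $\mu\to\infty$ and then $\nu\to0$ (so $w'_\nu\to1$ and $\C_\nu\to N$, using monotone convergence / Fatou on $\|\C_\nu^{1/2}\chi_\mu\psi\|^2=\SPn{\chi_\mu\psi}{\C_\nu\chi_\mu\psi}$) yields $\psi\in\dom(N^{1/2})$ with the stated bound.

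\textbf{Step 2: the virial identity \eqref{Thm:Virial}.} Now that $\psi\in\dom(N^{1/2})$, the form $i[A,\L]$ makes sense on $\psi$: by \eqref{m1} the $\L_R$ part gives $\SPn{\psi}{N\psi}$-type quantities (finite), and by \eqref{m6} with $\nu=0$ (legitimate once $\psi\in\dom(N^{1/2})=\dom(\C_0^{1/2})$) the interaction part is bounded by $c_1\|N^{1/2}\psi\|\|\psi\|+c_2\|\psi\|^2$. The standard virial scheme: for each $\nu>0$,
\begin{equation*}
\SPn{\chi_\mu\psi}{i[A_\nu,\L-\lambda]\chi_\mu\psi}=2\,\Re\SPn{i A_\nu\chi_\mu\psi}{(\L-\lambda)\chi_\mu\psi}=2\,\Re\SPn{iA_\nu\chi_\mu\psi}{[\L,\chi_\mu]\psi},
\end{equation*}
and the right side $\to0$ as $\mu\to\infty$ by Lemma~\ref{Lem:CommutatorN}.2 together with $\|A_\nu\chi_\mu\psi\|\le(\epsilon\nu)^{-1}\|N\chi_\mu\psi\|$ balanced against the $\mu^{-1}$ decay — actually one needs the improved observation that $A_\nu\chi_\mu\psi\to A_\nu\psi$ and $[\L,\chi_\mu]\psi\to0$, so that the inner product vanishes in the limit. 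Meanwhile the left side converges to $\SPn{\psi}{i[A_\nu,\L]\psi}$ by dominated convergence (the integrand is controlled by \eqref{m6}, \eqref{m1} and $\psi\in\dom(N^{1/2})$). Thus $\SPn{\psi}{i[A_\nu,\L]\psi}=0$ for every $\nu\in(0,1]$. Finally let $\nu\to0$: the $\L_R$ contribution $\SPn{\psi}{\C_\nu\psi}\to\SPn{\psi}{N\psi}$ and the interaction contribution $\SPn{\psi}{i[A_\nu,I]\psi}\to\SPn{\psi}{i[A,I]\psi}$, the latter by the explicit formula \eqref{m4} with $D=w_\nu\to(-i\partial_u)$ applied to $f_\beta$ (dominated convergence in $u$-space, using the regularity \eqref{ffregularity}), giving \eqref{Thm:Virial}.

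\textbf{Main obstacle.} The delicate point is the double limit $\mu\to\infty$ (remove the number cutoff) and $\nu\to0$ (remove the conjugate-operator regularization), in the right order and with uniform control. In particular, justifying that $\SPn{\chi_\mu\psi}{i[A_\nu,\L]\chi_\mu\psi}\to\SPn{\psi}{i[A_\nu,\L]\psi}$ requires a genuine commutator-expansion estimate: $[\chi_\mu,[A_\nu,\L]]$ must be shown to vanish appropriately, which is exactly where the more singular second commutator $[N,[N,\L]]$ and the remainder $R_\mu$ of Lemma~\ref{Lem:CommutatorN}.2 enter, and where the hypothesis $\alpha>3/2$ (i.e. $\epsilon>0$ with $3/2+\epsilon<\alpha$) is consumed through $c_1<\infty$. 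The $\nu$-dependence of the a priori bound $\pm A_\nu\le N/\epsilon\nu$ blows up as $\nu\to0$, so one must be careful to extract the $\nu$-uniform conclusion $\|\C_\nu^{1/2}\psi\|^2\le$ const before sending $\nu\to0$, rather than trying to pass $A_\nu\to A$ directly at finite $\mu$.
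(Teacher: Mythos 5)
Your overall scheme is the same as the paper's: regularize by $A_{\nu,\mu}=\chi_\mu A_\nu\chi_\mu$ (self-adjoint and bounded, hence the virial identity holds trivially), split into $t_1=\SPn{\chi_\mu\psi}{i[A_\nu,\L]\chi_\mu\psi}$ and a cross term $t_2$, bound $t_1$ from below via \eqref{m7}, show the cross term is small, take $\mu\to\infty$ at fixed $\nu$, then $\nu\downarrow0$ via monotone convergence of $\C_\nu\uparrow N$ and the strong convergence $[A_\nu,I]\to[A,I]$. That is the right skeleton, and you correctly identify that the order of limits matters and that $\|\chi'_\mu\psi\|\to0$ must be used.

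However, the treatment of the cross term has a genuine gap, and it is exactly the step where the work is. Your estimate pairs ``$[\L,\chi_\mu]\psi=O(\mu^{-1})$'' against ``$A_\nu\chi_\mu\psi$ controlled by $\mu/(\epsilon\nu)$'' and concludes that the cross term vanishes. But tracking powers of $\mu$: the leading piece of $[\chi_\mu,\L]$ is $\mu^{-1}\chi'_\mu[N,I]$, and $[N,I]$ is only $N^{1/2}$-bounded, so $\|[\chi_\mu,\L]\psi\|$ is at best $O(\mu^{-1/2})$; multiplied by $\|A_\nu\chi_\mu\psi\|\le \mu(\epsilon\nu)^{-1}\|\psi\|$ this gives $O(\mu^{1/2}/\nu)$, which diverges. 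Naive Cauchy--Schwarz does not close. The paper's proof resolves this with two further ingredients that are absent from your sketch: (i) the terms proportional to $\|\chi'_\mu\psi\|$ and $\|\chi''_\mu\psi\|$ vanish as $\mu\to\infty$ because the spectral support of $\chi'$, $\chi''$ is $\{\mu/2\le N+1\le\mu\}$; more importantly (ii) the remaining dominant contribution is of the form $\|\chi'_\mu\psi\|\,\|\C_\nu^{1/2}\chi_\mu\psi\|$, i.e.\ it contains the \emph{same} quantity $\|\C_\nu^{1/2}\chi_\mu\psi\|$ one is trying to control, and must be \emph{absorbed} into the positive $\C_\nu$ term of the lower bound via Young's inequality with a small parameter $\kappa$ (see the step around \eqref{m10}--\eqref{Eq4}). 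Without this self-absorption, one obtains an upper bound on $t_2$ that is proportional to the unknown $\|\C_\nu^{1/2}\chi_\mu\psi\|$, and the a priori bound \eqref{ParticleNumberBound} does not follow. Also the phrase ``uniformly in $\nu$'' is not warranted and not needed: the threshold $\mu_0=\mu_0(\nu,\kappa)$ depends on $\nu$, and one takes $\mu\to\infty$ at fixed $\nu$ before sending $\nu\downarrow0$.

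A smaller issue in your Step 2: writing ``$A_\nu\chi_\mu\psi\to A_\nu\psi$'' presupposes $\psi\in\dom(A_\nu)$, which requires $\psi\in\dom(N)$, not merely $\dom(N^{1/2})$, and the latter is all that the first part gives you. The paper sidesteps this by never forming $A_\nu\psi$: it only uses the already-established $\kappa$-absorption bound $|t_2|\le\kappa(a+1)$ for $\mu\ge\mu_0(\nu,\kappa)$ to deduce $\lim_{\mu\to\infty}\SPn{\chi_\mu\psi}{i[A_\nu,\L]\chi_\mu\psi}=0$, and then passes $\nu\downarrow0$ termwise on $\C_\nu$ and $[A_\nu,I]$.
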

{\em Remarks.\ } 1. In \eqref{Thm:Virial}, the commutator $i[A,\L]$ is understood as the closure of the sesquilinear form, defined on $\dom(A)\cap\dom(\L)$ by $i\SPn{A\varphi}{\L\psi} -i\SPn{\L\varphi}{A\psi}$. The self-adjoint operator associated to the closure of this form is $i[A,\L]=N+ \d i[A,I]$, where
\begin{eqnarray}
\label{m13}
i[A,I]  &=& -i\sigma_+\otimes\one_S\otimes W(2f_\beta)\left(\phi(f'_\beta) -i\SPn{f_\beta}{f'_\beta}\right)\\
&& +i\ \one_S\otimes \sigma_+\otimes J_RW(2f_\beta)\left(\phi(f'_\beta) -i\SPn{f_\beta}{f'_\beta}\right)J_R\nonumber\\
&& \mbox{$+$ adjoint.}\nonumber
\end{eqnarray}
The virial relation \eqref{Thm:Virial} needs a proof since $\psi$ is generally not in $\dom(A)$.

2. This result does not require $\d$ to be small.

\begin{proof}[Proof of Theorem \ref{regevectthm}]
Since the operator $A_{\nu,\mu}:= \chi_\mu A_\nu \chi_\mu$ is self-adjoint and bounded and $\psi\in\dom(\L)$, we have the virial identity
\begin{equation}
\label{Viral}
0=\SPn{\psi}{i [A_{\nu,\mu},\L]\psi}=t_1+t_2,
\end{equation}
with $t_1=\SPn{\chi_\mu\psi}{ \imath [A_\nu,\L] \chi_\mu \psi}$ and $t_2=2{\rm Re} \, \imath\SPn{\psi}{[\chi_\mu, \L]A_\nu \chi_\mu \psi}$. Choosing $\alpha = \xi(|\d| c_1)^{-1}$ in \eqref{m7} gives the lower bound
\begin{equation}
t_1\ge  (1-\xi) \, \SPn{\chi_\mu\psi}{ \C_\nu \chi_\mu \psi}-\textstyle\frac{\d^2 c_1^2}{4\xi} -|\d| c_2.
\label{m11}
\end{equation}
The expansion \eqref{eq:CommutatorChi2}, together with the bound $\|A_\nu(N+1)^{-1}\|\le 1/\epsilon\nu$ implies that 
\begin{align*}
|t_2|\le& \, 2 |\d|\ \mu^{-1}\ |\SPn{\psi}{\chi'_\mu [N, I] A_\nu \chi_\mu\psi}| \nonumber\\
&+ 2|\d| (\epsilon\nu)^{-1}\, c(2)\ (1+\|f_\beta\|)^4 \  \|\chi''_\mu\psi\|
+ 2|\d| (\epsilon\nu\mu^{1/2})^{-1}\|R_\mu\|.
\end{align*}
Recall that $c(k)$ is defined in Lemma \ref{Lem:CommutatorN}. Proceeding as in the proof of that lemma, point 1., one shows that for all $\varphi\in\dom(N^{1/2})$, $\|[N,I]\varphi\|\le 8\|(\phi(if_\beta)+\|f_\beta\|^2)\varphi\|$. Combining this estimate with \eqref{RelBounds} gives
\begin{align*}
& |\SPn{\psi}{\chi'_\mu [N, I] A_\nu \chi_\mu\psi}|  \le \|\chi'_\mu \psi\|\, \|[N,I]A_\nu \chi_\mu \psi\|\\
& \le 8\sqrt2 (\|f'_\beta\|+\|f_\beta\|  )\ \|\chi'_\mu\psi\| \ \|\C^{1/2}_\nu A_\nu\chi_\mu \psi\| +8\mu (\epsilon\nu)^{-1}\|\chi'_\mu\psi\|\,\|f_\beta\| (1+\|f_\beta\|).
\end{align*}
The $\C_\nu$, $\chi_\mu$ and $A_\nu$ commute and $\|\C_\nu^{1/2}\chi_\mu A_\mu\psi\|\le \mu(\epsilon\nu)^{-1}\|\C_\nu^{1/2}\chi_\mu\psi\|$. We make use of 
$$
\|\chi'_\mu\psi\|\,\|\C_\nu^{1/2}\chi_\mu\psi\| \le \widetilde\alpha\SPn{\chi_\mu\psi}{\C_\nu\chi_\mu\psi} +(4\widetilde\alpha)^{-1} \|\chi'_\mu\psi\|^2,
$$ 
with $\widetilde\alpha = \kappa\epsilon\nu [16\sqrt2 |\d|^{-1} (\|f_\beta\|+\|f'_\beta\|)]^{-1}$, for an arbitrary $\kappa>0$. This gives
\begin{align}
|t_2| & \le  \kappa\SPn{\chi_\mu\psi}{\C_\nu\chi_\mu\psi} + C|\d|(\epsilon\nu)^{-1}\|\chi''_\mu\psi\|\nonumber \\
&\ +C|\d|(\epsilon\nu)^{-1}\|\chi'_\mu\psi\|(|\d|(\epsilon\nu\kappa)^{-1}\|\chi'_\mu\psi\|+1)
+C|\d|(\epsilon\nu\mu^{1/2})^{-1},
\label{m10} 
\end{align}
where $C$ is a constant independent of $\d,\mu,\nu,\kappa$. The spectral support of the operators $\chi'_\mu$, $\chi''_\mu$ is contained in $\mu/2\leq N+1\leq \mu$. Thus we have $\lim_{\mu\rightarrow \infty}\|\chi'_\mu\psi\|=0=\lim_{\mu\rightarrow \infty}\|\chi''_\mu\psi\|$. It follows from \eqref{m10} that there exists a $\mu_0(\nu,\kappa)$ such that for $\mu\ge \mu_0$, we have 
\begin{equation}
|t_2|\le \kappa\SPn{\chi_\mu\psi}{\C_\nu\chi_\mu\psi} +\kappa.
\label{Eq4}
\end{equation}
Combining \eqref{Viral}, \eqref{m11} and \eqref{Eq4} gives
\begin{equation*}
\SPn{\chi_\mu\psi}{ \C_\nu \chi_\mu \psi}\le  a\equiv \frac{\xi^{-1}\d^2 c_1^2/4+|\d| c_2+\kappa}{1-\xi-\kappa}
\end{equation*}
whenever $\mu\geq\mu_0$. Note that $\C_\nu$ is self-adjoint and positive. Since $a$ does not depend on $\mu$, one easily shows, by taking $\mu\rightarrow\infty$, that $\psi\in\dom(\C_\nu^{1/2})$ and $\|\C^{1/2}_\nu\psi\|\le \sqrt a$. Next we take $\nu\downarrow 0$. According to the decomposition of Fock space into a direct sum of $n$-particle sectors, we have
$$
\SPn{\psi}{\C_\nu\psi} = \sum_{n\geq 1}\sum_{j=1}^n\SPn{\psi_n}{ [w'_\nu]_j\ \psi_n},
$$
where $[w'_\nu]_j$ is the operator $(\nu|i\partial_{u_j}|+1)^{-1-\epsilon}$, acting on the $j$-th radial variable, $u_j$, of $n$-particle sector $\psi_n(u_1,\Sigma_1,\ldots,u_n,\Sigma_n)$. Since  $[w'_\nu]_j\uparrow 1$ as $\nu\downarrow 0$ we invoke the monotone convergence theorem to conclude that $\lim_{\nu\downarrow0} \SPn{\psi}{\C_\nu\psi} =\SPn{\psi}{N\psi}\le a$. Upon taking $\kappa\rightarrow 0$ we obtain the bound \eqref{ParticleNumberBound}.

Next we prove \eqref{Thm:Virial}. We know from \eqref{Eq4} that $|t_2|\leq \kappa(a+1)$, provided $\mu\ge \mu_0$. Taking first $\mu\rightarrow\infty$ and then $\kappa\rightarrow 0$ in \eqref{Viral} gives 
\begin{equation}
\label{m12}
\lim_{\mu\rightarrow\infty} \SPn{\chi_\mu\psi}{i[A_\nu,\L]\chi_\mu\psi} =0.
\end{equation}
We have $i[A_\nu,\L]=\C_\nu +i\d[A_\nu,I]$ and we know from the above that 
$$
\lim_{\nu\rightarrow 0}\lim_{\mu\rightarrow\infty}\SPn{\chi_\mu\psi}{\C_\nu\chi_\mu\psi} = \SPn{\psi}{N\psi}.
$$ 
Furthermore, as $[A_\nu,I]$ is a well-defined operator on $\dom(N^{1/2})$ (see Lemma \ref{Lem:CommutatorN}) and has the strong limit \eqref{m13} for $\nu\rightarrow 0$, relation \eqref{Thm:Virial} follows from \eqref{m12} by first taking $\mu\rightarrow\infty$ and then $\nu\rightarrow 0$.
\end{proof}

\subsection{Eigenvalues of $\L$}

\begin{prop}
\label{propeigenvalues}
1. Let $\d$ be arbitrary and suppose $\psi$ is a normalized eigenvector of $\L$ with eigenvalue $e$. Then
\begin{eqnarray}
\|\bar P_\Omega \psi\| &\le& 10 c_2 |\d|, \label{m14}\\
{\rm dist}\big(e,{\rm spec}(\L_S)\big) &\le&  \textstyle\frac{2}{\sqrt 3} |\d| (1-\|\bar P_\Omega\psi\|^2)^{-1/2}, \label{m14'}
\end{eqnarray}
where $c_2$ is given in Lemma \ref{LemOperators}.

2. Suppose that $\d$ is small such that $\textstyle\frac{2}{\sqrt 3} |\d| (1-\|\bar P_\Omega\psi\|^2)^{-1/2} < \varepsilon/2$,  where $\varepsilon$ is the distance between the nearest eigenvalues of $\L_S$. Then, by \eqref{m14'}, there is a unique $e_0\in{\rm spec}(\L_S)$ which is closest to $e$. Let $P_{e_0}$ be the eigenprojection associated to this $e_0$ and denote $\bar P_{e_0}=\one_S-P_{e_0}$. Then (writing $P_{e_0} P_\Omega$ for $ P_{e_0}\otimes P_\Omega$)
\begin{equation}
\label{m14''}
\|\bar P_{e_0} P_\Omega\psi\|\le 2|\d|\varepsilon^{-1}. 
\end{equation}
\end{prop}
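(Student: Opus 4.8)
The plan is to exploit the virial identity from Theorem \ref{regevectthm} together with the structure of the unitarily transformed Liouvillean $\L = \L_0 + \d I$, in which $I$ is bounded. First, for part 1, I would test the eigenvalue equation $\L\psi = e\psi$ against $\psi$ after projecting with $\bar P_\Omega$ and $P_\Omega = \one\otimes\one\otimes|\Omega\rangle\langle\Omega|$. The key point is that $\bar P_\Omega(\L_0 - e)\bar P_\Omega$ is bounded below away from zero on a region, but more robustly, one uses the positive commutator estimate \eqref{m7} evaluated at the eigenvector: the virial identity \eqref{Thm:Virial} says $\SPn{\psi}{i[A,\L]\psi} = 0$, and $i[A,\L] = N + \d i[A,I]$, so $\|N^{1/2}\psi\|^2 = -\d\SPn{\psi}{i[A,I]\psi}$. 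Since $N \ge \bar P_\Omega$ as a form (the vacuum is the only vector with $N=0$), combining $\|\bar P_\Omega\psi\|^2 \le \|N^{1/2}\psi\|^2 \le |\d|\,|\SPn{\psi}{i[A,I]\psi}|$ with the bound \eqref{m6} in the $\nu\to 0$ limit (which gives $|\SPn{\psi}{i[A,I]\psi}| \le c_1\|N^{1/2}\psi\|\|\psi\| + c_2\|\psi\|^2$) yields a quadratic inequality $\|N^{1/2}\psi\|^2 \le |\d|c_1\|N^{1/2}\psi\| + |\d|c_2$. Solving this and using $\psi$ normalized gives $\|N^{1/2}\psi\|^2 \le |\d|c_1\|N^{1/2}\psi\| + |\d| c_2$, hence (for $\d$ not necessarily small, but using the crude estimate $\|N^{1/2}\psi\|\le 1$ is not available — rather one bounds $\|N^{1/2}\psi\|$ by the larger root) $\|\bar P_\Omega\psi\| \le \|N^{1/2}\psi\| \le $ something like $10 c_2|\d|$ after absorbing constants; the numerical constant $10$ should come out of carefully bounding the root of the quadratic, perhaps using $c_1 \le$ a multiple of $c_2$ or treating small vs.\ large $\d$ separately.

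For \eqref{m14'}, I would restrict the eigenvalue equation to $\Ran P_\Omega$: writing $\psi = P_\Omega\psi + \bar P_\Omega\psi$ and applying $P_\Omega(\L - e)\psi = 0$, one gets $(\L_S - e)P_\Omega\psi = -\d\, P_\Omega I \bar P_\Omega\psi$ up to the $P_\Omega\L_R P_\Omega = 0$ term (since $L_\r\Omega = 0$ and $P_\Omega$ kills off-diagonal field contributions, using that $I$ maps $\Omega$ to a sum of one-... actually $W(2f_\beta)\Omega$ is not orthogonal to $\Omega$, so $P_\Omega I P_\Omega$ need not vanish — but it does contribute a term of size $O(\d)$ times $P_\Omega\psi$, which can be moved to the left side and absorbed). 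The upshot is $\dist(e,\spec(\L_S))\,\|P_\Omega\psi\| \le |\d|\,\|I\|\,(\|\bar P_\Omega\psi\| + O(\d)\|P_\Omega\psi\|)$, and since $\|P_\Omega\psi\| = (1 - \|\bar P_\Omega\psi\|^2)^{1/2}$ and $\|I\|\le 1$, with a bit of care about the constant $\frac{2}{\sqrt3}$ (which presumably comes from bounding $\|\bar P_\Omega\psi\|/\|P_\Omega\psi\| \le $ something, or from the explicit spectrum of $\L_S$), one arrives at \eqref{m14'}. I would be careful to check whether the constant $\tfrac{2}{\sqrt3}$ arises from a two-dimensional degeneracy in $\spec(\L_S)$ or from estimating $\|P_\Omega I\bar P_\Omega\|$ more sharply than by $\|I\|$.

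For part 2, the argument is a second-layer version of the same idea: now project within the spin space onto $\bar P_{e_0}$. From $\L\psi = e\psi$ and $(\L_S - e_0)\psi$-component analysis, write $\bar P_{e_0}P_\Omega(\L - e)\psi = 0$. Since $\bar P_{e_0}P_\Omega \L_0 = \bar P_{e_0}\L_S\bar P_{e_0} P_\Omega$ and the eigenvalues of $\L_S$ on $\Ran\bar P_{e_0}$ differ from $e_0$ by at least $\varepsilon$, while $|e - e_0| < \varepsilon/2$ by hypothesis, the operator $(\bar P_{e_0}\L_S\bar P_{e_0} - e)$ restricted to $\Ran\bar P_{e_0}P_\Omega$ is invertible with norm of inverse at most $(\varepsilon - \varepsilon/2)^{-1} = 2/\varepsilon$. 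Then $\|\bar P_{e_0}P_\Omega\psi\| \le \tfrac{2}{\varepsilon}\|\d\bar P_{e_0}P_\Omega I\psi\| \le \tfrac{2|\d|}{\varepsilon}\|I\|\|\psi\| \le 2|\d|\varepsilon^{-1}$, which is exactly \eqref{m14''}. One subtlety: the term $\bar P_{e_0}P_\Omega I P_\Omega\psi$ must be included inside $\bar P_{e_0}P_\Omega I\psi$ and bounded by $\|I\|\le 1$ directly, so no iteration is needed; the estimate is clean once the invertibility with the stated bound is in hand.

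The main obstacle I expect is pinning down the explicit numerical constants ($10$, $\tfrac{2}{\sqrt3}$, $2$) rather than the conceptual structure: the quadratic inequality for $\|N^{1/2}\psi\|$ must be solved and its larger root bounded uniformly in $\d$ (which requires either restricting to a range of $\d$ or a global estimate using the relation between $c_1$ and $c_2$), and the constant $\tfrac{2}{\sqrt3}$ in \eqref{m14'} suggests a sharper-than-obvious estimate of the relevant matrix element or the use of the explicit $2\times 2$ structure of $\L_S$ (whose nonzero eigenvalues for $\Delta=0$ are $\pm\varepsilon$, giving a degeneracy at $0$), so I would need to track exactly where the $\Omega$-expectation of $I$ and the geometry of $\spec(\L_S) = \{0,0,\pm\varepsilon\}$ enter. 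The functional-analytic steps (virial identity, $N\ge\bar P_\Omega$, Feshbach-type projection of the eigenvalue equation) are routine given the machinery already established.
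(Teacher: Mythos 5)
Your proof of \eqref{m14} contains a genuine gap that renders the claimed linear-in-$\d$ bound unreachable by your route. Writing the virial identity as $\|N^{1/2}\psi\|^2 = -\d\,\SPn{\psi}{i[A,I]\psi}$ and then applying \eqref{m6} (with $\nu\to 0$) gives $\|N^{1/2}\psi\|^2 \le |\d|c_1\|N^{1/2}\psi\| + |\d|c_2$. The presence of the constant term $|\d|c_2$ on the right forces the larger root of the quadratic to behave like $\sqrt{|\d| c_2}$ for small $\d$, so this inequality only yields $\|\bar P_\Omega\psi\| \le \|N^{1/2}\psi\| = O(\sqrt{|\d|})$, not $O(|\d|)$. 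No choice of $\alpha$ in the Cauchy--Schwarz step or relation between $c_1$ and $c_2$ repairs this; the obstruction is structural. The missing ingredient is the observation that $A\Omega=0$, hence $P_\Omega\, i[A,I]\, P_\Omega = 0$. Once you split $\psi = P_\Omega\psi + \bar P_\Omega\psi$ inside $\SPn{\psi}{i[A,I]\psi}$, the term $\SPn{P_\Omega\psi}{i[A,I]P_\Omega\psi}$ drops out, and every surviving term carries at least one factor of $\bar P_\Omega\psi$. Since $[A,I]N^{-1/2}\bar P_\Omega$ is bounded (a consequence of \eqref{m13}), each such term is $\ple |\d|\,\|N^{1/2}\bar P_\Omega\psi\|$, so the virial identity produces a homogeneous inequality of the form $\|N^{1/2}\bar P_\Omega\psi\|^2 \le C|\d|\,\|N^{1/2}\bar P_\Omega\psi\|$ with no constant term, from which $\|\bar P_\Omega\psi\| \le \|N^{1/2}\bar P_\Omega\psi\| \le C|\d|$ follows by division.

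For \eqref{m14'}, your direct route is sound and in fact cleaner than the paper's. Projecting $\L\psi = e\psi$ with $P_\Omega$ gives $(\L_S - e)P_\Omega\psi = -\d\,P_\Omega I\psi$ (no need to further decompose $I\psi$, and no need to absorb $P_\Omega I P_\Omega$ into the left side, which would spoil the clean inverse estimate); using $\|(\L_S - e)^{-1}\|_{\Ran P_\Omega} = \dist(e,\spec\L_S)^{-1}$ and $\|I\|\le 1$ gives $\dist(e,\spec\L_S) \le |\d|\,(1-\|\bar P_\Omega\psi\|^2)^{-1/2}$, with constant $1$ rather than $2/\sqrt3$. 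The paper instead develops the family of projections $Q_{e_0}=\bar P_{e_0}P_\Omega$ (yielding $\|Q_{e_0}\psi\| \le |\d|/\dist(e,\spec(\L_S)\backslash\{e_0\})$) and obtains \eqref{m14'} from a pigeonhole argument over $\sum Q_{e_0} = 3P_\Omega$; this costs the extra factor $2/\sqrt3$ but sets up exactly the estimate needed for part 2. Your treatment of \eqref{m14''} is essentially identical to the paper's: once $|e-e_0|<\varepsilon/2$, the operator $\L_S - e$ restricted to $\Ran \bar P_{e_0}P_\Omega$ has inverse bounded by $2/\varepsilon$, and the bound follows directly.
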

{\em Remark.\ } In point 2., which $e_0$ is closest to $e$ may depend on $\Delta$, and we are not proving that $e$ is continuously varying in $\Delta$.

\begin{proof}
1. Note that $P_\Omega i[A,I]P_\Omega=0$ since $A\Omega = d\Gamma(-i\partial_u)\Omega=0$. The virial identity \eqref{Thm:Virial} implies
\begin{equation}
\label{m15}
0=\SPn{\bar P_\Omega\psi}{(N+\d i[A,I])\bar P_\Omega\psi} + 2{\rm Re}\,\SPn{\bar P_\Omega\psi}{\d i[A,I]P_\Omega\psi}.
\end{equation}
Using $\|\phi(h)N^{-1/2}\bar P_\Omega\|\le (1+1/\sqrt2)\|h\|$ and relation \eqref{m13} one obtains the bound 
$$
\|[A,I]N^{-1/2}\bar P_\Omega\|\le 8(1+1/\sqrt{2}) \|f'_\beta\|\, (1+\|f_\beta\|)\le 4 c_2.
$$ 
It follows that 
\begin{eqnarray*}
|\SPn{\bar P_\Omega\psi}{\d i[A,I]\bar P_\Omega\psi}| &\le& 4 |\d| c_2 \|\bar P_\Omega\psi\|\, \|N^{1/2}\bar P_\Omega\psi\|,\\
2{\rm Re}\,|\SPn{\bar P_\Omega\psi}{\d i[A,I]P_\Omega\psi}| &\le& 8 |\d| c_2\|P_\Omega\psi\|\,\|N^{1/2}\bar P_\Omega\psi\|.
\end{eqnarray*}
We combine the last two inequalities with \eqref{m15} to arrive at 
$$
0\ge (1-\alpha)\|N^{1/2}\bar P_\Omega\psi\|^2 -24\alpha^{-1} \d^2c^2_2,
$$ 
for any $\alpha>0$. The choice $\alpha=1/2$ gives \eqref{m14}.

Next we show \eqref{m14'}. For any eigenvalue $e_0$ of $\L_S$, set $Q_{e_0}:= \bar P_{e_0}P_\Omega$. Projecting $\L\psi=e\psi$, $\|\psi\|=1$, onto the range of $Q_{e_0}$ gives $Q_{e_0}\psi = -\d (\L_S-e)^{-1} Q_{e_0} I\psi$. (The result to be proven is clearly true if $e=e_0$ so we may assume $e\neq e_0$.) Therefore, for any eigenvalue $e_0$ of $\L_S$,
\begin{equation}
\|Q_{e_0}\psi\|\le \frac{|\d|}{{\rm dist}(e,{\rm spec}(\L_S)\backslash\{e_0\})}.
\label{m25}
\end{equation}
Since $\sum_{e_0\in{\rm spec}(\L_S)} Q_{e_0} = 3P_\Omega$ we have $3\|P_\Omega\psi\|^2 = \sum_{e_0\in{\rm spec}(\L_S)}\|Q_{e_0}\psi\|^2\le 4\|Q_{e_*}\psi\|^2$, where $e_*$ is an eigenvalue of $\L_S$ maximizing the norm $\|Q_{e_0}\psi\|$. Using the latter bound in \eqref{m25} gives
$$
{\rm dist}(e,{\rm spec}(\L_S)\backslash\{e_*\})  \le \frac{|\Delta|}{\|Q_{e_*}\psi\|} \le \frac{2|\Delta|}{\sqrt3 \sqrt{1-\|\bar P_\Omega\psi\|^2}}.
$$
Since ${\rm dist}(e,{\rm spec}(\L_S))  \le {\rm dist}(e,{\rm spec}(\L_S)\backslash\{e_*\})$, we have shown \eqref{m14'}.


2. We have ${\rm dist}(e,{\rm spec}(\L_S)\backslash\{e_0\})>\varepsilon/2$ and \eqref{m14''} follows from \eqref{m25}. This concludes the proof of Proposition \ref{propeigenvalues}. 
\end{proof}

Instability of eigenvalues of $\L_0$ under the perturbation $\d I$ can be shown provided a (``Fermi Golden Rule''-)condition of effective coupling is satisfied.
\begin{prop}
\label{lsoprop}
Let $\Pi_0$ be the rank-two spectral projection onto the kernel of $\L_0$ and set $\bar\Pi_0=\one-\Pi_0$. The operator
$$
\Lambda_0 \equiv  \Pi_0 I\bar\Pi_0 (\L_0-i 0_+)^{-1} I\Pi_0 \equiv \lim_{\eta\rightarrow 0_+} \Pi_0 I\bar\Pi_0 (\L_0-i \eta)^{-1} I\Pi_0
$$
exists and is anti self-adjoint (it equals $i$ times a self-adjoint operator). The eigenvalues are ${\rm spec}(\Lambda_0) =\{0,i \Delta^{-2} \tau^{-1}\}$, where $\tau^{-1}$ is given in \eqref{t2}. Moreover, $\Lambda_0\psi_{S,\beta}=0$ (see \eqref{psiokms}).
\end{prop}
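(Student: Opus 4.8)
The plan is to compute the matrix $\Lambda_0$ explicitly in the eigenbasis of $\L_S$ and recognize that the off-diagonal structure, combined with the reservoir correlation integral, reproduces Leggett et al.'s formula \eqref{t2}. First I would set up coordinates: $\L_0 = \L_S + L_R$ acts on $\CC^2\otimes\CC^2\otimes\cF_\beta$, and $\ker\L_0$ is spanned by $\varphi_{++}\otimes\Omega$ and $\varphi_{--}\otimes\Omega$ (the two diagonal spin vectors tensored with the Fock vacuum), since for $\Delta=0$ in $\L_S$ the spectrum of $\L_S$ is $\{0,0,\pm\varepsilon\}$ with the zero eigenspace being $\mathrm{span}\{\varphi_{++},\varphi_{--}\}$, and $L_R\Omega=0$. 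So $\Pi_0$ has rank two. Then $I\Pi_0$ acts by the interaction $I=-\frac12(\cV - J\cV J)$ in \eqref{2.9''}-\eqref{2.9'''}; applied to a vector of the form (diagonal spin)$\otimes\Omega$, each of the four terms $\sigma_\pm\otimes\one_S\otimes W(\pm 2f_\beta)$ and their $J$-conjugates flips one spin factor and creates a coherent state $W(\pm 2f_\beta)\Omega$, which has a one-particle component $\mp\sqrt2\, i\, f_\beta$ plus higher sectors. Because $\bar\Pi_0$ projects off the kernel, and $\L_0$ restricted to the relevant range acts as $\pm\varepsilon + L_R$ (shift of the spin energy plus the free field), the resolvent $(\L_0 - i\eta)^{-1}$ evaluated on these states is diagonalizable sector by sector.

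Next I would reduce the computation of $\Pi_0 I\bar\Pi_0(\L_0 - i0_+)^{-1}I\Pi_0$ to a scalar integral. The key point is that only the one-particle sector of $W(\pm 2f_\beta)\Omega$ contributes when we sandwich between $\Pi_0$ on both sides: the final $\Pi_0$ projects back onto (diagonal spin)$\otimes\Omega$, so after the inner $I$ flips the spin and the resolvent acts, the outer $I$ must flip it back and annihilate the created bosons, picking out the $n=1$ piece. Writing $f_\beta = (-\frac{i}{2}q_0 h/u)_\beta$ from \eqref{2.7}, and using that on the one-boson sector $(\pm\varepsilon + L_R - i\eta)^{-1}$ is multiplication by $(\pm\varepsilon + u - i\eta)^{-1}$, the matrix element becomes a sum of terms $\int (\pm\varepsilon + u - i0_+)^{-1}|f_\beta(u,\Sigma)|^2\,du\,d\Sigma$. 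Applying the Sokhotski-Plemelj formula $(\,x - i0_+)^{-1} = \mathrm{p.v.}\,\frac1x + i\pi\delta(x)$, the imaginary (delta) part yields the real, positive rate while the principal value part yields a Lamb-shift-type real correction; combining with the $J\cV J$ terms (which by \eqref{2.6} contribute the complex-conjugate-glued structure, i.e. the $e^{-\beta u/2}$ factors from $\overline f_\beta(-u,\Sigma)=-e^{-\beta u/2}f_\beta(u,\Sigma)$) gives the full expression. I would then carefully match this against \eqref{t2}: the $\cos(\varepsilon t)$ comes from the two signs $\pm\varepsilon$, the oscillatory factor $\cos[\frac{q_0^2}\pi Q_1(t)]$ and the damping $e^{-\frac{q_0^2}{\pi}Q_2(t)}$ come from re-expanding $|f_\beta|^2$ — which carries the $\frac{u}{1-e^{-\beta u}}|u|$ weight and the $q_0^2$ from $(h/u)$ — back into a time integral, recognizing that $\cos(\frac{q_0^2}\pi Q_1)e^{-\frac{q_0^2}\pi Q_2}$ is exactly $\Re$ of the characteristic function $\exp(-\frac{q_0^2}\pi\int\frac{J(\omega)}{\omega^2}[(1-\cos\omega t)\coth\frac{\beta\omega}2 - i\sin\omega t]d\omega)$ of the coherent-state overlap. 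This identification is essentially the standard polaron/independent-boson computation behind \eqref{legfor}.

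Having the explicit $2\times2$ matrix, the structural claims follow quickly. In the basis $\{\varphi_{++},\varphi_{--}\}$ the real parts (Lamb shifts) are diagonal and the rate part $i\Delta^{-2}\tau^{-1}$ appears as $i\Delta^{-2}\tau^{-1}\begin{pmatrix}a & -\sqrt{ab}\\ -\sqrt{ab} & b\end{pmatrix}$-type structure with $a,b$ the Boltzmann weights $e^{\mp\beta\varepsilon/2}$ (normalized), so that $\psi_{S,\beta}$ — which is proportional to $e^{-\beta\varepsilon/4}\varphi_{++}+e^{\beta\varepsilon/4}\varphi_{--}$ by \eqref{2.13'} — lies in the kernel; this is the detailed-balance / KMS structure, and it forces one eigenvalue of the rate part to be $0$ and the other to be the trace, which works out to $i\Delta^{-2}\tau^{-1}$. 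Anti-self-adjointness of $\Lambda_0$ follows because the p.v. (real) parts actually cancel between the $\cV$ and $J\cV J$ contributions — or, more robustly, from the general fact that the level-shift operator $\Pi_0 I\bar\Pi_0(\L_0-i0_+)^{-1}I\Pi_0$ restricted to the kernel of a Liouvillian has its Hermitian part given by $-i\pi\,\Pi_0 I\delta(\L_0)I\Pi_0$ which is $i$ times a self-adjoint (indeed negative) operator, while I expect the remaining (principal value) part to vanish here by the symmetry $u\mapsto -u$ of the glued integrand weighted against $\pm\varepsilon$; I would verify this cancellation directly. The existence of the limit $\eta\to0_+$ requires enough regularity of $f_\beta$ near the resonant values $u=\mp\varepsilon$ and $u=0$, which is guaranteed by Condition (A$_\alpha$) with $\alpha>3/2$ (in particular $h/u$ being square-integrable after gluing handles the $u=0$ point). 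The main obstacle is the bookkeeping in the second paragraph: correctly tracking all four interaction terms through the $J$-conjugation \eqref{2.6}, keeping the signs and the $e^{-\beta u/2}$ factors straight, and then performing the re-summation of the coherent-state overlap into the time-integral form so that the answer matches \eqref{t2} on the nose rather than up to an unilluminating constant; the spectral and KMS-kernel statements are then immediate corollaries of the explicit matrix.
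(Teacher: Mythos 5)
The central computational step in your second paragraph is wrong, and this invalidates the main line of your argument. You claim that because $\Pi_0$ projects onto the vacuum on both sides, ``only the one-particle sector of $W(\pm 2f_\beta)\Omega$ contributes,'' and you write the matrix element as
$\int (\pm\varepsilon + u - i0_+)^{-1}\,|f_\beta(u,\Sigma)|^2\,du\,d\Sigma$. That is the contribution of exactly one boson created and reabsorbed, and it is linear in $q_0^2$ (since $f_\beta\propto q_0$). But the target formula \eqref{t2} contains $\cos\bigl[\tfrac{q_0^2}{\pi}Q_1(t)\bigr]\,e^{-\frac{q_0^2}{\pi}Q_2(t)}$, which is a full exponential series in $q_0^2$; every $n$-boson sector of the coherent state $W(\pm 2f_\beta)\Omega$ contributes, because the resolvent $(\L_R\pm\varepsilon - i0_+)^{-1}$ does not return the coherent state to the vacuum before the outer $I$ is applied. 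The single-particle reduction would only be correct if the resolvent were replaced by $\Pi_0$, i.e.\ if you were computing a vacuum--vacuum overlap rather than a level-shift operator. Your third paragraph, where you allude to the characteristic-function $\exp$-structure of the coherent-state overlap, is in fact the right idea, but it is in direct contradiction with the $n=1$ reduction you just performed, and you never resolve that tension or produce the actual formula.

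The paper avoids this trap by never projecting onto particle sectors at all. It writes the resolvent as a Laplace transform, $(\L_R-\varepsilon-i0_+)^{-1} = i\int_0^\infty e^{it(\varepsilon+i0_+)}e^{-it\L_R}\,dt$, and then exploits that $e^{-it\L_R}$ implements a Bogoliubov rotation on Weyl operators, $e^{-it\L_R}W(g_\beta)e^{it\L_R}=W(e^{-iut}g_\beta)$, together with the CCR \eqref{ccr} and the Gaussian thermal average \eqref{thav}. The full multi-boson resummation is thereby automatic, giving directly the oscillatory factor $\cos[\tfrac{q_0^2}{\pi}Q_1(t)]$ and the decaying factor $e^{-\frac{q_0^2}{\pi}Q_2(t)}$ in \eqref{t2}. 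The $J\cV J$ cross-terms are handled via an analytic continuation $t\mapsto i\beta/2$ of the correlation function (modular theory), rather than by tracking the $e^{-\beta u/2}$ factors termwise as you suggest. Your remaining structural claims (detailed balance giving $\Lambda_0\psi_{S,\beta}=0$, the trace formula for the nonzero eigenvalue, anti-self-adjointness) are reasonable in outline and essentially match the paper's conclusions, but as presented they rest on the incorrect explicit formula; in the paper, anti-self-adjointness is read off from the explicit time-integral expressions for $x(\varepsilon)$ and $z(\varepsilon)$ (both purely imaginary), and detailed balance $x(\varepsilon)=-e^{\beta\varepsilon/2}z(\varepsilon)$ is verified by contour deformation, not by Sokhotski--Plemelj. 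To fix your proof you would need to abandon the one-boson reduction and carry out the genuine coherent-state/Bogoliubov computation you gesture at in the last paragraph.
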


{\em Proof of Proposition \ref{lsoprop}. } We identify $\Lambda_0$ with a $2\times 2$ matrix relative to the orthonormal basis $\{\varphi_{++}\otimes\Omega,\varphi_{--}\otimes\Omega\}$ of ${\rm Ran}\Pi_0$. Here, $\varphi_{++}=\varphi_+\otimes\varphi_+$ and $\sigma_z\varphi_\pm=\pm\varphi_\pm$. We caclulate explicitly
\begin{eqnarray}
4\Lambda_0\varphi_{++} &=& \ \ \, \varphi_{++} \left\langle W(2f_\beta) (\L_R-\varepsilon-i 0_+)^{-1} W(2f_\beta)^*\right\rangle\nonumber\\
&& +\varphi_{++}  \left\langle JW(2f_\beta) J(\L_\r+\varepsilon-i 0_+)^{-1} JW(2f_\beta)^*J\right\rangle\nonumber\\
&& -\varphi_{--} \left\langle W(2f_\beta)^* (\L_\r+\varepsilon-i 0_+)^{-1} JW(2f_\beta)^*J\right\rangle\nonumber\\
&& -\varphi_{--}  \left\langle JW(2f_\beta)^*J (\L_\r-\varepsilon-i 0_+)^{-1} W(2f_\beta)^*\right\rangle.\label{2.27}
\end{eqnarray}
Here, $\langle\ \cdot\ \rangle=\langle\Omega, \cdot\ \Omega \rangle$. Since $J\Omega=\Omega$, $J e^{-\beta \L_\r/2} W(h_\beta)\Omega=W(h_\beta)^*\Omega$ (by properties of the modular conjugation $J$ and the modular operator $e^{-\beta \L_\r/2}$) and since $J e^{-\beta \L_\r/2}=e^{\beta \L_\r/2}J$, we have $\langle W(g_\beta) J W(h_\beta)J\rangle = \langle W(g_\beta) e^{-\beta \L_\r/2} W(h_\beta)^*\rangle$.  
A term $\langle W(g_\beta) J W(h_\beta)J\rangle$ can thus be calculated as the holomorphic continuation of ${\mathbb R}\ni t\mapsto \langle W(g_\beta) e^{i t\L_\r} W(h_\beta)^*\rangle$ at $t=i\beta/2$. For real values of $t$, the latter average is easy to calculate using that (1) the exponential generates a Bogoliubov dynamics ($t\mapsto e^{i ut} h_\beta$), (2) the CCR \eqref{ccr} and (3) that the thermal average is given by \eqref{thav}. The result is
\begin{equation*}
\langle W(g_\beta) J W(h_\beta)J\rangle = e^{\frac14\left( \SPn{g}{e^{-\beta|k|/2}h} -\SPn{h}{e^{\beta|k|/2}g}\right)} \ e^{-\frac14 \left( \SPn{g}{cg}+ \SPn{h}{ch} -\SPn{h}{ce^{\beta|k|/2}g} -\SPn{g}{c e^{-\beta|k|/2}h}\right)},
\label{2.28}
\end{equation*}
where, for short, 
\begin{equation}
c=\coth(\beta|k|/2).
\label{for short}
\end{equation}
Using the representation 
$(\L_\r-\varepsilon-i 0_+)^{-1} = i \lim_{\eta\downarrow 0} \int_0^\infty e^{i t(\varepsilon+i\eta)} e^{-i t\L_\r} d t$,
we cast \eqref{2.27} in the form $\Lambda_0\varphi_{++} = x(\varepsilon)\varphi_{++} +z(\varepsilon) \varphi_{--}$, where
\begin{eqnarray}
x(\varepsilon) &=& {\textstyle\frac12} i{\rm Re}\,\int_0^\infty e^{i t\varepsilon}\, e^{-2 i \SPn{f}{\sin(|k| t)f}} \, e^{-2\SPn{f}{c(1-\cos(|k| t))f}}\, d t\nonumber\\
z(\varepsilon) &=& -{\textstyle\frac12}i\int_0^\infty \cos(\varepsilon t) e^{-2\SPn{f}{\{c-\frac{2\cos(|k| t)}{e^{\beta|k|/2}-e^{-\beta|k|/2}}\}f}}\, dt,
\end{eqnarray}
with $c$ given in \eqref{for short}. The symmetry
$\sigma_x\otimes\sigma_x \ \Lambda_0(\varepsilon,f)\ \sigma_x\otimes\sigma_x =\Lambda_0(-\varepsilon,-f)$ 
(where we display the dependence on $\varepsilon$ and $f$ explicitly)  implies immediately that 
$\Lambda_0\varphi_{--} = z(\varepsilon)\varphi_{++} + x(-\varepsilon)\varphi_{--}$.
(Note that $z(\varepsilon)=z(-\varepsilon)$.) Therefore, the level shift operator takes the matrix form
\begin{equation}
\Lambda_0 =\left(
\begin{array}{cc}
x(\varepsilon) & z(\varepsilon)\\
z(\varepsilon) & x(-\varepsilon)
\end{array}
\right).
\end{equation}
By a deformation of the path of integration, it is not hard to verify that $x(\varepsilon)=-e^{\beta\varepsilon/2}z(\varepsilon)$ (see also Appendix E of \cite{Leggett}). This implies that the Gibbs state $\psi_{S,\beta}$, \eqref{psiokms}, is in the kernel of $\Lambda_0$. The other eigenvalue of $\Lambda_0$ is hence its trace, 
$$
{\rm Tr}\Lambda_0 = x(\varepsilon)+x(-\varepsilon)= i \int_0^\infty\cos(\varepsilon t)\cos\left(2\SPn{f}{\sin(|k| t)f}\right)   e^{-2\SPn{f}{c(1-\cos(|k| t))f}} d t.
$$
Using the relation \eqref{2.7} shows that ${\rm Tr}\Lambda_0= i \Delta^{-2} \tau^{-1}$, see \eqref{t2}. This completes the proof of Proposition \ref{lsoprop}. \hfill $\qed$

\begin{thm}
\label{eigenvaluethm} Suppose $0<|\d| < \d_0$, for some constant $\d_0$ given in \eqref{deltanot}. Then $\L$ has no eigenvalues except for a simple one at the origin. Moreover, $\L\psi_{\rm KMS}=0$, where $\psi_{\rm KMS}$ is the coupled KMS state \eqref{2.12}.
\end{thm}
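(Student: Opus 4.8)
The plan is to combine the three preceding ingredients — the regularity/virial result (Theorem \ref{regevectthm}), the a-priori localization of eigenvalues near $\mathrm{spec}(\L_S)$ with a large $P_\Omega$-component (Proposition \ref{propeigenvalues}), and the Fermi Golden Rule nondegeneracy of the level shift operator $\Lambda_0$ (Proposition \ref{lsoprop}) — into a positive commutator estimate that forces any eigenvalue to sit at $e=0$ and be simple. First I would suppose $\psi$ is a normalized eigenvector of $\L$ with eigenvalue $e$; by Proposition \ref{propeigenvalues}, once $\d_0$ is small enough there is a unique $e_0\in\mathrm{spec}(\L_S)$ closest to $e$, with $\mathrm{dist}(e,e_0)=\bigO(\d)$, $\|\bar P_\Omega\psi\|=\bigO(\d)$, and $\|\bar P_{e_0}P_\Omega\psi\|=\bigO(\d)$, so $\psi$ is, up to $\bigO(\d)$, equal to $P_{e_0}P_\Omega\psi\in\mathrm{Ran}\,\Pi_{e_0}$ where $\Pi_{e_0}=P_{e_0}\otimes P_\Omega$. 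The virial identity gives $\SPn{\psi}{i[A,\L]\psi}=0$, i.e. $\|N^{1/2}\psi\|^2 = -\d\,\SPn{\psi}{i[A,I]\psi}$.

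The key step is to extract from $\SPn{\psi}{i[A,I]\psi}$ its leading second-order term, which is governed by $\Lambda_0$ (more precisely by the analogous level-shift operator built on $e_0$, $\Lambda_{e_0}$). I would insert the decomposition $\psi = \Pi_{e_0}\psi + \bar\Pi_{e_0}\psi$ and use $P_\Omega i[A,I]P_\Omega=0$ (since $A\Omega=0$), so that the form reduces to cross terms and a $\bar\Pi_{e_0}$-term. For the cross terms I would express $\bar\Pi_{e_0}\psi$ via the eigenvalue equation, $\bar\Pi_{e_0}\psi = -\d\,(\L_0-e)^{-1}\bar\Pi_{e_0}I\psi = -\d\,(\L_0-e)^{-1}\bar\Pi_{e_0}I\Pi_{e_0}\psi + \bigO(\d^2)$, which produces exactly the combination $2\d^2\,\mathrm{Re}\,\SPn{\Pi_{e_0}\psi}{\,i[A,\,\cdot\,]\,\text{-weighted}\;I\bar\Pi_{e_0}(\L_0-e)^{-1}I\Pi_{e_0}\psi}$. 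The computation of Proposition \ref{lsoprop} — the identity $i[A,I] = i[N,\,\cdot\,]$-type bookkeeping that relates $A$-commutators with $W(2f_\beta)$ to the $f'_\beta=(\partial_u f)_\beta$-shifted field in \eqref{m13} — shows that this second-order term equals, up to a positive constant, $-\d^2\,\mathrm{Re}\,\tr\!\big(\Pi_{e_0}\psi\text{-state}\cdot\Lambda_{e_0}^{\rm s.a.}\big)$, whose nonnegativity is controlled by the spectrum of $\Lambda_{e_0}$. Since $\mathrm{spec}(\Lambda_0)=\{0,\,i\Delta^{-2}\tau^{-1}\}$ with $\tau^{-1}>0$ (the Fermi Golden Rule assumption), the only direction in $\mathrm{Ran}\,\Pi_0$ along which this term vanishes is $\psi_{S,\beta}\otimes\Omega$, i.e. the would-be KMS direction; for $e_0\neq 0$ the corresponding $\Lambda_{e_0}$ has no kernel at all, so a nonzero $\bigO(\d)$-localized eigenvector there is impossible. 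Putting the estimates together yields an inequality of the schematic form
\begin{equation*}
\|N^{1/2}\psi\|^2 \;\ge\; c\,\d^2\,\big(1-\bigO(\d)\big)\,\mathrm{dist}\big(\Pi_{e_0}\psi,\ \CC\,\psi_{S,\beta}\otimes\Omega\big)^2 \;-\;\bigO(\d^3),
\end{equation*}
while at the same time \eqref{ParticleNumberBound} combined with $\|\bar P_\Omega\psi\|=\bigO(\d)$ forces $\|N^{1/2}\psi\|^2=\bigO(\d^2)\cdot\bigO(\d)=\bigO(\d^3)$ along the $\Pi_0$-component — a contradiction for $0<|\d|<\d_0$ unless $e_0=0$ and $\Pi_0\psi$ is proportional to $\psi_{S,\beta}\otimes\Omega$. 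This pins $e=0$ and simplicity; one then defines $\d_0$ in \eqref{deltanot} as the threshold making all the $\bigO$-remainders smaller than the main gap $c\,\d^2\Delta^{-2}\tau^{-1}$.

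It remains to identify the eigenvector. Having shown the only eigenvalue is $0$ and the eigenspace is one-dimensional, I would check that $\L\psi_{\rm KMS}=0$. This follows from the standard KMS/Tomita-Takesaki structure: by Araki's perturbation theory \eqref{2.12}, $\psi_{\rm KMS}$ is (a normalization of) $e^{-\beta(\L_0-\frac12\Delta\cV)/2}\psi_{0,\rm KMS}$, and since $\L_0\psi_{0,\rm KMS}=0$ and $\L = \L_0 + \d I$ with $I=-\frac12(\cV-J\cV J)$ the standard Liouvillean of the perturbed system, $\psi_{\rm KMS}$ is the (cyclic, separating) KMS vector for $\alpha^t$, hence annihilated by its generator $\L$; alternatively one verifies directly that $e^{it\L}\psi_{\rm KMS}=\psi_{\rm KMS}$ using that $J\psi_{\rm KMS}=\psi_{\rm KMS}$ and the cocycle identity. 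One must also confirm $\psi_{\rm KMS}\neq 0$, which is guaranteed by the stated fact that $\psi_{0,\rm KMS}\in\dom\big(e^{-\beta(\L_0-\frac12\Delta\cV)/2}\big)$ for all $\Delta,q_0$.

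The main obstacle I anticipate is the bookkeeping in the second-order term: unlike the textbook Pauli-Fierz case, here $i[A,I]$ involves the \emph{Weyl operator} $W(2f_\beta)$ multiplied by a linear field $\phi(f'_\beta)$ (see \eqref{m13}), so $i[A,I]$ is only $N$-bounded, not $N^{1/2}$-bounded, and the resolvent $(\L_0-e)^{-1}\bar\Pi_{e_0}$ must be shown to map $I\Pi_{e_0}\psi$ back into $\dom(N^{1/2})$ with the right uniform bounds in $\d$ — this is exactly where Theorem \ref{regevectthm} ($\psi\in\dom(N^{1/2})$ with an explicit bound) and the relative bounds \eqref{RelBounds} are indispensable. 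Carefully tracking that the cross terms really reconstruct $\Lambda_{e_0}$ (and not some other operator), including the contribution of the $J$-twisted half of $I$ and the fact that $P_\Omega i[A,I]P_\Omega=0$ kills all spurious first-order pieces, is the delicate part; everything else is a matter of choosing $\alpha$ and $\d_0$ to absorb error terms.
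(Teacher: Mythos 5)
Your overall strategy — a positive‐commutator argument fed by the regularity/virial theorem, the a-priori localization from Proposition~\ref{propeigenvalues}, and the Fermi Golden Rule nondegeneracy of the level shift operator — is the same as the paper's, and your treatment of $\L\psi_{\rm KMS}=0$ via Araki/Tomita–Takesaki is fine. But there is a genuine gap in the central step, and it is not a bookkeeping issue: the identity
\[
\bar\Pi_{e_0}\psi \;=\; -\d\,(\L_0-e)^{-1}\bar\Pi_{e_0}I\psi
\]
is not available, because $(\L_0-e)^{-1}$ is \emph{unbounded} on ${\rm Ran}\,\bar\Pi_{e_0}$. Indeed $\bar\Pi_{e_0}$ contains the sector $P_{e_0}\otimes\bar P_\Omega$, on which $\L_0=e_0+L_R$ has continuous spectrum filling all of $\RR$ — in particular it reaches $e_0$, and $|e-e_0|=\bigO(\d)$. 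There is no gap for the reduced resolvent to exploit; $\bar\Pi_{e_0}\psi$ need not lie in $\dom((\L_0-e)^{-1})$, and the ``cross terms'' you want to extract from $\SPn{\psi}{\,i[A,I]\,\psi}$ are simply not defined. This is precisely why the paper does not work with a reduced resolvent but instead with the bounded regularization $X_\eta=\Im(\L_0-e-\imath\eta)^{-1}$, bounds the quantity $q_e(\psi)=\d^2\SPn{P_\Omega\psi}{I\bar P_\Omega X_\eta I P_\Omega\psi}$ from above (via the eigenvalue equation, no inversion of $\L_0-e$ needed) and from below (via $\Im\Lambda_0$ plus the H\"older estimate \eqref{c20''}), and then optimizes $\eta=|\d|^{3/2}$ to produce a contradiction that is uniform in the $\eta\to 0$ limit. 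If you try to insert the analogous $\eta$-regularized resolvent into your decomposition, you have to track the extra $\imath\eta(\L_0-e-\imath\eta)^{-1}\bar\Pi_{e_0}\psi$ term explicitly — at which point you essentially arrive at the paper's computation.

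Two secondary points. First, the estimate $\|N^{1/2}\psi\|^2=\bigO(\d^3)$ is not available; the virial/regularity argument with $P_\Omega\,i[A,I]\,P_\Omega=0$ gives $\|N^{1/2}\psi\|^2=\bigO(\d^2)$ (as in the proof of \eqref{m14}), which is the \emph{same} order as the FGR gap $\tfrac12\tau^{-1}=\tfrac12\tau_0^{-1}\d^2$. So ``the remainder is one order smaller than the gap'' is false at the level you state it, and the contradiction only emerges after the $\eta$-dependent terms are tracked and $\eta$ is optimized — which again is what forces the choice $\eta=|\d|^{3/2}$ and the odd powers $|\d|^{1/2},|\d|^{3/2},|\d|^{5/2}$ in \eqref{deltanot}. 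Second, simplicity does not follow merely from ``$\Pi_0\psi$ is proportional to $\psi_{S,\beta}\otimes\Omega$'': you need to suppose $\psi\perp\psi_{\rm KMS}$ and derive a contradiction, using $|\SPn{\psi_{0,\rm KMS}}{\psi}|\le\|\psi_{\rm KMS}-\psi_{0,\rm KMS}\|\le|\d|c_{\rm KMS}$ as in \eqref{kmsconst}; without this step the argument cannot distinguish two candidate zero-eigenvectors close to $\psi_{0,\rm KMS}$. Finally, for $e_0\ne 0$ you assert ``$\Lambda_{e_0}$ has no kernel'' — this would be a separate FGR verification not carried out in the paper; the paper instead proves the $e=0$ case only and invokes the general Jak\v{s}i\'c–Pillet result \cite{JPNote} that a Liouvillean with a simple kernel has no nonzero eigenvalues.
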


\begin{proof} 
Let $e$ be an eigenvalue of $\L$ with associated normalized eigenvector $\psi$, and define, for $\eta>0$,
\begin{equation*}
X_\eta = \eta\, ((\L_0-e)^2+\eta^2)^{-1}=\textrm{Im}\,(\L_0-e- i\eta)^{-1}.
\end{equation*}
We derive an upper bound and a lower bound for
$$
q_e(\psi)=\d^2 \SPn{P_\Omega\psi}{ I \bar P_\Omega\, X_\eta I P_\Omega\psi}.
$$

{\em Upper bound.\ }
Since $\Delta\bar P_\Omega IP_\Omega\psi=\bar P_\Omega (\L-e)P_\Omega\psi=-\bar P_\Omega (\L-e)\bar P_\Omega\psi$, we have
\begin{align}
q_e(\psi)
=-\d \SPn{\bar P_\Omega\psi}{ (\L_0-e) \bar P_\Omega\, X_\eta I P_\Omega\psi}-\d^2 \SPn{\bar P_\Omega\psi}{ I \bar P_\Omega\, X_\eta I P_\Omega\psi}.
\end{align}
The bounds $\|I\|\le 1$, $\|X_\eta^{1/2}\|\le \eta^{-1/2}$ and $\|X_\eta^{1/2}(\L_0-e)\|\le \eta^{1/2}$ then imply that
\begin{align}
q_e(\psi)\le \eta^{1/2}\,\|\bar P_\Omega\psi\|\,q_e(\psi)^{1/2}+|\d| \eta^{-1/2} \|\bar P_\Omega\psi\|\, q_e(\psi)^{1/2}.      
\end{align}
Dividing by $q_e(\psi)^{1/2}$ and squaring gives
\begin{equation}\label{m29a}
q_e(\psi)\le (\eta +2|\d|+ \eta^{-1}\d^2)\,\|\bar P_\Omega \psi\|^2.
\end{equation}

{\em The lower bound.\ } Let $e=0$. With $\Pi_0=P_0 P_\Omega$ (recall the notation $P_0$ from Proposition \ref{propeigenvalues}) we get the lower bound
\begin{align}
\label{m30}
q_0(\psi)&\ge \d^2\|\bar P_\Omega X^{1/2}_\eta I\Pi_0\psi\|^2  +\d^2\|\bar P_\Omega X^{1/2}_\eta I\bar P_0P_\Omega \psi\|^2 -2\Delta^2 \|\bar P_\Omega X^{1/2}_\eta I\Pi_0\psi\|\ \|\bar P_\Omega X^{1/2}_\eta I\bar P_0P_\Omega \psi\| 
\\ \nonumber
& \ge \textstyle \frac12 \d^2\|\bar P_\Omega X^{1/2}_\eta I\Pi_0\psi\|^2 - \d^2\|\bar P_\Omega X^{1/2}_\eta I\bar P_0P_\Omega \psi\|^2\\ \nonumber
& \ge \textstyle \frac12 \d^2\SPn{\Pi_0 \psi}{ I \bar P_\Omega\, X_\eta I \Pi_0\psi}- \eta^{-1} \d^2 \|\bar{P}_0 P_\Omega \psi\|^2.
\end{align}
We link the first term on the right side to the level shift operator $\Lambda_0$ given in Proposition \ref{lsoprop}. Recalling that $X_\eta={\rm Im}(\L_0-i\eta)^{-1}$ and $\bar P_\Omega = \bar\Pi_0 +\bar P_0 P_\Omega$ we see that
\begin{equation}
\label{m-1}
\| \Pi_0 I\bar P_\Omega X_\eta I\Pi_0 - {\rm Im}\, \Pi_0 I\bar \Pi_0 (\L_0-i\eta)^{-1}I\Pi_0\|\le  \eta \varepsilon^{-2},
\end{equation}
since $\|{\rm Im}\, \Pi_0 I\bar P_0 P_\Omega (\L_0-i\eta)^{-1} I\Pi_0\|\le\|{\rm Im}\, \bar P_0 (\L_S-i\eta)^{-1}\|\le \eta \varepsilon^{-2}$, where $\varepsilon$ is the gap in the spectrum of $\L_S$. It follows from \eqref{m-1} and the definition of $\Lambda_0$ given in Proposition \ref{lsoprop} that ${\rm Im}\,\Lambda_0=\lim_{\eta\rightarrow 0_+} \Pi_0 I\bar P_\Omega X_\eta I\Pi_0$. The convergence speed is estimated in Lemma \ref{lemmac2},  \eqref{c20''}. Namely,
$$
\big| \SPn{\Pi_0\psi}{I\bar P_\Omega X_\eta I\Pi_0\psi}- {\rm Im}\, \SPn{\psi}{\Lambda_0\psi}\big| \le c\eta^{1/3} \|(1+\bar A^2)^{1/2}\bar P_\Omega I\Pi_0\psi\|^2\equiv c_5\eta^{1/3},
$$
where $c_5$ does not depend on $\psi$ (which is normalized).
Combining the last bound with \eqref{m30} and with 
$$
{\rm Im}\,\Lambda_0 = \Delta^{-2}\tau^{-1}\Pi_0(1-|\psi_{S,\beta}\rangle\langle\psi_{S,\beta}|) = \d^{-2}\tau^{-1}(\Pi_0-|\psi_{0,\rm KMS}\rangle\langle\psi_{0,\rm KMS}|)
$$ 
(see Proposition \ref{lsoprop} and where $\psi_{0,\rm KMS}=\psi_{S,\beta}\otimes\Omega$ is the unperturbed KMS state, \eqref{psiokms}), we obtain
\begin{equation}
\label{m-2}
q_0(\psi) \ge \textstyle\frac12\tau^{-1} \big( \|\Pi_0\psi\|^2- |\SPn{\psi_{0,\rm KMS}}{\psi}|^2\big) -\frac12 c_5\Delta^2\eta^{1/3}-\Delta^2 \eta^{-1} \|\bar P_0 P_\Omega\psi\|^2.
\end{equation}
We further decompose $\|\Pi_0\psi\|^2 = \|\psi\|^2-\|\bar P_\Omega\psi\|^2-\|\bar P_0 P_\Omega\psi\|^2$. Since $\psi$ is normalized, we arrive at the lower bound
\begin{equation}
\label{flb}
q_0(\psi) \ge \textstyle \frac12\tau^{-1} (1 -\|\bar P_\Omega\psi\|^2-\|\bar P_0 P_\Omega\psi\|^2- |\SPn{\psi_{0,\rm KMS}}{\psi}|^2) -\frac12 c_5\Delta^2\eta^{1/3}-\Delta^2 \eta^{-1} \|\bar P_0 P_\Omega\psi\|^2.
\end{equation}

\indent {\em The contradiction.\ } $\tau^{-1}$ is proportional to $\Delta^2$, see \eqref{t2}. We write $\tau^{-1}=\Delta^2\tau^{-1}_0$, with $\tau_0<\infty$ independent of $\Delta$. Combining the bounds \eqref{m29a} and \eqref{flb} and dividing by $\Delta^2$ gives 
\begin{eqnarray}
\textstyle \frac12\tau^{-1}_0 &\le&\textstyle  \frac12 \tau^{-1}_0|\SPn{\psi_{0,\rm KMS}}{\psi}|^2+  \textstyle  [\frac12\Delta^2\tau_0^{-1} +\eta +2|\d|+\eta^{-1}\d^2]\ \d^{-2}\|\bar P_\Omega\psi\|^2
\nonumber\\
&& \textstyle +[\frac12 \d^2\tau_0^{-1} +\eta^{-1}\d^2]\ \d^{-2}\|\bar P_0 P_\Omega\psi\|^2 +\frac12c_5\eta^{1/3}.
\label{m-3}
\end{eqnarray}
Suppose that $\L\psi=0$, $\|\psi\|=1$ and $\psi\perp\psi_{\rm KMS}$, where $\psi_{\rm KMS}$ is given in \eqref{2.12}. Then
\begin{equation}
\label{kmsconst}
|\SPn{\psi_{0,\rm KMS}}{\psi}| = |\SPn{\psi_{0,\rm KMS}-\psi_{\rm KMS}}{\psi}|\le \|\psi_{\rm KMS}-\psi_{0,\rm KMS}\|\le  |\d| c_{\rm KMS}.
\end{equation}
Here, an upper bound on $c_{\rm KMS}$ is readily obtained by estimating the power series expansion in $\d$ which relates $\psi_{\rm KMS}$ and $\psi_{0,\rm KMS}$, see e.g. \cite{A1} ($c_{\rm KMS}$ is proportional to $\beta$, the inverse temperature). Choosing $\eta=|\d|^{3/2}$ and using the bound \eqref{kmsconst} together with \eqref{m14} and \eqref{m14''} in \eqref{m-3} gives
$$
\textstyle\frac12 \tau_0^{-1}\le \d^2 \ \textstyle\frac12 \tau_0^{-1} (c^2_{\rm KMS}+c^2_3+4/\varepsilon^2) +|\d|^{3/2}c^2_3 +2|\d|c^2_3 +|\d|^{1/2}(c^2_3+4/\varepsilon^2+\frac12c_5).
$$
The latter inequality is violated for $|\d|<\d_0$, where 
\begin{equation}
\label{deltanot}
\d_0 := \min\big\{1, \big[ c^2_{\rm KMS}+c^2_3 +4/\varepsilon^2 +2\tau_0(4c^2_3+4/\varepsilon^2 +c_5/2)\big]^{-2}\big\}.
\end{equation}
This shows that $\L$ has a simple kernel if $|\d|<\d_0$.

To complete the proof one can proceed in two ways. One can either adapt the above argument to show directly instability of all nonzero eigenvalues of $\L_0$ under the perturbation $\d I$. Or one can invoke a general result, saying that if $\L$ has a simple kernel, then it does not have any nonzero eigenvalues \cite{JPNote}.
\end{proof}

\section{Proofs: Absolutely continuous Spectrum of $\L$}
\label{acsect}

To show that ${\rm spec} (\L)=\mathbb R$, we can use the Weyl criterion (see e.g. \cite{HiSi}, Theorem 5.10): $s\in\mathbb R$ is in the spectrum of $\L$ if and only if there is a sequence $\{\psi_n\}_n$ of normalized vectors in the domain of $\L$, satisfying $\lim_{n\rightarrow\infty}\|\L\psi_n -s\psi_n\|=0$. An explicit choice of $\psi_n$, for any $s\in\mathbb R$, is $\psi_n \propto a^*(f_n)\psi_{\rm KMS}$. Here, $\psi_{\rm KMS}$ is given in \eqref{2.12} and $f_n(u,\Sigma) = \sqrt{n/8\pi}\, \one_{[s-1/n, s+1/n]}(u)$.

\medskip

We now show {\em absolute continuity}. The spectrum of $\L$ in an interval $(a,b)\subset\mathbb R$ is purely absolutely continuous provided that for each vector $\varphi$ in some dense set, there is a constant $C(\varphi)<\infty$ such that 
\begin{equation}
\label{c2}
\liminf_{\epsilon\downarrow 0} \sup_{x\in(a,b)}\SPn{\varphi}{{\rm Im}(\L -x-i\epsilon)^{-1}\varphi}\leq C(\varphi).
\end{equation}
See for instance Proposition 4.1 of \cite{CFKS}. In order to control the boundary values of the resolvent, we expand it using the {\em Feshbach} map in \eqref{c3} below. For an operator $X$ acting on $\mathcal H$ we denote by $\bar X=\bar P X\bar P\upharpoonright_{{\rm Ran}\bar P}$ its restriction to the range of $\bar P=\one-P$, where $P=\one_{{\mathbb C}^2\otimes{\mathbb C}^2}\otimes |\Omega\rangle\langle\Omega|$ and $\Omega$ is the vacuum of \eqref{GluedFock}. For $z\in\mathbb C$ with ${\rm Im}z>0$, we define 
\begin{equation}
\label{c1}
\F(z) = P\big( \L-z -\Delta^2 I\bar P(\bar \L-z)^{-1}\bar PI\big)P,
\end{equation}
which we view as an operator on the range of $P$. The resolvent and reduced resolvent are related by
\begin{eqnarray}
(\L-z)^{-1} &=& (\bar\L-z)^{-1} +\F(z)^{-1} + (\bar\L-z)^{-1}\bar P\L P\F(z)^{-1} P\L\bar P (\L-z)^{-1}\nonumber\\
&& - \F(z)^{-1} P\L \bar P (\bar\L-z)^{-1} - (\bar\L-z)^{-1} \bar P\L P\F(z)^{-1}.
\label{c3}
\end{eqnarray}
Here, $(\bar\L-z)^{-1}$ is interpreted as an operator on ${\rm Ran}\bar P$. We have $P(\L-z)^{-1}P = \F(z)^{-1}$.

We introduce the family of norms
\begin{equation}
\label{norms}
\|\varphi\|_\kappa=\|\bar{N}^{-1/2}(1+\bar{A}^2)^{\kappa/2}\varphi\|,\quad  \varphi\in\bar P\mathcal H,\ \kappa\ge 0.
\end{equation}

\begin{thm}
\label{thmc0} 
Let $\varphi,\psi\in\dom(\bar{A}^2)\cap \dom(\bar N^{1/2})$, where $A=d\Gamma(-i\partial_u)$. Then 
$$
\left| \partial_z \SPn{\varphi}{(\bar\L-z)^{-1}\psi}\right|\le C \big( \|\varphi\|_2+\|\bar N\varphi\|_1\big)\big(\|\psi\|_2 +\|\bar N\psi\|_1\big),
$$
where $C$ is independent of $z\in{\mathbb C}_+$ and $\varphi,\psi$.
\end{thm}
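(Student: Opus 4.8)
The plan is a conjugate-operator (Mourre) argument on the reduced space $\bar P\mathcal{H}$, the conjugate operator being the restriction $\bar A$ of $A=d\Gamma(-i\partial_u)$ to $\bar P\mathcal{H}$. First I would record a \emph{global} positive commutator estimate. Letting $\nu\downarrow 0$ in \eqref{m7} (the constants there are $\nu$-independent and $\C_\nu\uparrow N$), one gets, on $\bar P\mathcal{H}$, $i[\bar A,\bar\L]\geq(1-\alpha|\d|c_1)\bar N-|\d|(c_1(4\alpha)^{-1}+c_2)$ for every $\alpha>0$; since $\bar N\geq\bar P$ on ${\rm Ran}\,\bar P$, fixing $\alpha$ small and taking $|\d|$ small yields $i[\bar A,\bar\L]\geq\theta\bar P$ for some $\theta>0$. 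The estimate being gapped, no energy localization is needed and one is in the most favorable setting of Mourre theory (in particular $\bar\L$ has empty point spectrum). Next I would verify the $C^2$-type regularity of $\bar\L$ with respect to $\bar A$: the first commutator $i[\bar A,\bar\L]=\bar N+\d\,i[\bar A,I]$, with $i[\bar A,I]$ given explicitly by \eqref{m13}, is $\bar N^{1/2}$-bounded, whereas the second commutator, obtained by commuting \eqref{m13} once more with $\bar A$ and hence built from terms like $W(2f_\beta)\phi(f''_\beta)\phi(f'_\beta)$ and $W(2f_\beta)\phi(f''_\beta)$, is only $\bar N$-bounded. It is precisely here that the strengthened hypothesis $\alpha>2$ enters, placing the $f''_\beta$-type functions in $L^2$.

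I would then reduce the theorem to a weighted resolvent bound. Since $\partial_z(\bar\L-z)^{-1}=(\bar\L-z)^{-2}$ and $\bar\L$ is self-adjoint on $\bar P\mathcal{H}$, it suffices to bound $(\bar\L-z)^{-2}$, conjugated by the weights $\bar N^{1/2}(1+\bar A^2)^{-1}$ and their $\bar N$-shifted companions, uniformly in $\Im z>0$; here it is convenient that $\bar A$ and $\bar N$ commute, both being second quantizations of commuting one-particle operators. The presence of \emph{two} summands, $\|\varphi\|_2$ and $\|\bar N\varphi\|_1$, on the right reflects that commuting a resolvent past a power of $\langle\bar A\rangle$ produces, through $i[\bar A,\bar\L]=\bar N+\d\,i[\bar A,I]$, a term with one extra power of $\bar N$; the estimate thus closes only on the intersection $\dom(\langle\bar A\rangle^2\bar N^{-1/2})\cap\dom(\langle\bar A\rangle\bar N^{1/2})$. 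To get the weighted bound I would run the standard Mourre differential inequality in $z$, but on the regularization $\bar\L(\eta)$, $\eta>0$, with $\bar\L(0)=\bar\L$, as introduced in \cite{Mourre}: for $\eta>0$ the regularized resolvent is a priori well-behaved, and the estimates above give $|\partial_z\langle\varphi,(\bar\L(\eta)-z)^{-1}\psi\rangle|\leq C(\|\varphi\|_2+\|\bar N\varphi\|_1)(\|\psi\|_2+\|\bar N\psi\|_1)$ with $C$ independent of $\eta$ and of $\Im z>0$, together with H\"older continuity in $\eta$ of this quantity on the dense set $\dom(\bar A^2)\cap\dom(\bar N^{1/2})$, uniformly in $\Im z>0$. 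Letting $\eta\downarrow 0$ then gives the theorem.

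The main obstacle is the difficulty stressed in the introduction: the second commutator is $\bar N$-bounded, not $\bar N^{1/2}$-bounded, so the customary Mourre bookkeeping — which tacitly treats all iterated commutators with the conjugate operator as $\bar N^{1/2}$-bounded, the same order as the interaction — breaks down. One must instead carry $\bar N$-weights through every step, controlling $\bar N^{1/2}(\bar\L-z)^{-1}$ and $\bar N(\bar\L-z)^{-1}$ simultaneously with the $\langle\bar A\rangle$-weights. This is feasible because $i[\bar N,\bar\L]$ is itself only $\bar N^{1/2}$-bounded (Lemma \ref{Lem:CommutatorN}), so the number operator is nearly conserved by the resolvent; but keeping the powers of $\bar N$ and of $\langle\bar A\rangle$ in balance — and, in particular, checking that the H\"older-in-$\eta$ estimates survive in the presence of these extra weights — is the delicate part. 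The Mourre estimate, the explicit commutators, and the differential inequality itself are otherwise routine.
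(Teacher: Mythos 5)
Your overall blueprint matches the paper's: work on $\bar P\mathcal H$ with the conjugate operator $\bar A$, exploit that $i[\bar A,\bar\L]$ is bounded below (essentially $\bar N\ge\one$ on $\bar P\mathcal{H}$), regularize $\bar\L$ to $\bar\L(\eta)$, obtain uniform-in-$\eta$ bounds and H\"older continuity in $\eta$, and track powers of $\bar N$ carefully since the twice-iterated commutator with $\bar A$ is only $\bar N$-bounded (whence the $\alpha>2$ hypothesis, and the mixed norms $\|\varphi\|_2$, $\|\bar N\varphi\|_1$ in the conclusion). These are exactly the right ingredients, and you have correctly located where the difficulty is.

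However, there is one genuine gap, and it sits precisely at the step you gloss over with ``the regularization $\bar\L(\eta)$, as introduced in \cite{Mourre}.'' The plain Mourre regularization $\bar\L-i\eta\bar N$ does \emph{not} close the differential inequality here. With that choice one has $\partial_\eta\bar\L(\eta)-[\bar A,\bar\L(\eta)]=-\Delta[\bar A,\bar I]$, which is $\bar N^{1/2}$-bounded but not small as $\eta\downarrow 0$; plugging into $\partial_\eta G_{\varphi,\psi,z}(\eta)=-\SPn{\varphi}{R_z(\eta)[\partial_\eta\bar\L(\eta)]R_z(\eta)\psi}$ and using $\|\bar N^{1/2}R_z(\eta)\|\lesssim\eta^{-1/2}$ gives a term of order $|\Delta|\eta^{-1}|G|$, which is not Gronwall-integrable down to $\eta=0$. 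The paper's actual device, in \eqref{c5}--\eqref{c6}, is to also smooth the interaction: $\bar I(\eta)=(2\pi)^{-1/2}\int\widehat f(s)\,\tau_{\eta s}(\bar I)\,ds$ with a Schwartz function $f$ satisfying $f^{(k)}(0)=1$ for $k=0,\dots,3$. These moment conditions make the error operator $K(\eta)=[\bar A,\bar I(\eta)]-\partial_\eta\bar I(\eta)$ vanish to high order at $\eta=0$ (Lemma \ref{c-Lem-MK}, giving $O(\eta^{\alpha-1})$ in form sense), and only then does the $\eta$-differential inequality close, first for $G$ (Lemma \ref{lemmac2}) and then for $H=\partial_z G$ (with the additional $\bar N^{l/2}R_z(\eta)$ bootstrap of Lemma \ref{lemmac5}). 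Without this smoothing of $\bar I$, the scheme fails. A secondary, minor point: the differential inequality actually used is in $\eta$, not in $z$; $z$ enters only passively through $H=\partial_z G$, while the integration that produces the uniform bound and the H\"older estimate is $\int d\eta$. Your phrase ``Mourre differential inequality in $z$'' is therefore imprecise, though it seems you do intend the $\eta$-scheme given the rest of the sketch.
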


Theorem \ref{thmc0}, together with the fact that ${\rm Ran} IP\subset\dom(A^2)\cap\dom(N^{1/2})$, implies that $\F(z)$, \eqref{c1}, extends continuously to ${\rm Im}z\ge 0$ (as a function with values in the operators on ${\rm Ran}P$). We denote its value for $x\in\mathbb R$ by $\F(x)$.
\begin{thm}
\label{thmc2}
For any real $x_0\neq 0$ there exist $r(x_0)>0$ and $c(x_0)<\infty$ such that 
\begin{equation}
\label{c4}
\| \F(x)^{-1}\|\le c(x_0)\quad \mbox{for all $x$ such that \, $|x-x_0|<r(x_0)$}.
\end{equation}
\end{thm}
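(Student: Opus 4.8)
The plan is to combine the regularity statement of Theorem~\ref{thmc0} with the spectral information of Theorem~\ref{eigenvaluethm}. First, recall that (as observed right after Theorem~\ref{thmc0}, using $\Ran(IP)\subset\dom(\bar A^2)\cap\dom(\bar N^{1/2})$) the $4\times4$ matrix–valued function $z\mapsto\F(z)$ on $\Ran P$ extends continuously to $\overline{\CC_+}$; in fact it is Lipschitz up to the real axis. Indeed, writing $\F(z)=P\L P-zP-\Delta^2\,PI\bar P(\bar\L-z)^{-1}\bar PIP$, every entry of $\partial_z\F(z)$ is, up to the constant $-1$ coming from $-zP$, of the form $-\Delta^2\partial_z\langle \bar PIPe_i,(\bar\L-z)^{-1}\bar PIPe_j\rangle$ with $\{e_i\}$ an orthonormal basis of $\Ran P$, and these are bounded uniformly in $z\in\CC_+$ by Theorem~\ref{thmc0}. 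Hence, setting $\F(x):=\lim_{\eta\downarrow0}\F(x+i\eta)$, we get $\|\F(x_0+i\eta)-\F(x_0)\|\le C\eta$ with $C$ independent of $x_0\in\RR$ and $\eta>0$. Since $\Ran P$ is four–dimensional and $x\mapsto\F(x)$ is continuous, the bound \eqref{c4} will follow once we show that $\F(x_0)$ is invertible for every $x_0\neq0$: then $x\mapsto\det\F(x)$ is continuous and nonzero at $x_0$, so $x\mapsto\|\F(x)^{-1}\|$ is bounded on an interval $|x-x_0|<r(x_0)$. Note that this reduces the whole theorem to a single pointwise invertibility statement, with no case distinction according to the location of $x_0$.

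To prove that $\F(x_0)$ is invertible for fixed $x_0\neq0$, we argue by contradiction: suppose $\F(x_0)\phi=0$ for some $\phi\in\Ran P$, $\phi\neq0$. For $\eta>0$ put
\[
\psi_\eta=\phi-\Delta\,(\bar\L-x_0-i\eta)^{-1}\bar PIP\phi\ \in\ \dom(\L).
\]
The standard Feshbach computation (using $\bar P\L_0P=0$, so that $\bar P\L P=\Delta\bar PIP$ and $P\L\bar P=\Delta PI\bar P$, together with $(\L-z)(\bar\L-z)^{-1}g=g+\Delta PI\bar P(\bar\L-z)^{-1}g$ for $g\in\Ran\bar P$) yields $(\L-x_0-i\eta)\psi_\eta=\F(x_0+i\eta)\phi$. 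Because $\L$ is self–adjoint and $x_0$ is real, $\|(\L-x_0-i\eta)\psi\|^2=\|(\L-x_0)\psi\|^2+\eta^2\|\psi\|^2$ for $\psi\in\dom(\L)$, whence
\[
\eta^2\|\psi_\eta\|^2\ \le\ \|(\L-x_0-i\eta)\psi_\eta\|^2\ =\ \|\F(x_0+i\eta)\phi\|^2\ \le\ C^2\eta^2\|\phi\|^2,
\]
where we used $\F(x_0)\phi=0$ and the Lipschitz bound of the first paragraph. Thus $\sup_{\eta>0}\|\psi_\eta\|\le C\|\phi\|<\infty$. Moreover $P\psi_\eta=\phi$ for all $\eta$, since the second summand lies in $\Ran\bar P$; in particular $\|\psi_\eta\|\ge\|\phi\|>0$.

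Now take $\eta_k\downarrow0$ with $\psi_{\eta_k}\rightharpoonup\psi_*$ weakly. Then $P\psi_*=\phi\neq0$, so $\psi_*\neq0$. From $\L\psi_\eta=(x_0+i\eta)\psi_\eta+\F(x_0+i\eta)\phi$, boundedness of $\{\psi_\eta\}$ and $\F(x_0+i\eta)\phi\to0$ strongly, we get $\L\psi_{\eta_k}\rightharpoonup x_0\psi_*$; since the graph of the closed operator $\L$ is weakly closed, $\psi_*\in\dom(\L)$ and $\L\psi_*=x_0\psi_*$. Hence $x_0\neq0$ would be an eigenvalue of $\L$, contradicting Theorem~\ref{eigenvaluethm}. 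Therefore $\F(x_0)$ is invertible, and, as explained above, \eqref{c4} follows.

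The main obstacle — and the reason Theorem~\ref{thmc0} is formulated as a bound on $\partial_z(\bar\L-z)^{-1}$ rather than as mere boundary continuity — is exactly the uniform bound $\sup_\eta\|\psi_\eta\|<\infty$. On the real axis $(\bar\L-x_0-i0)^{-1}$ is unbounded, so $\psi_\eta$ need not converge and the naive Feshbach ``isospectrality on $\RR$'' cannot be invoked directly; but the self–adjointness identity $\|(\L-x_0-i\eta)\psi\|^2=\|(\L-x_0)\psi\|^2+\eta^2\|\psi\|^2$ converts the Lipschitz decay $\|\F(x_0+i\eta)\phi\|=O(\eta)$ into boundedness of $\{\psi_\eta\}$, which is precisely what is needed to extract a genuine eigenvector of $\L$ in the weak limit and thereby reduce invertibility of $\F(x_0)$ to the already established absence of nonzero eigenvalues of $\L$.
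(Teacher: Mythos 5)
Your argument is correct, and it reaches the same conclusion via a genuinely different mechanism than the paper. The paper starts from the operator identity $\F(x)-iy=\F(x+iy)\{\one+\F(x+iy)^{-1}(\F(x)-iy-\F(x+iy))\}$, uses $\F(x+iy)^{-1}=P(\L-x-iy)^{-1}P$ together with the weak convergence $iy(\L-x-iy)^{-1}\to 0$ (valid because $x\neq0$ is not an eigenvalue) to conclude $iy\F(x+iy)^{-1}\to0$, and then invokes the Lipschitz bound from Theorem~\ref{thmc0} to make a Neumann-series argument giving $\|(\F(x)-iy)^{-1}\|\le(2y)^{-1}$ for small $y$, from which a nontrivial kernel of $\F(x)$ is excluded by a scaling contradiction. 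You instead assume $\F(x_0)\phi=0$, build the candidate eigenvector $\psi_\eta=\phi-\Delta(\bar\L-x_0-i\eta)^{-1}\bar PIP\phi$ explicitly, show via the Feshbach identity $(\L-x_0-i\eta)\psi_\eta=\F(x_0+i\eta)\phi$ and self-adjointness that $\{\psi_\eta\}$ is bounded, and then extract a genuine eigenvector in the weak limit using weak closedness of $\Graph(\L)$; the contradiction comes again from the absence of nonzero eigenvalues. Both approaches rely on the same two ingredients (Lipschitz regularity of $\F$ up to the axis and Theorem~\ref{eigenvaluethm}), but the paper's version stays entirely on the resolvent level, while yours makes explicit the underlying Feshbach ``isospectrality'' by exhibiting the eigenvector $\psi_*$, which is somewhat more transparent. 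The one caveat worth remarking is that you should justify $\psi_\eta\in\dom(\L)$: this holds because $(\bar\L-z)^{-1}\bar PIP\phi\in\dom(\bar\L)$, and since $\bar\L g=\L_0 g+\Delta\bar P I g$ with $I$ bounded, $\dom(\bar\L)\subset\dom(\L_0)=\dom(\L)$.
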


Theorems \ref{thmc0} and \ref{thmc2}, together with the expansion \eqref{c3}, show that \eqref{c2} is satisfied for all $(a,b)$ not containing the origin. This means that the absolutely continuous spectrum of $\L$ (which is a closed set) is $\mathbb R$ and that the singular continuous spectrum is empty.

\subsection{Proof of Theorem \ref{thmc0}}

Let $a$ and $b$ be expressions depending on the quantities $z\in\CC_+$, $\eta>0$, $\d\in\RR$. We use the notation 
\begin{equation}
\label{notation}
a \ple b
\end{equation}
to mean that there is a constant $c$ which does not depend on any of the above quantities, such that $a\le cb$. We introduce a regularization $\bar\L(\eta)$, $\eta>0$, defined as follows. The domain is $\dom(\bar\L(\eta))=\dom(\bar \L_0)\cap \dom(\bar N)$ and 
\begin{equation}
\label{c5}
\bar\L(\eta) = \bar\L_0-i\eta\bar N +\Delta \bar{I}(\eta),\qquad \mbox{with}\qquad \bar I(\eta)=(2\pi)^{-1/2}\int_{\mathbb R} \widehat f(s) \tau_{\eta s}(\bar I)d s,
\end{equation}
and where $\tau_t(X)=e^{it \bar A}Xe^{-it \bar A}$. Here, $f$ is a Schwartz function satisfying $f(0)=f'(0)=f''(0)=f'''(0)=1$. In terms of the Fourier transform, this means 
\begin{equation}
\label{c6}
(2\pi)^{-1/2}\int_{\mathbb R} (is)^k\widehat f(s) d s=1,\quad\mbox{ for } k=0,1,2,3.
\end{equation}
We show in the next result some properties of $\bar\L(\eta)$. In particular, $\bar\L(\eta)-z$ is invertible for $\eta>0$, ${\rm Im}z>-\eta/2$. We denote the resolvent by
$$
R_z(\eta):=(\bar\L(\eta)-z)^{-1}.
$$
\begin{lem}
\label{lemmac1} There is a $\Delta_0>0$, independent of $\eta>0$, such that for $|\Delta|<\Delta_0$:

{\rm \bf (1)}  $-2\eta\bar N\le {\rm Im} \bar\L(\eta)\le -\frac{\eta}{2}\bar N$. In particular, any $z$ with ${\rm Im}z>-\eta/2$  is in the resolvent set of $\bar \L(\eta)$. Moreover, for such $z$, $\|(\bar\L(\eta)-z)^{-1}\|\le (\eta/2+{\rm Im}z)^{-1}$.

{\rm \bf(2)} For $\eta>0$ and ${\rm Im}z>-\eta/2$, we have ${\rm Ran}R_z(\eta)\subset\big( \dom(\bar N)\cap \dom(\bar\L_0)\big)$ and $R_z(\eta)$ leaves $\dom(\bar A)$ invariant.

{\rm \bf(3)} For all $\psi\in\bar P\mathcal H$ and all $z\in\mathbb C_+$, we have $\lim_{\eta\rightarrow 0_+}R_z(\eta)\psi=(\bar \L-z)^{-1}\psi$.

{\rm \bf(4)} For all $\psi\in\bar P\mathcal H$ we have $\|\bar{N}^{1/2}R_z(\eta)\psi\|\le \sqrt{2}\eta^{-1/2}\,|\SPn{\psi}{R_z(\eta)\psi}|^{1/2}$ and $\|\bar{N}^{1/2}R_z(\eta)\psi\|\le 2\eta^{-1}\|\bar{N}^{-1/2}\psi\|$. The same estimates hold for $R_z(\eta)$ replaced by $R_z(\eta)^*$.
\end{lem}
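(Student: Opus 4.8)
The plan is to establish items (1)--(4) in turn, using throughout three elementary structural facts: $\bar A=d\Gamma(-i\partial_u)$ and $\bar N=d\Gamma(\one)$ commute (both being second quantizations of commuting one-particle operators), so $e^{it\bar A}$, and hence each $\tau_t(\cdot)=e^{it\bar A}\cdot e^{-it\bar A}$, commutes with every function of $\bar N$; $\bar N$ commutes with $\bar\L_0$; and $i[\bar A,\bar\L_R]=\bar N$ on $\dom(\bar N)$, cf.\ \eqref{m1}. Note also that $\bar\L_0-i\eta\bar N$ is normal on $\dom(\bar\L_0)\cap\dom(\bar N)$, so $\bar\L(\eta)$ is closed on that domain. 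For \emph{item (1)}, since $\bar\L_0$ is self-adjoint one has $\Im\bar\L(\eta)=-\eta\bar N+\Delta\,\Im\bar I(\eta)$. To estimate the last term I would use $f(0)=1$, i.e.\ \eqref{c6} with $k=0$, to write $\bar I(\eta)-\bar I=(2\pi)^{-1/2}\int_\RR\widehat f(s)\int_0^{\eta s}\tau_t\big(i[\bar A,\bar I]\big)\,dt\,ds$. By the $\nu=0$ instance of Lemma \ref{LemOperators} (equivalently \eqref{m13}), $i[\bar A,\bar I]$ is $\bar N^{1/2}$-bounded on both sides, and since $\tau_t$ commutes with $\bar N$ this gives $\|\bar N^{-1/2}(\bar I(\eta)-\bar I)\bar N^{-1/2}\|\ple\eta$, the $s$-integral being finite because $f$ is Schwartz. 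Hence $\pm\Delta\,\Im\bar I(\eta)\le C|\Delta|\,\eta\bar N$ as quadratic forms on $\dom(\bar N)$, with $C$ independent of $\eta$, so the choice $\Delta_0:=(2C)^{-1}$ yields $-2\eta\bar N\le\Im\bar\L(\eta)\le-\tfrac\eta2\bar N$. On $\bar P\cH$ one has $\bar N\ge\one$, whence $\Im\SPn{\varphi}{(\bar\L(\eta)-z)\varphi}\le-(\tfrac\eta2+\Im z)\|\varphi\|^2$; the mirror estimate for $\bar\L(\eta)^*-\bar z$ then shows $\bar\L(\eta)-z$ is boundedly invertible with $\|R_z(\eta)\|\le(\tfrac\eta2+\Im z)^{-1}$ for $\Im z>-\eta/2$.

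For \emph{item (2)}, the inclusion $\Ran R_z(\eta)\subset\dom(\bar\L_0)\cap\dom(\bar N)$ is immediate from $\Ran R_z(\eta)=\dom(\bar\L(\eta))$. For invariance of $\dom(\bar A)$ I would show that the form commutator $[\bar A,R_z(\eta)]=R_z(\eta)\big(i\bar N-\Delta\,[\bar A,\bar I(\eta)]\big)R_z(\eta)$ extends to a bounded operator: $\bar N R_z(\eta)$ is bounded by the closed graph theorem (as $R_z(\eta)$ maps $\bar P\cH$ into $\dom(\bar N)$), hence so is $(\bar N+\one)^{1/2}R_z(\eta)$, and $[\bar A,\bar I(\eta)]=(2\pi)^{-1/2}\int_\RR\widehat f(s)\,\tau_{\eta s}\big(i[\bar A,\bar I]\big)\,ds$ is $(\bar N+\one)^{1/2}$-bounded on the left, uniformly in $\eta$. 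The standard criterion --- regularize $\bar A$ by $\bar A_\delta=\bar A(\one+i\delta\bar A)^{-1}$, estimate $\sup_{\delta>0}\|\bar A_\delta R_z(\eta)\psi\|$ for $\psi\in\dom(\bar A)$ using this bounded commutator, and pass $\delta\to0$, see e.g.\ \cite{ABG} --- then gives $R_z(\eta)\dom(\bar A)\subset\dom(\bar A)$.

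For \emph{item (3)}, fix $z\in\CC_+$ and let $\core$ be the dense set of finite-particle vectors with smooth, compactly supported (in the radial variables) wavefunctions, which is a core for $\bar\L$ contained in $\dom(\bar N)$. For $\varphi\in\core$, $(\bar\L(\eta)-\bar\L)\varphi=-i\eta\bar N\varphi+\Delta(\bar I(\eta)-\bar I)\varphi\to0$ as $\eta\to0_+$: the first term because $\bar N\varphi$ is a fixed vector, the second by dominated convergence in $s$ (since $\|(\tau_{\eta s}(\bar I)-\bar I)\varphi\|\to0$ pointwise, dominated by $2\|\varphi\|\,|\widehat f(s)|$). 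Hence $R_z(\eta)(\bar\L-z)\varphi=\varphi-R_z(\eta)(\bar\L(\eta)-\bar\L)\varphi\to(\bar\L-z)^{-1}(\bar\L-z)\varphi$ by the bound from (1), and since $(\bar\L-z)\core$ is dense and the $R_z(\eta)$ are uniformly bounded, $R_z(\eta)\to(\bar\L-z)^{-1}$ strongly. For \emph{item (4)}, set $\varphi=R_z(\eta)\psi\in\dom(\bar N)$; then $\SPn{\varphi}{(\bar\L(\eta)-z)\varphi}=\overline{\SPn{\psi}{R_z(\eta)\psi}}$, and taking imaginary parts, using (1) and $\Im z\ge0$, gives $\tfrac\eta2\|\bar N^{1/2}\varphi\|^2\le\Im\SPn{\psi}{R_z(\eta)\psi}\le|\SPn{\psi}{R_z(\eta)\psi}|$, which is the first bound; the second follows from it by Cauchy--Schwarz, $|\SPn{\psi}{R_z(\eta)\psi}|\le\|\bar N^{-1/2}\psi\|\,\|\bar N^{1/2}R_z(\eta)\psi\|$. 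Since by (1) $\Im\bar\L(\eta)^*=\eta\bar N-\Delta\,\Im\bar I(\eta)$ obeys $\tfrac\eta2\bar N\le\Im\bar\L(\eta)^*\le2\eta\bar N$ and $\Im(-\bar z)=\Im z\ge0$, the identical argument applies to $R_z(\eta)^*=(\bar\L(\eta)^*-\bar z)^{-1}$.

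The only genuinely non-routine step is the invariance of $\dom(\bar A)$ under $R_z(\eta)$ in item (2): in contrast to the usual Pauli--Fierz setting, $i[\bar A,\bar I]$ is only $\bar N^{1/2}$-bounded rather than bounded, so one must carry the $\bar N$-factors through the commutator calculus and rely on the $\eta$-dependent bound $\|(\bar N+\one)R_z(\eta)\|<\infty$ together with the $C_0$-group regularization of $\bar A$. Everything else reduces to item (1), the commutativity of $\bar A$, $\bar N$ and $\bar\L_0$, and the form estimate \eqref{m13}.
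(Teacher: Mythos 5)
Your proof is correct and follows the paper's overall strategy closely; items (1) and (4) are essentially identical to the paper's argument, resting on $f(0)=1$, the $\bar N^{1/2}$-boundedness of $i[\bar A,\bar I]$, and the coercivity of $\Im\bar\L(\eta)$. Items (2) and (3) are where you diverge, and it is worth noting how.

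For the $\dom(\bar A)$-invariance in item (2), the paper writes the conjugated resolvent explicitly, $e^{it\bar A}R_z(\eta)\psi=(\bar\L_0+(t-i\eta)\bar N+\Delta\tau_t(\bar I(\eta))-z)^{-1}e^{it\bar A}\psi$, and differentiates at $t=0$ via the second resolvent identity for the difference quotient. You instead exhibit the form commutator $[\bar A,R_z(\eta)]=R_z(\eta)\big(i\bar N-\Delta[\bar A,\bar I(\eta)]\big)R_z(\eta)$ as a bounded operator (using $\bar N R_z(\eta)$ bounded and $[\bar A,\bar I(\eta)]$ being $\bar N^{1/2}$-bounded) and appeal to the $C_0$-group regularization $\bar A_\delta=\bar A(\one+i\delta\bar A)^{-1}$. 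These are two well-known packagings of the same computation, and both are fine; the paper's is slightly more explicit, yours is closer to the standard commutator-methods template. Small typo: $[\bar A,\bar I(\eta)]=(2\pi)^{-1/2}\int\widehat f(s)\,\tau_{\eta s}([\bar A,\bar I])\,ds$ without the extra $i$, but this does not affect the boundedness argument.

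For item (3), the paper proves strong convergence first at the single point $z=i$, by showing $(\bar\L-i)^{-1}$ preserves $\dom(\bar N)$ (via a Neumann series argument and boundedness of $\bar N^{-1}\bar I\bar N$) so that the second resolvent identity applied on $\dom(\bar N)$ makes sense and gives convergence, with a density-plus-uniform-boundedness step; general $z\in\CC_+$ is then handled by a power-series argument as in Reed--Simon. You instead invoke a core $\core\subset\dom(\bar\L)\cap\dom(\bar N)$ of finite-particle smooth vectors, show $(\bar\L(\eta)-\bar\L)\varphi\to0$ on $\core$ by dominated convergence, and conclude from $R_z(\eta)(\bar\L-z)\varphi=\varphi-R_z(\eta)(\bar\L(\eta)-\bar\L)\varphi$ together with density of $(\bar\L-z)\core$ and the uniform bound from (1). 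Your route works for all $z\in\CC_+$ at once and avoids having to show that $(\bar\L-i)^{-1}$ preserves $\dom(\bar N)$, which is a genuine simplification; the one claim you leave implicit is that the stated $\core$ really is a core for $\bar\L$, which is true but should be stated and justified (it suffices that finite-particle smooth compactly supported vectors form a core for $d\Gamma(u)$, hence for $\bar\L_0$, hence for $\bar\L=\bar\L_0+\Delta\bar I$ with $\bar I$ bounded).
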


{\em Proof of Lemma \ref{lemmac1}. } (1) Using \eqref{c6} with $k=0$ we can write
\begin{equation}
\label{c8}
{\rm Im}\bar\L(\eta) = -\eta\bar N^{1/2}\Big( \one +\frac{\Delta}{\sqrt{2\pi}\,\eta}{\rm Im}\int_{\mathbb R}\widehat f(s) \bar N^{-1/2}(\tau_{\eta s}(\bar I)-\bar I)\bar N^{-1/2} d s\Big)\bar N^{1/2}.
\end{equation}
Now $\tau_{\eta s}(\bar I)-\bar I=\int_0^{\eta s}\partial_{s'}\tau_{s'}(\bar I) d s' = i\int_0^{\eta s} \tau_{s'}([\bar A,\bar I]) d s'$. Thus we have $\|\bar N^{-1/2}(\tau_{\eta s}(\bar I)-\bar I)\bar N^{-1/2}\|\le \eta|s|\,\|\bar N^{-1/2}[A,I]\bar N^{-1/2}\|$. The expression \eqref{m13} shows that the latter norm is bounded above by a constant $C'$. Therefore, \eqref{c8} implies $-\eta(1+C|\d|)\bar N\le {\rm Im}\bar\L(\eta)\le-\eta (1-C|\Delta|)\bar N$ for $\d$ small and where $C=C'(2\pi)^{-1/2}\int_{\mathbb R}|s \widehat f(s)|ds$. This gives the bound on ${\rm Im}\bar\L(\eta)$. Now
\begin{equation}
\label{f1}
\|\psi\|\,\|(\bar \L(\eta)-z)\psi\| \ge |\SPn{\psi}{(\bar\L(\eta)-z)\psi}|\ge  |{\rm Im} \SPn{\psi}{(\bar\L(\eta)-z)\psi}|\ge (\eta/2+{\rm Im}z)\|\psi\|^2.
\end{equation}
In the same way $\|(\bar \L(\eta)-z)^*\psi\|\ge (\eta/2+{\rm Im}z)\|\psi\|$. For ${\rm Im}z>-\eta/2$, $(\bar \L(\eta)-z)^*$ has trivial kernel and so ${\rm Ran}(\bar \L(\eta)-z)$ is dense. However, due to \eqref{f1} and since $\bar\L(\eta)-z$ is a closed operator, ${\rm Ran}(\bar \L(\eta)-z)$ is also closed, so it is all of $\bar P\mathcal H$. Therefore, the inverse of $\bar \L(\eta)-z$ is defined on the whole space and by the closed graph theorem it is bounded. The bound is obtained from \eqref{f1}. This shows (1).

To prove the first part of (2), note that $(\bar \L_0-i\eta\bar N-i) R_z(\eta) = \one +(-\Delta I(\eta)+z-i)R_z(\eta)$ is bounded. Hence $\bar \L_0R_z(\eta)=\bar \L_0 (\bar \L_0-i\eta\bar N-i)^{-1} (\bar \L_0-i\eta\bar N-i) R_z(\eta)$ is bounded as well. In the same way, $\bar N R_z(\eta)$ is bounded. It remains to show that $R_z(\eta)$ leaves $\dom(\bar A)$ invariant. For this, it suffices to prove that the derivative $\partial_t|_{t=0}$ of
$$
e^{i tA} R_z(\eta)\psi = \Big( \bar \L_0+(t-i\eta)\bar N +\Delta \tau_{t}(\bar I(\eta))-z\Big)^{-1} e^{it A}\psi
$$
exists, for any $\psi\in\dom(\bar A)$. One only needs to check that the derivative of the resolvent, at $t=0$, is bounded. This can be done easily by writing the derivative as the limit of the difference quotient and using the second resolvent equation for the numerator of the quotient. (2) follows.

(3) It suffices to show the result for any single, fixed $z_0$ in the upper half plane, e.g. $z_0=i$. This fact is seen by proceeding as in the proof of Theorem VIII.19 of \cite{RS1}, by expanding the resolvents in a power series in $z-z_0$. (Note that in the above reference, only self-adjoint operators are considered, but all that counts in the argument is the bound on the resolvent which we have established in point (1) of the present lemma). Let us show the result for $z=i$ now. First we note that $(\bar\L-i)^{-1}$ leaves $\dom(\bar N)$ invariant. A proof of this is obtained by expanding $(\bar\L-i)^{-1}$ into its Neumann series (in powers of $\Delta$) and using that $\bar N$ commutes with $(\bar\L_0-i)^{-1}$ and $\bar N^{-1} \bar I \bar N$ is bounded, so that $\bar N^{-1} (\bar\L-i)^{-1}\bar N$ is bounded. Therefore, for $\psi\in\dom(\bar N)$,
\begin{equation}
\label{c-1}
\big( (\bar\L(\eta)-i)^{-1}  - (\bar \L-i)^{-1}\big)\psi = (\bar\L(\eta)-i)^{-1}\big(i\eta\bar N -\Delta \bar I(\eta)+\Delta \bar I\big)(\bar \L-i)^{-1}\psi\rightarrow 0,
\end{equation}
as $\eta\rightarrow 0_+$. Finally, since $\dom(\bar N)$ is dense in $\bar P\mathcal H$ and $\|(\bar\L(\eta)-i)^{-1}  - (\bar \L-i)^{-1}\|\le 2$, \eqref{c-1} is valid for all $\psi\in\bar P\mathcal H$. This proves (3). 

(4) Due to (a), we have $\bar N\le -2\eta^{-1}{\rm Im}\bar\L(\eta)$, so
\begin{align*}
\|\bar{N}^{1/2}R_z(\eta)\psi\|^2&\le 2 \eta^{-1}\,|\SPn{R_z(\eta)\psi}{\Im(\bar{\L}(\eta)-z)R_z(\eta)\psi}|\\ 
\nonumber
&\le 2 \eta^{-1}\,|\SPn{\psi}{R_z(\eta)\psi}|\le  2 \eta^{-1}\,\|\bar{N}^{1/2}R_z(\eta)\psi\|\,\|\bar{N}^{-1/2}\psi\|.
\end{align*}
The estimate for $R_z(\eta)$ replaced by $R_z(\eta)^*$ is obtained in the same way. This shows (4) and concludes the proof of Lemma \ref{lemmac1}.\hfill\qed

\medskip

The operator
\begin{equation}
\label{keta}
K(\eta) :=[\bar{A},\bar{I}(\eta)]-\partial_\eta \bar{I}(\eta) 
= (2\pi)^{-1/2}\int_{\mathbb R} (1-is)\widehat f(s) \tau_{\eta s}([\bar A,\bar I]) d s,
\end{equation}
defined on $\dom(\bar N^{1/2})$, has the following properties.
\begin{lem}
\label{c-Lem-MK} 
Let $\varphi,\psi\in \bar P{\mathcal H}$.

(a) Assume $\||\partial_u|^{\alpha}f_\beta \|<\infty$ for some $1\le \alpha \le 2$. Then
\begin{equation}\label{c-MK6}
|\SPn{\varphi}{ K(\eta)\,\psi}|\le c\, \eta^{\alpha-1} \,||\bar{N}^{1/2}\varphi\|\,||\bar{N}^{1/2}\psi\|.
\end{equation}

(b) Assume $\||\partial_u|^{\alpha}f_\beta \|<\infty$ for some $2\le \alpha\le 3$. Then 
\begin{equation}\label{c-MK6a}
|\SPn{\varphi}{ K(\eta)\,\psi}|\le c\, \eta^3 \,||\bar{N}^{3/2}\varphi\|\,||\bar{N}^{1/2}\psi\|+c\,\eta^{\alpha-1} \,||\bar{N}^{1/2}\varphi\|\,||\bar{N}^{1/2}\psi\|.
\end{equation}
The constant $c$ does not depend on $\eta$. Both (a) and (b) hold if $K(\eta)$ is replaced by $K(\eta)^*$.
\end{lem}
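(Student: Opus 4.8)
The plan is to start from the explicit integral representation of $K(\eta)$ given in \eqref{keta},
$$
K(\eta) = (2\pi)^{-1/2}\int_{\mathbb R}(1-is)\widehat f(s)\,\tau_{\eta s}([\bar A,\bar I])\,ds,
$$
and exploit the normalization \eqref{c6} to gain powers of $\eta$ through Taylor expansion of $s\mapsto \tau_{\eta s}([\bar A,\bar I])$ around $\eta=0$. The key structural input is the explicit form of $[A,I]$ from \eqref{m13}: it is a sum of four terms of the type $\sigma_\pm\otimes\one_S\otimes W(2f_\beta)\big(\phi(f'_\beta)-i\SPn{f_\beta}{f'_\beta}\big)$ (and their $J_R$-conjugates). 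Commuting further with $\bar A=d\Gamma(-i\partial_u)$ turns the Weyl operator into $W(2f_\beta)$ times a field operator in the derivative $f'_\beta$ (via \eqref{m4} with $D=-i\partial_u$), and turns $\phi(f'_\beta)$ into $\phi(f''_\beta)$; each such commutator raises the required regularity of $f_\beta$ by one derivative and contributes one factor of $N^{1/2}$ via the standard relative bounds \eqref{RelBounds}. So the $m$-fold commutator $\mathrm{ad}_{\bar A}^{(m)}([\bar A,\bar I])$ is $\bar N^{(m+1)/2}$-bounded, with the bound finite precisely when $\||\partial_u|^{m+1}f_\beta\|<\infty$; more precisely, the matrix element $|\SPn{\varphi}{\mathrm{ad}_{\bar A}^{(m)}([\bar A,\bar I])\psi}|$ splits symmetrically as $\lesssim \|\bar N^{j/2}\varphi\|\,\|\bar N^{(m+2-j)/2}\psi\|$ for any distribution of the $m+1$ field operators between the two sides.

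For part (a), $1\le\alpha\le2$: since \eqref{c6} with $k=0$ gives $(2\pi)^{-1/2}\int\widehat f(s)\,ds=1$, I would write $\tau_{\eta s}([\bar A,\bar I])-[\bar A,\bar I]=i\int_0^{\eta s}\tau_{s'}([\bar A,[\bar A,\bar I]])\,ds'$, so that
$$
K(\eta)-(2\pi)^{-1/2}\Big(\int(1-is)\widehat f(s)ds\Big)[\bar A,\bar I]
= i(2\pi)^{-1/2}\int_{\mathbb R}(1-is)\widehat f(s)\int_0^{\eta s}\tau_{s'}\big([\bar A,[\bar A,\bar I]]\big)\,ds'\,ds,
$$
but one must be careful: the "zeroth order" constant here is $(2\pi)^{-1/2}\int(1-is)\widehat f(s)ds = 1 - 1 = 0$ by \eqref{c6} with $k=0$ and $k=1$, so there is no leftover term and the whole of $K(\eta)$ equals the displayed double integral. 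For $\alpha=2$ this already gives the claimed $\eta^{\alpha-1}=\eta$ after bounding $\tau_{s'}([\bar A,[\bar A,\bar I]])$ in terms of $\bar N$ (using that $\tau_{s'}$ is unitary and commutes suitably with $\bar N$ up to bounded corrections — or simply noting $\bar A$ and $\bar N$ commute on the relevant domain, as in Section \ref{evsect}). For $1\le\alpha<2$ one only has a fractional amount of regularity of $f_\beta$, so instead of expanding to second order I would interpolate: the term $\tau_{\eta s}([\bar A,\bar I])$ itself is $\bar N^{1/2}$-bounded uniformly in $\eta$ (order $\eta^0$), while the first-order Taylor remainder argument above would need $[\bar A,[\bar A,\bar I]]$, i.e.\ $\||\partial_u|^2 f_\beta\|<\infty$, giving order $\eta^1$; complex interpolation (or a direct estimate using $\||\partial_u|^\alpha f_\beta\|<\infty$ and splitting the $s$-integral at $|s|\sim 1/\eta$) yields the intermediate power $\eta^{\alpha-1}$.

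For part (b), $2\le\alpha\le3$: now \eqref{c6} holds for $k=0,1,2,3$, so I would Taylor-expand $\tau_{\eta s}([\bar A,\bar I])$ to higher order. Writing the expansion with integral remainder,
$$
\tau_{\eta s}([\bar A,\bar I]) = \sum_{m=0}^{2}\frac{(i\eta s)^m}{m!}\,\mathrm{ad}_{\bar A}^{(m)}([\bar A,\bar I]) + \frac{(i\eta)^3}{2}\int_0^s(s-s')^2\,\tau_{\eta s'}\big(\mathrm{ad}_{\bar A}^{(3)}([\bar A,\bar I])\big)\,ds',
$$
and observe that $(2\pi)^{-1/2}\int(1-is)\widehat f(s)(is)^m\,ds = 0$ for $m=0,1,2$ by \eqref{c6} (for each $m$ the two contributions from the $1$ and the $-is$ factor cancel against the $k=m$ and $k=m+1$ relations). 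Hence only the remainder survives, which is $O(\eta^3)$ and, since $\mathrm{ad}_{\bar A}^{(3)}([\bar A,\bar I])$ is $\bar N^2$-bounded (four field operators), gives the $\eta^3\|\bar N^{3/2}\varphi\|\,\|\bar N^{1/2}\psi\|$ term after pushing one $\bar N$ onto $\varphi$ and keeping $\bar N^{1/2}$ on $\psi$. The second term in \eqref{c-MK6a}, $\eta^{\alpha-1}\|\bar N^{1/2}\varphi\|\,\|\bar N^{1/2}\psi\|$, comes from the regime where $f_\beta$ has only $\alpha<3$ derivatives: then $\mathrm{ad}_{\bar A}^{(3)}([\bar A,\bar I])$ is not available and one stops the expansion at order two, interpolating the remainder exactly as in part (a) to extract $\eta^{\alpha-1}$ while staying at the level $\bar N^{1/2}$ on both sides.

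I expect the main obstacle to be the bookkeeping around the noncommutativity of $\tau_t$ with $\bar N$: the relative bounds \eqref{RelBounds} are stated for $\bar I$ and $[\bar A,\bar I]$, and applying them after a finite rotation $\tau_{\eta s'}$ requires knowing that $\tau_{\eta s'}$ maps $\dom(\bar N^{k/2})$ to itself with norm controlled uniformly in the parameter — this follows because $\bar A$ and $\bar N$ commute (both are second-quantizations of commuting one-particle operators, as used throughout Section \ref{evsect}), so $\bar N^{k/2}\tau_{\eta s'}(X)\bar N^{-k/2} = \tau_{\eta s'}(\bar N^{k/2}X\bar N^{-k/2})$ and unitarity of $\tau_{\eta s'}$ closes the estimate. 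Once this commutation is in hand, the rest is the Taylor-expansion-plus-moment-cancellation scheme above, carried out carefully on matrix elements so as to distribute the $\bar N$-powers symmetrically; the statement for $K(\eta)^*$ follows by taking adjoints, since $\bar A$ and $\bar N$ are self-adjoint and the whole construction is symmetric under $\varphi\leftrightarrow\psi$.
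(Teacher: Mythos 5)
Your overall scheme (reduce to the explicit Weyl--field form of $[\bar A,\bar I]$, exploit the moment conditions \eqref{c6} to subtract a Taylor polynomial, and estimate the remainder in terms of $\bar N$ via \eqref{RelBounds}) is the right idea, and your observation that $(2\pi)^{-1/2}\int(1-is)\widehat f(s)(is)^m\,ds = 1-1 = 0$ for $m=0,1,2$ by telescoping \eqref{c6} is exactly what the paper uses. For part (a) your plan is essentially the paper's first-order version of this, except that the paper avoids any appeal to ``interpolation'' or splitting the $s$-integral: it writes ${\mathcal I}$ as a sum of two explicit pieces, uses the telescoping integral bound $\|(W(e^{\eta s\partial_u}g)^*-W(g)^*)\varphi\|\ple\eta|s|\,\|\bar N^{1/2}\varphi\|$ for the Weyl part, and for the field part the direct H\"older estimate $\|e^{\eta s\partial_u}g'-g'\|\ple(\eta|s|)^{\gamma}\||\partial_u|^{\gamma}g'\|$ which yields $\eta^{\alpha-1}$ using \emph{only} $\alpha$ derivatives of $g$. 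This is cleaner than invoking complex interpolation, and you should spell out that mechanism rather than gesture at it.

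For part (b) there is a genuine gap. Your second-order Taylor expansion of $\tau_{\eta s}([\bar A,\bar I])$ produces a remainder containing $\mathrm{ad}^{(3)}_{\bar A}([\bar A,\bar I])=[\bar A,[\bar A,[\bar A,[\bar A,\bar I]]]]$, i.e.\ \emph{four} commutators of $\bar I$ with $\bar A$, which requires $\partial_u^4 g\in L^2$ and hence $\alpha\ge 4$. The hypothesis only provides $\alpha\le 3$, so this object does not exist and the expansion cannot be carried out. The paper avoids this by decomposing the difference ${\mathcal I}$ into three pieces $T_1,T_2,T_3$ (equations \eqref{t1}--\eqref{t4}) and treating each with a different balance of moment cancellation and regularity: $T_1$ is a ``product of two differences'' and gives $\eta^2$ using only two derivatives; $T_2$ is handled by a \emph{double} integral representation of the Weyl-operator difference and subtraction of the $t=0$ linearisation (using only two derivatives, since only $\partial_x^2 W(-e^{x\partial_u}g)$ enters), producing the $\eta^3\|\bar N^{3/2}\varphi\|\|\bar N^{1/2}\psi\|$ term; $T_3$ uses the moment cancellation to subtract $\partial_t|_{t=0}\phi(e^{t\partial_u}g')$ and then the H\"older estimate on $\phi(e^{t\partial_u}g''-g'')$, consuming exactly $\alpha\le 3$ derivatives and yielding $\eta^{\alpha-1}$. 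This explains, in particular, why \eqref{c-MK6a} has two terms of genuinely different type, and why both appear even at $\alpha=3$ --- a feature your scheme does not reproduce.

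Finally, the statement for $K(\eta)^*$ does not simply ``follow by taking adjoints'' in part (b): the bound \eqref{c-MK6a} is asymmetric in $\varphi,\psi$ (the factor $\|\bar N^{3/2}\varphi\|$ sits on the left slot), and taking adjoints merely swaps the roles of $\varphi$ and $\psi$ in the estimate, which is not what is claimed. One has to rerun the argument distributing the $\bar N$-powers the other way, as the paper indicates with ``obtained in the same way.''
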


%
%
%
%
%

Denote
$$
G_{\varphi,\psi,z}(\eta):= \SPn{\varphi}{(\bar \L(\eta)-z)^{-1}\psi}=\SPn{\varphi}{R_z(\eta)\psi}.
$$
\begin{lem}
\label{lemmac2}
Assume $\|\,|\partial_u|^\alpha f_\beta\|<\infty$ for some $\alpha>1$. There is a constant $c$ independent of $z\in{\mathbb C}_+$, $\eta>0$ and $\d$ with $|\d|<\d_0$, such that, for any $\varphi,\psi\in\dom(\bar A)$,
\begin{eqnarray}
\label{c20}
\left| G_{\varphi,\psi,z}(\eta)\right|&\le& c\, \|\varphi\|_1\,\|\psi\|_1\\
\label{c20'}
\|\bar{N}^{1/2} R_z(\eta)\varphi\|, \, \|\bar{N}^{1/2} R_z(\eta)^*\varphi\|&\le &c\,\eta^{-1/2}\, \|\varphi\|_1.
\end{eqnarray}
{}For any $x\in \RR$, $\SPn{\varphi}{(\bar{\L}-x-iy)^{-1}\psi}$ has a limit as $y\rightarrow 0_+$, denoted by $\SPn{\varphi}{(\bar{\L}-x-i0_+)^{-1}\psi}$,
and
\begin{equation}\label{c20''}
|\SPn{\varphi}{(\bar{\L}-x-i0_+)^{-1}\psi}-\SPn{\varphi}{(\bar{\L}-x-iy)^{-1}\psi}|\le c\, y^{\gamma/(1+\gamma)} \|\psi\|_1 \, \|\varphi\|_1,
\end{equation}
uniformly in $x\in\RR$, $y\in (0,1)$ and with $\gamma=\min\{1/2,\alpha-1\}$.
\end{lem}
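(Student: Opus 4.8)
The whole proof rests on the differential identity, valid for $\eta>0$ and $z\in\CC_+$ as a form identity on $\dom(\bar A)$-vectors,
\[
\partial_\eta R_z(\eta)\;=\;[\bar A,R_z(\eta)]\;+\;\Delta\,R_z(\eta)\,K(\eta)\,R_z(\eta).
\]
I would derive it by first establishing $\partial_\eta\bar\L(\eta)=[\bar A,\bar\L(\eta)]-\Delta K(\eta)$: this follows from $\bar\L(\eta)=\bar\L_0-i\eta\bar N+\Delta\bar I(\eta)$, the Mourre relation $i[\bar A,\bar\L_0]=\bar N$ (i.e.\ \eqref{m1} at $\nu=0$, restricted to $\bar P\cH$), the fact that $\bar A$ and $\bar N$ commute, and $[\bar A,\bar I(\eta)]=K(\eta)+\partial_\eta\bar I(\eta)$, which is immediate from \eqref{keta}. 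Applying $-R_z(\eta)(\,\cdot\,)R_z(\eta)$ and using $R_z(\eta)[\bar A,\bar\L(\eta)-z]R_z(\eta)=-[\bar A,R_z(\eta)]$ gives the identity; all manipulations are legitimate on $\dom(\bar A)\cap\dom(\bar N)\cap\dom(\bar\L_0)$ thanks to Lemma \ref{lemmac1}(2). The three quantitative inputs I then combine are: this identity; the positive-commutator bounds of Lemma \ref{lemmac1}(4), which trade one factor $\bar N^{1/2}R_z(\eta)$ for $\sqrt 2\,\eta^{-1/2}|\SPn{\,\cdot\,}{R_z(\eta)\,\cdot\,}|^{1/2}$; and the smallness of $K(\eta)$ from Lemma \ref{c-Lem-MK}, namely $|\SPn{\varphi}{K(\eta)\psi}|\ple\eta^{\gamma}\|\bar N^{1/2}\varphi\|\,\|\bar N^{1/2}\psi\|$ for $\gamma=\min\{1/2,\alpha-1\}$ (plus the extra $\eta^3$-term when $\alpha>2$).

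\emph{Proof of \eqref{c20} and \eqref{c20'}.} Set $p(\eta):=\sup_{z\in\CC_+}\sup\big\{|\SPn{\varphi}{R_z(\eta)\psi}|:\varphi,\psi\in\dom(\bar A),\ \|\varphi\|_1\le 1,\ \|\psi\|_1\le 1\big\}$, which is finite on every $[\eta_0,1]$ since $p(\eta)\le 2/\eta$ by Lemma \ref{lemmac1}. Pairing the identity above with $\SPn{\varphi}{\,\cdot\,\psi}$ and writing $\SPn{\varphi}{[\bar A,R_z(\eta)]\psi}=\SPn{\bar N^{-1/2}\bar A\varphi}{\bar N^{1/2}R_z(\eta)\psi}-\SPn{\bar N^{1/2}R_z(\eta)^*\varphi}{\bar N^{-1/2}\bar A\psi}$, I would bound each term using that $\bar A(1+\bar A^2)^{-1/2}$ is bounded (so $\|\bar N^{-1/2}\bar A\varphi\|\le\|\varphi\|_1$, since $\bar N\ge 1$ on $\bar P\cH$) together with Lemma \ref{lemmac1}(4); each is $\ple\eta^{-1/2}p(\eta)^{1/2}\|\varphi\|_1\|\psi\|_1$. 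The $K$-term is $\ple|\Delta|\,\eta^{\gamma-1}p(\eta)\|\varphi\|_1\|\psi\|_1$ by applying Lemma \ref{c-Lem-MK} to the vectors $R_z(\eta)^*\varphi,R_z(\eta)\psi\in\dom(\bar N^{1/2})$ and then Lemma \ref{lemmac1}(4) again. Hence $|\partial_\eta\SPn{\varphi}{R_z(\eta)\psi}|\ple\big(\eta^{-1/2}p(\eta)^{1/2}+|\Delta|\eta^{\gamma-1}p(\eta)\big)\|\varphi\|_1\|\psi\|_1$. Integrating from $\eta$ to $1$ and using $|\SPn{\varphi}{R_z(1)\psi}|=|\SPn{\bar N^{-1/2}\varphi}{\bar N^{1/2}R_z(1)\psi}|\ple\|\varphi\|_1\|\psi\|_1$ (Lemma \ref{lemmac1}(4)), then taking suprema over $\varphi,\psi$ and $z$, I obtain the self-consistent inequality $p(\eta)\ple 1+\big(\sup_{[\eta,1]}p\big)^{1/2}+|\Delta|\int_\eta^1\eta'^{\,\gamma-1}p(\eta')\,d\eta'$ for all $\eta\in(0,1]$. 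A Gr\"onwall argument (using $|\Delta|<\Delta_0$) then yields $\sup_{\eta\in(0,1]}p(\eta)\le c<\infty$; for $\eta\ge 1$ the bound is immediate from Lemma \ref{lemmac1}(1),(4). This is \eqref{c20}, and \eqref{c20'} follows from Lemma \ref{lemmac1}(4) with $\psi=\varphi$, since $|\SPn{\varphi}{R_z(\eta)\varphi}|\le c\|\varphi\|_1^2$.

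\emph{Proof of \eqref{c20''}.} From the differential inequality above and $p\le c$ one has $|\partial_\eta\SPn{\varphi}{R_z(\eta)\psi}|\ple\eta^{\gamma-1}\|\varphi\|_1\|\psi\|_1$ uniformly in $z\in\CC_+$; integrating down to $\eta=0$ and invoking Lemma \ref{lemmac1}(3) gives $|\SPn{\varphi}{R_z(\eta)\psi}-\SPn{\varphi}{(\bar\L-z)^{-1}\psi}|\ple\eta^{\gamma}\|\varphi\|_1\|\psi\|_1$. On the other hand $\|R_z(\eta)\psi\|^2\le\tfrac2\eta|\SPn{\psi}{R_z(\eta)^*\psi}|\ple\eta^{-1}\|\psi\|_1^2$, and likewise for $R_z(\eta)^*$, so $|\partial_y\SPn{\varphi}{R_{x+iy}(\eta)\psi}|=|\SPn{R_{x+iy}(\eta)^*\varphi}{R_{x+iy}(\eta)\psi}|\ple\eta^{-1}\|\varphi\|_1\|\psi\|_1$. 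Combining, for $0<y'<y<1$ and any $\eta\in(0,1]$,
\[
\big|\SPn{\varphi}{(\bar\L-x-iy)^{-1}\psi}-\SPn{\varphi}{(\bar\L-x-iy')^{-1}\psi}\big|\ \ple\ \big(\eta^{\gamma}+\eta^{-1}(y-y')\big)\,\|\varphi\|_1\|\psi\|_1,
\]
uniformly in $x\in\RR$. Choosing $\eta=(y-y')^{1/(1+\gamma)}$ balances the two terms and gives the bound $\ple(y-y')^{\gamma/(1+\gamma)}\|\varphi\|_1\|\psi\|_1$. In particular $y\mapsto\SPn{\varphi}{(\bar\L-x-iy)^{-1}\psi}$ is uniformly Cauchy as $y\to 0_+$, the limit $\SPn{\varphi}{(\bar\L-x-i0_+)^{-1}\psi}$ exists, and letting $y'\to 0$ yields \eqref{c20''}.

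\emph{Main obstacle.} The delicate point is the commutator term $[\bar A,R_z(\eta)]$ in the first step: a crude norm estimate only gives $\bigO(\eta^{-1})$, which is not integrable and would produce a logarithmically divergent bound. The resolution is to peel one factor $\bar A$ onto the smoothing weight $(1+\bar A^2)^{-1/2}$ implicit in $\|\cdot\|_1$ and to absorb the remaining $\bar N^{1/2}R_z(\eta)$ into $\eta^{-1/2}p(\eta)^{1/2}$ via Lemma \ref{lemmac1}(4), so that the differential inequality closes into a self-consistent (nonlinear but solvable) integral inequality for $p(\eta)$. A secondary nuisance is the term $\eta^3\|\bar N^{3/2}\varphi\|\,\|\bar N^{1/2}\psi\|$ in Lemma \ref{c-Lem-MK}(b) when $\alpha>2$: here I would use a crude polynomial-in-$\eta^{-1}$ bound on $\|\bar N^{3/2}R_z(\eta)^*\bar N^{1/2}\|$ obtained by iterating $[\bar N,\bar\L(\eta)^*]=\Delta[\bar N,\bar I(\eta)^*]$ (which is uniformly $\bar N^{1/2}$-form bounded in $\eta$) against Lemma \ref{lemmac1}(4); the $\eta^3$ gain more than compensates. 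Finally, the differentiability in $\eta$ of $\SPn{\varphi}{R_z(\eta)\psi}$ and the validity of the form identity above are routine given Lemma \ref{lemmac1}(2).
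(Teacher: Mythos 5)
Your proposal is correct and follows essentially the same route as the paper: the differential identity $\partial_\eta R_z(\eta)=[\bar A,R_z(\eta)]+\Delta R_z(\eta)K(\eta)R_z(\eta)$, combined with the smoothing estimates of Lemma \ref{lemmac1}(4) and the $K(\eta)$-smallness of Lemma \ref{c-Lem-MK}(a), closed by a Gr\"onwall argument and then a triangle-inequality/optimization step for the H\"older continuity. Two small stylistic divergences are worth noting but change nothing substantive: (i) you close the Gr\"onwall loop through a supremum $p(\eta)$ over pairs $(\varphi,\psi)$, which produces a mildly nonlinear inequality $p\ple 1+p^{1/2}+|\Delta|\int\eta'^{\gamma-1}p$; the paper instead specializes to $\varphi=\psi$ first, so that the resulting inequality for $G_{\varphi,\varphi,z}$ is a plain linear Gr\"onwall (you then recover the mixed case a posteriori via \eqref{c20'} in \eqref{c8''}); your nonlinear version does close after a Young-inequality absorption of the $p^{1/2}$-term, so it is fine, just marginally more work; (ii) for \eqref{c20''} you bound $\partial_y\SPn{\varphi}{R_{x+iy}(\eta)\psi}$ and integrate, while the paper invokes the resolvent identity $R_z-R_w=(z-w)R_zR_w$; these are the same estimate. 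One minor misattribution: the exponent $\gamma=\min\{1/2,\alpha-1\}$ is \emph{not} the content of Lemma \ref{c-Lem-MK}(a), which gives $\eta^{\alpha-1}$; the cap at $1/2$ comes from the $[\bar A,R_z(\eta)]$ term (through the $\eta^{-1/2}$ factor), not from $K(\eta)$. Also, the $\eta^3\|\bar N^{3/2}\cdot\|$ term from Lemma \ref{c-Lem-MK}(b) that you mention in your ``main obstacle'' paragraph plays no role here — it is needed only later, in the $\partial_z$-differentiated estimate of Theorem \ref{thmc0}; for the present lemma, where only $\alpha>1$ is assumed, part (a) suffices.
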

\begin{proof}[ Proof of Lemma \ref{lemmac2}.] 
Using Lemma \ref{lemmac1}(2) we obtain
\begin{eqnarray}
\partial_\eta G_{\varphi,\psi,z}(\eta) &=& -\SPn{\varphi}{R_z(\eta) \big[\partial_\eta\bar{\L}(\eta)\big] R_z(\eta)\psi}\nonumber\\
&=& \SPn{\varphi}{ [\bar{A},R_z(\eta)]\psi}+
\Delta\SPn{\varphi}{\,R_z(\eta)\,K(\eta)\,R_z(\eta)\psi},
\label{c8'}
\end{eqnarray}
where $K(\eta)$ is defined in \eqref{keta}. Using the estimate \eqref{c-MK6} in \eqref{c8'} yields (recall the notation \eqref{notation})
\begin{align}
|\partial_\eta G_{\varphi,\psi,z}(\eta)| \ple & \
\|\varphi\|_1\,\| \bar N^{1/2} R_z(\eta)\,\psi\|+\|\psi\|_1\,\| \bar N^{1/2} R_z(\eta)^*\,\varphi\|
 \nonumber \\&
+  \eta^{\alpha-1} |\d|\, \|\bar{N}^{1/2} R_z(\eta)^*\varphi\|\,\|\bar N^{1/2}\,R_z(\eta)\psi\|.
\label{c8''}
\end{align}
Using Lemma \ref{lemmac1}(1) in \eqref{c8'} gives
\begin{align}\label{c-MK9}
|\partial_\eta G_{\varphi,\psi,z}(\eta)| \ple &\ 
\eta^{-1/2} \big (\|\varphi\|_1^2+\|\psi\|_1^2\big)
+
\,(\eta^{-1/2}+\eta^{\alpha-2})\,\big(| G_{\varphi,\varphi,z}(\eta)|+|G_{\psi,\psi,z}(\eta)|\big).
\end{align}
By Lemma \ref{lemmac1}(4) we have $\|\bar N^{1/2}R_z(1)\bar N^{1/2}\|\le 2$ and hence $|G_{\varphi,\psi,z}(1)|\le \|\psi\|_0\,\| \phi\|_0$. Taking $\varphi=\psi$ in \eqref{c-MK9} gives a differential inequality for $G_{\varphi,\varphi,z}(\eta)$ which implies \eqref{c20} for $\varphi=\psi$ by the standard Gronwall estimate \cite{Hartman}. Combining \eqref{c20} for $\varphi=\psi$ with Lemma \ref{lemmac1}(4) gives \eqref{c20'}. We can now use \eqref{c20'} in \eqref{c8''} to obtain 
\begin{equation}
\label{xxx}
|\partial_\eta G_{\varphi,\psi,z}(\eta)| \ple (\eta^{-1/2}+\eta^{\alpha-2})\|\varphi\|_1\|\psi\|_1.
\end{equation}
Integrating gives \eqref{c20}.

We now prove \eqref{c20''}. Let $0<\mu\ll \eta$ and $z,w\in \CC_+$. By the triangle inequality,
\begin{equation}
|\SPn{\varphi}{(R_z(\mu)-R_w(\mu))\psi}|
\le |\SPn{\varphi}{(R_z(\eta)-R_w(\eta))\psi}|+\sum_{v=w,z} |G_{\varphi,\psi,v}(\mu)-G_{\varphi,\psi,v}(\eta)|.
\end{equation}
Using the resolvent identity and \eqref{c20'} gives $|\SPn{\varphi}{(R_z(\eta)-R_w(\eta))\psi}|\ple  |z-w| \eta^{-1}\|\psi\|_1\,\|\varphi\|_1$. Next, it follows from \eqref{xxx} and \eqref{c20} that 
$$
|G_{\varphi,\psi,v}(\mu)-G_{\varphi,\psi,v}(\eta)|\le \int_\mu^\eta \big| \partial_\xi G_{\varphi,\psi,v}(\xi)\big| d \xi \ \ple \ \eta^{\gamma}\|\psi\|_1\, \|\varphi\|_1.
$$
Therefore, 
\begin{equation}
|\SPn{\varphi}{(R_z(\mu)-R_w(\mu))\psi}|
\ple (\eta^{-1}|z-w|+\eta^{\gamma})\, \|\psi\|_1\, \|\varphi\|_1.
\end{equation}
Thanks to Lemma \ref{lemmac1}(3) we may send $\mu \to 0$ and choose $\eta=|w-z|^{1/(1+\gamma)}$ to obtain
\begin{equation}
|\SPn{\varphi}{(\bar \L - z)^{-1}\psi}-\SPn{\varphi}{(\bar \L - w)^{-1}\psi}|
\ple  \,|z-w|^{\gamma/(1+\gamma)}\, \|\psi\|_1 \, \|\varphi\|_1.
\end{equation}
This shows the existence of $\lim_{y\to 0+} \SPn{\varphi}{(\bar \L -x- iy)^{-1}\psi}$ and
proves \eqref{c20''}.
\end{proof}

\medskip

For $\varphi$, $\psi\in\dom(\bar A)$, $z\in\mathbb C_+$, $\eta>0$, we define
\begin{equation}
\label{c31}
H_{\varphi,\psi,z}(\eta):= \partial_z\SPn{\varphi}{R_z(\eta)\psi} = \SPn{\varphi}{R_z(\eta)^2\psi}.
\end{equation}
Due to Lemma \ref{lemmac1} and \eqref{keta},
\begin{equation}
\label{c33}
\partial_\eta H_{\varphi,\psi,z}(\eta) = 
\SPn{\varphi}{[\bar A,R_z(\eta)^2]\psi}+S_1+S_2,
\end{equation}
where $S_1= \SPn{\varphi}{R_z(\eta)K(\eta) R_z(\eta)^2\psi}$ and $S_2=\SPn{\varphi}{R_z(\eta)^2 K(\eta) R_z(\eta)\psi}$. Taking into account $\|\bar N^{1/2} R_z(\eta)\| \ple \eta^{-1}$ (see Lemma \ref{lemmac1}(4)), \eqref{norms} and \eqref{c20'} we have 
\begin{equation}
\label{mmm10}
|\SPn{\varphi}{[\bar A,R_z(\eta)^2]\psi}|\ple  \eta^{-3/2}\|\varphi\|_1\,\|\psi\|_1.
\end{equation}
Next, due to \eqref{c-MK6a}, 
$$
|S_1| \ple \eta^3\|\bar N^{3/2}R_z(\eta)^*\varphi\|\, \|\bar N^{1/2}R_z(\eta)^2\psi\|+\eta^{\alpha-1}\|\bar N^{1/2}R_z(\eta)^*\varphi\|\, \|\bar N^{1/2}R_z(\eta)^2\psi\|.
$$  
Since $\|\bar{N}^{1/2}R_z(\eta)\bar{N}^{1/2}\|\ple \eta^{-1}$ (Lemma \ref{lemmac1}(4)) and using \eqref{c20'}, we get
\begin{equation}
\label{s1}
|S_1|\ple  \eta^{3/2}\,\|\bar{N}^{3/2}R_z(\eta)^*\varphi\| \,\|\psi\|_1+ \eta^{\alpha-3}\,\|\varphi\|_1\,\|\psi\|_1.
\end{equation}
A similar upper bound is obtained for $|S_2|$. We show the following result below.
\begin{lem}
\label{lemmac5}
Let $l\geq 0$, $\eta>0$, $z\in\mathbb C_+$ and $\psi\in\dom(\bar A)\cap\dom(\bar N^{l/2-1})$. Then
\begin{equation}
\label{c43}
\|\bar N^{l/2} R_z(\eta)\psi\| \le  C \big( \|\psi\|_1+\eta^{1/2} \|\bar N^{l/2-1}\psi\|\big)
\left\{
\begin{array}{ll}
\eta^{-(l+1)/2}, & \mbox{$l$ even}\\
\eta^{-l/2}, & \mbox{$l$ odd}
\end{array}
\right.
\end{equation}
The same statement holds if $R_z(\eta)$ is replaced by $R_z(\eta)^*$.
\end{lem}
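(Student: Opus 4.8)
The plan is to prove \eqref{c43} by induction on $l$, the engine being the coercivity bound $\Im\bar\L(\eta)\le-\tfrac\eta2\bar N$ of Lemma \ref{lemmac1}(1), now iterated. Since a priori only $\Ran R_z(\eta)\subset\dom(\bar N)$ (Lemma \ref{lemmac1}(2)), all computations are first carried out with $\bar N$ replaced by a bounded truncation $\bar N(\one+\bar N/n)^{-1}$, the bounds are derived uniformly in $n$, and $n\to\infty$ is taken at the end by monotone convergence, as in the proof of Theorem \ref{regevectthm}; I suppress this regularization below. The estimates for $R_z(\eta)^*$ follow in the identical way from $\Im(\bar\L(\eta)^*-\bar z)\ge\tfrac\eta2\bar N$ together with $(\bar\L(\eta)^*-\bar z)R_z(\eta)^*\psi=\psi$. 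The base cases $l=0,1$ are immediate: \eqref{c20'} \emph{is} the $l=1$ statement, once one observes $\|\bar N^{-1/2}\psi\|\le\|\psi\|_1$ (valid since $\bar N\ge1$ on $\bar P\mathcal H$ and $\bar A,\bar N$ commute), and the $l=0$ statement then follows from $\|R_z(\eta)\psi\|\le\|\bar N^{1/2}R_z(\eta)\psi\|$.

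For the inductive step, assume \eqref{c43} at all indices below $l$, fix $l\ge2$, and set $m=(l-1)/2$ and $\chi=\bar N^m R_z(\eta)\psi$. Using $\Im z\ge0$, the coercivity bound, $(\bar\L(\eta)-z)R_z(\eta)\psi=\psi$, and $[\bar\L(\eta),\bar N^m]=\Delta[\bar I(\eta),\bar N^m]$ (the part $\bar\L_0-i\eta\bar N$ commutes with $\bar N$), one obtains
\begin{equation*}
\|\bar N^{l/2}R_z(\eta)\psi\|^2\le\tfrac2\eta\big|\SPn{\bar N^{l/2}R_z(\eta)\psi}{\bar N^{l/2-1}\psi}\big|+\tfrac{2|\Delta|}\eta\big|\SPn{\chi}{[\bar I(\eta),\bar N^m]R_z(\eta)\psi}\big|.
\end{equation*}
The first term is absorbed into the left side via Cauchy--Schwarz and $2ab/\eta\le\tfrac12a^2+2b^2/\eta^2$, contributing $\le C\eta^{-2}\|\bar N^{l/2-1}\psi\|^2$, which for $l\ge2$ is dominated by the square of the $\eta^{1/2}\|\bar N^{l/2-1}\psi\|$-term of \eqref{c43}.

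The commutator term is the crux, and here the parity of $l$ intervenes. If $l$ is odd, $m$ is an integer: then $[\bar N^m,\bar I(\eta)]$ is $(\bar N+1)^{m-1/2}$-bounded by the iterated commutator bounds of Lemma \ref{Lem:CommutatorN} (note $[\bar N,\bar\L(\eta)]=\Delta[\bar N,\bar I(\eta)]$), so the commutator term is $\lesssim\eta^{-1}\,\|\bar N^{(l-1)/2}R_z(\eta)\psi\|\,\big(\|\bar N^{(l-2)/2}R_z(\eta)\psi\|+\cdots\big)$ and is controlled by the inductive hypothesis at indices $\le l-1$. If $l$ is even, $m$ is a half-integer and one must \emph{not} attempt to estimate $[\bar I(\eta),\bar N^m]$ directly. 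Instead, split $[\bar I(\eta),\bar N^m]=\bar I(\eta)\bar N^m-\bar N^m\bar I(\eta)$: the contribution of $\bar I(\eta)\bar N^m$ to the inner product is $\le\|\bar I(\eta)\|\,\|\bar N^m R_z(\eta)\psi\|^2$ (inductive hypothesis at index $l-1$), while $\SPn{\chi}{\bar N^m\bar I(\eta)R_z(\eta)\psi}=\SPn{\bar N^{l/2}R_z(\eta)\psi}{\bar N^{l/2-1}\bar I(\eta)R_z(\eta)\psi}$, in which $\bar N^{l/2-1}$ is now an \emph{integer} power and can be moved across $\bar I(\eta)$ using only integer-order commutators from Lemma \ref{Lem:CommutatorN}. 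This rewrites the commutator term as $\|\bar N^{l/2}R_z(\eta)\psi\|$ (absorbed into the left side thanks to the prefactor $|\Delta|/\eta$) times factors $\|\bar N^{j/2}R_z(\eta)\psi\|$ with $j\le l-1$, plus $\|\bar I(\eta)\|\,\|\bar N^m R_z(\eta)\psi\|^2$. Inserting the inductive hypotheses at these lower indices and simplifying via AM--GM and $\bar N\ge1$ (so $\|\bar N^s\psi\|\le\|\bar N^t\psi\|$ for $s\le t$), one checks that the right side reduces to $\|\psi\|_1^2$ and $\eta\|\bar N^{l/2-1}\psi\|^2$ multiplied by powers of $\eta^{-1}$ that, after taking the square root, match \eqref{c43} --- the parity distinction there stemming from the single extra factor of $\eta^{-1}$ produced by the coercivity step.

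I expect the main obstacle to be precisely this even case: a direct bound on $[\bar I(\eta),\bar N^{(l-1)/2}]$ with a half-integer power leads, via the integral representation of fractional powers, to borderline (logarithmically divergent) norm estimates, and the device above --- splitting the commutator and reshuffling only \emph{integer} powers of $\bar N$, so that every commutator actually estimated is of integer order --- is what makes the induction close. What remains is bookkeeping: making the domain manipulations rigorous through the $\bar N$-truncations, and checking at each step that the arithmetic of the $\eta$-powers and of the two admissible norms $\|\psi\|_1$ and $\eta^{1/2}\|\bar N^{l/2-1}\psi\|$ goes through.
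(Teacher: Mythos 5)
Your proof is correct in spirit but takes a genuinely different route from the paper's. The paper does \emph{not} use the coercivity bound of Lemma~\ref{lemmac1}(1) directly at this stage; instead it starts from the algebraic identity \eqref{c44}, $\bar N^{l/2}R_z(\eta)\psi=\bar N^{1/2}R_z(\eta)\bar N^{1/2}\bar N^{(l-2)/2}\psi+\bar N^{1/2}[\bar N^{(l-1)/2},R_z(\eta)]\psi$, expands the commutator by a resolvent identity, and reduces it to the single key estimate \eqref{c46}, $\sup_{\eta>0}\|\bar N^\alpha\bar I(\eta)\bar N^{-\alpha}\|<\infty$ for \emph{all real} $\alpha$. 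This produces the parity-free two-step recursion \eqref{c47}, $\|\bar N^{l/2}R_z(\eta)\psi\|\ple\eta^{-1}\|\bar N^{(l-2)/2}R_z(\eta)\psi\|+\eta^{-1}\|\bar N^{(l-2)/2}\psi\|$, which is iterated down to $l=0$ or $1$ and closed with \eqref{c20'}; the parity in the lemma's statement just reflects which endpoint the iteration hits. In other words, the paper confronts the fractional-power commutator head-on via \eqref{c46}, which is proved by writing $\alpha=m+\xi$ and using Kato's integral representation $\bar N^{-\xi}=\pi^{-1}\sin(\pi\xi)\int_0^\infty x^{-\xi}(\bar N+x)^{-1}dx$. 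Your approach instead runs the quadratic-form inequality $\Im\bar\L(\eta)\le-\tfrac\eta2\bar N$ and, by the parity split on $m=(l-1)/2$ together with the inner-product reshuffle in the even case, arranges that the only commutators you need to expand are of integer order, so Lemma~\ref{Lem:CommutatorN} suffices and \eqref{c46} is never invoked. I spot-checked the arithmetic: with $p(l)=\lfloor l/2\rfloor+\tfrac12$ the exponent in \eqref{c43}, the first-piece absorption gives $\eta^{-2}\|\bar N^{l/2-1}\psi\|^2$, the commutator term schematically $\eta^{-1}a_{l-1}a_{l-2}$ (odd $l$) or $\eta^{-1}(a_{l-1}^2+a_la_{l-2})$ (even $l$), and in both cases $1+p(l-1)+p(l-2)\le 2p(l)$ and $1+p(l-2)\le p(l)$, so after Young's inequality and absorbing the $a_l^2$ fraction the orders match, as does the appearance of both admissible norms on the right. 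The trade-off: the paper's route is shorter and parity-free once \eqref{c46} is in hand, and \eqref{c46} is a useful standalone statement; your route avoids the fractional-power machinery entirely, at the price of a parity case-split and more bookkeeping in the binomial expansion of $\bar N^{l/2-1}\bar I(\eta)$. Both close. The remaining work you flag (the $\bar N$-truncations and detailed $\eta$-power arithmetic) is routine and matches the level of rigor the paper itself implicitly assumes (it too manipulates $\bar N^{l/2}R_z(\eta)\psi$ before proving it is finite).
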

To shorten notation we set
\begin{equation}
\label{newnorm}
\|\psi\|'_1:=  \|\psi\|_1+ \|\bar N^{1/2}\psi\|.
\end{equation}
Combining \eqref{c43}, for $l=3$, with \eqref{s1} gives 
\begin{equation}
\label{finals}
|S_1|+|S_2| \ple  \eta^{\alpha-3}\,\|\varphi\|'_1\,\|\psi\|'_1.
\end{equation}
With \eqref{c33} and \eqref{mmm10} we obtain $|\partial_\eta H_{\varphi,\psi,z}(\eta)| \ple (\eta^{-3/2}+\eta^{\alpha-3}) \|\varphi\|'_1\,\|\psi\|'_1 \ple \eta^{-3/2} \|\varphi\|_1'\,\|\psi\|_1'$ (as $\alpha \ge 2$). We integrate this estimate and obtain
\begin{equation}\label{c-MK10}
|H_{\varphi,\psi,z}(\eta)| \ple  \eta^{-1/2} \|\varphi\|_1'\,\|\psi\|_1'.
\end{equation} 
Finally, we consider again \eqref{c33}, but this time we write $\SPn{\varphi}{[\bar A,R_z(\eta)^2]\psi}= H_{\bar A\varphi,\psi,z}(\eta)-H_{\varphi,\bar A\psi,z}(\eta)$. Then, due to \eqref{c-MK10},
\begin{equation}
\label{imp}
|\SPn{\varphi}{[\bar A,R_z(\eta)^2]\psi}|\ple \eta^{-1/2} \|(1+\bar A^2)^{1/2}\varphi\|'_1\,\|(1+\bar A^2)^{1/2}\psi\|'_1.
\end{equation} 
According to \eqref{norms} and \eqref{newnorm}, $\|(1+\bar A^2)^{1/2}\varphi\|_1' = \|\varphi\|_2+\|\bar N\varphi\|_1$. We use the improved bound \eqref{imp}, together with \eqref{finals}, in \eqref{c33} to obtain $|\partial_\eta H_{\varphi,\psi,z}(\eta)| \ple (\eta^{-1/2}+\eta^{\alpha-3})( \|\varphi\|_2+\|\bar N\varphi\|_1)(\|\psi\|_2 +\|\bar N\psi\|_1)$. Integration gives the H\"older continuity
$$
|H_{\varphi,\psi,z}(\eta)- H_{\varphi,\psi,z}(\eta')| \ple |\eta-\eta'|^{\min\{1/2,\alpha-2\}}\ ( \|\varphi\|_2+\|\bar N\varphi\|_1)(\|\psi\|_2 +\|\bar N\psi\|_1).
$$
It follows that $H_{\varphi,\psi,z}(\eta)$ extends continuously to $\eta=0$, the extension satisfying $|H_{\varphi,\psi,z}(0)|\le C( \|\varphi\|_2+\|\bar N\varphi\|_1)(\|\psi\|_2 +\|\bar N\psi\|_1)$, with $C$ independent of $\varphi$, $\psi$ and $z\in{\mathbb C}_+$. By Lemma \ref{lemmac1}(3), the extension is $H_{\varphi,\psi,z}(0)=\SPn{\varphi}{(\bar \L-z)^{-2}\psi}= \partial_z\SPn{\varphi}{(\bar \L-z)^{-1}\psi}$. This concludes the proof of Theorem \ref{thmc0}, modulo the proofs of Lemmas \ref{c-Lem-MK} and \ref{lemmac5}, which we give now.

\medskip

\begin{proof}[Proof of Lemma \ref{c-Lem-MK}]
  Due to the definition \eqref{keta} of $K(\eta)$ and the expression \eqref{2.9''} for $I$, it is enough to show the estimates (a) and (b) for $\tau_{\eta s}([\bar A,\bar I])$ in \eqref{keta} replaced by $\tau_{\eta s}([A, W(g)]) = W(e^{\eta s\partial_u}g)\big(\phi(e^{\eta s \partial_u}g')-\textstyle\frac i2\SPn{g}{g'}\big)$, 
where $\||\partial_u|^{\alpha}g \|<\infty$ (see also \eqref{m4}). Hence it suffices to show the bounds (a) and (b) for $\SPn{\varphi}{\wt{K}(\eta)\psi}$, where 
\begin{equation}
\label{c-MK7}
\wt{K}(\eta)=\int_{\mathbb R} \,
W(e^{\eta s \partial_u}g)\,\big(\phi(e^{\eta s \partial_u}g')-\textstyle\frac i2\SPn{g}{g'}\big) d\mu(s),
\end{equation}
with $d\mu(s)=(2\pi)^{-1/2}(1-is)\widehat{f}(s)ds$. By \eqref{c6}, $\wt{K}(0)=0$, so the value of the integral \eqref{c-MK7} stays the same if we replace the integrand by
$$
{\mathcal I} = W(e^{\eta s \partial_u}g)\,\big(\phi(e^{\eta s \partial_u}g')-\textstyle\frac i2\SPn{g}{g'}\big)- W(g)\,\big(\phi(g')-\textstyle\frac i2\SPn{g}{g'}\big).
$$

{\em Proof of (a).} We write 
\begin{equation}
\label{c-MK1}
{\mathcal I}=\big(W(e^{\eta s\partial_u}g)-W(g)\big)\,\big(\phi(e^{\eta s \partial_u}g')-\textstyle\frac i2\SPn{g}{g'}\big)
+W(g)\ \phi(e^{\eta s \partial_u}g'-g')
\end{equation}
and estimate
\begin{equation}\label{c-MK5}
\|(W(e^{\eta s\partial_u}g)^*-W(g)^*)\varphi\| \le 
 \int_0^{\eta s}\big\|\partial_t W(-e^{t\partial_u}g)\varphi \big\| dt \ple \eta |s| \|\bar{N}^{1/2}\varphi\|.
\end{equation}
The last bound is obtained from $\partial_t W(-e^{t\partial_u}g)= \imath [A,W(-e^{t\partial_u}g)]$ and an application of \eqref{m4} (with $D=-\imath \partial_u$). It follows that 
\begin{equation}
\label{mmm1}
\left|\SPn{\varphi}{\big(W(e^{\eta s\partial_u}g)-W(g)\big)\,\big(\phi(e^{\eta s \partial_u}g')-\textstyle\frac i2\SPn{g}{g'}\big)\psi}\right| \ple \eta |s| \|\bar N^{1/2}\varphi\|\, \|\bar N^{1/2}\psi\|.
\end{equation}
Next we consider the remaining term in \eqref{c-MK1}. By the spectral theorem, 
$$
\|\phi(e^{\eta s \partial_u}g'-g')\psi\|
\ple \| e^{\eta s \partial_u}g'-g'\|\, \|\bar{N}^{1/2}\psi\|\ple  (|s| \eta)^\gamma \sup_{r\not=0}\tfrac{| e^{ ir}-1|}{|r|^\gamma}\,\| |\partial_u|^\gamma g'\|\, \|\bar{N}^{1/2}\psi\|
$$
and thus for any $\gamma\in[0,1]$, if $\|\, |\partial_u|^{1+\gamma}g\|<\infty$, then 
\begin{equation}\label{c-MK4}
\|\phi(e^{\eta s \partial_u}g'-g')\psi\|\ple  \eta^\gamma \,|s|^\gamma\ \|\bar{N}^{1/2}\psi\|.
\end{equation}
It follows that $|\SPn{\varphi}{W(g)\phi(e^{\eta s\partial_u}g'-g')\psi}| \ple \eta^\gamma |s|^\gamma \|\varphi\|\, \|\bar N^{1/2}\psi\|$. Combining this last estimate with \eqref{mmm1} yields $|\SPn{\varphi}{\wt{K}(\eta)\psi}| \ple (\eta +\eta^\gamma)\|\bar N^{1/2}\varphi\|\, \|\bar N^{1/2}\psi\|$, which proves \eqref{c-MK6}.

\medskip
{\em Proof of (b).} We write ${\mathcal I} = T_1+T_2+T_3$, with
\begin{eqnarray}
T_1 &=& \big( W(e^{\eta s\partial_u}g)-W(g)\big)\big(\phi(e^{\eta s\partial_u}g') -\phi(g')\big),\label{t1} \\
T_2 &=& \big( W(e^{\eta s\partial_u}g)-W(g)\big)\big(\phi(g')-\textstyle\frac i2\SPn{g}{g'}\big),\label{t2'}\\
T_3 &=& W(g)\big(\phi(e^{\eta s\partial_u}g') -\phi(g')\big).\label{t4}
\end{eqnarray}
Using the bounds \eqref{c-MK5} and \eqref{c-MK4} with $\gamma=1$ gives $|\SPn{\varphi}{T_1\psi}| \ple \eta^2 s^2 \|\bar N ^{1/2}\varphi\|\,\|\bar N ^{1/2}\psi\|$, so
\begin{equation}
\label{t1bound}
\left|\int_{\mathbb R} \SPn{\varphi}{T_1\psi}d\mu(s)\right| \ple \eta^2 \|\bar N ^{1/2}\varphi\|\,\|\bar N ^{1/2}\psi\|.
\end{equation}
Next, since $W(e^{\eta s\partial_u}g)-W(g)=\int_0^{\eta s}\partial_t W(e^{t\partial_u}g)dt=\int_0^{\eta s}W(e^{t\partial_u}g)\{i\phi(e^{t\partial_u}g')+\frac12\SPn{g}{g'}\}dt$, we obtain
\begin{equation}
\label{mmm3}
\int_{\mathbb R} \SPn{\varphi}{T_2\psi}d\mu(s) = \int_{\mathbb R} d\mu(s)  \int_0^{\eta s} dt\ \SPn{(T'_2+T''_2) \varphi}{(\phi(g')-\textstyle\frac i2\SPn{g}{g'})\psi}, 
\end{equation}
where
\begin{eqnarray}
T'_2 &=& W(-e^{t\partial_u}g)\{-i\phi(e^{t\partial_u}g') +  i\phi(g')\},\label{mmm5}\\
T''_2 &=& \big(W(-e^{t\partial_u}g) -W(-g)\big)\big( -i\phi(g')+\textstyle\frac 12\SPn{g}{g'}\big).
\label{mmm6}
\end{eqnarray}
Note that due to \eqref{c6}, $\int_{\mathbb R} d\mu(s)\int_0^{\eta s}dt \SPn{W(-g)(-i\phi(g')+\frac12\SPn{g}{g'})\varphi}{(\phi(g')-\textstyle\frac i2\SPn{g}{g'})\psi}=0$. In order to estimate the contribution to \eqref{mmm3} coming from \eqref{mmm5}, we use \eqref{c-MK4} with $\gamma=1$. The contribution of $T'_2$ to \eqref{mmm3} is $\ple \eta^2\|\bar N^{1/2}\varphi\|\, \|\bar N^{1/2}\psi\|$. Next we consider the contribution to \eqref{mmm3} coming from \eqref{mmm6}. The double integral in \eqref{mmm3} stays unchanged if we replace $T''_2$ by
$$
T'''_2=\int_0^t \big( \partial_r W(-e^{r\partial_u}g) - \partial_r|_{r=0} W(-e^{r\partial_u}g)\big)\big( -i\phi(g')+\textstyle\frac 12\SPn{g}{g'}\big) dr,
$$
since again, due to \eqref{c6}, the term containing $\partial_r|_{r=0} W(-e^{r\partial_u}g)$ vanishes. As $\partial_r W(-e^{r\partial_u}g) - \partial_r|_{r=0} W(-e^{r\partial_u}g)=\int_0^r \partial^2_x W(-e^{x\partial_u}g)dx$ and $\partial^2_x W(-e^{x\partial_u}g)=-\tau_x([A,[A,W(-g)]])$, which is an operator of the form $\tau_x(W(-g) P)$, where $P$ is a polynomial of degree two in field operators ($\phi(g')$ and $\phi(g'')$), we obtain $\|T_2'''\varphi\| \ple t^2 \|\bar N^{3/2}\varphi\|$. It follows that the contribution to \eqref{mmm3} coming from $T'_2$ is $\ple \eta^3\|\bar N^{3/2}\varphi\|\, \|\bar N^{1/2}\psi\|$. Hence
\begin{equation}
\label{mmm7}
\left|\int_{\mathbb R}\SPn{\varphi}{T_2\psi} d\mu(s)\right| \ple \eta^2\|\bar N^{1/2}\varphi\|\, \|\bar N^{1/2}\psi\|+ \eta^3\|\bar N^{3/2}\varphi\|\, \|\bar N^{1/2}\psi\|.
\end{equation}
Up to now, only two derivatives of $g$ are assumed to exist. Finally we estimate the term with $T_3$, 
\begin{equation}
\label{mmm8}
\int_{\mathbb R}d\mu(s) \int_0^{\eta s} dt \ \SPn{W(-g)\varphi}{\big( \partial_t\phi(e^{t \partial_u}g')- \partial_t|_{t=0}\phi(e^{t \partial_u}g')\big)\psi}
\end{equation}
where we inserted the term containing $\partial_t|_{t=0}\phi(e^{t \partial_u}g')$ for free, once again due to \eqref{c6}. 
Since $\partial_t\phi(e^{t \partial_u}g')- \partial_t|_{t=0}\phi(e^{t \partial_u}g') = \phi(e^{t\partial_u}g''-g'')$ we can apply the estimate \eqref{c-MK4} (with $g'$ replaced by $g''$ and $\eta s$ replaced by $t$) to obtain
\begin{equation}
\label{mmm9}
\left|\int_{\mathbb R}\SPn{\varphi}{T_3\psi} d\mu(s)\right| \ple \eta^{\gamma+1} \|\varphi\|\, \|\bar N^{1/2}\psi\|,
\end{equation}
for any $\gamma\in[0,1]$ and provided $\|\, |\partial_u|^{\gamma+2}g\|<\infty$. Combining \eqref{t1bound}, \eqref{mmm7} and \eqref{mmm9} yields the bound \eqref{c-MK6a}. The estimate for $K(\eta)^*$ is obtained in the same way.
\end{proof}

\bigskip

{\em Proof of Lemma \ref{lemmac5}. } For $l\ge 0$ we have 
\begin{equation}
\label{c44}
\bar N^{l/2} R_z(\eta)\psi = \bar N^{1/2} R_z(\eta)\bar N^{1/2}\bar N^{(l-2)/2}\psi +\bar N^{1/2} [\bar N^{(l-1)/2}, R_z(\eta)]\psi.
\end{equation}
The second term on the right side is
\begin{eqnarray}
\lefteqn{
\bar N^{1/2} [\bar N^{(l-1)/2}, R_z(\eta)]\psi = \Delta\, \bar N^{1/2} R_z(\eta) [I(\eta),\bar N^{(l-1)/2}] R_z(\eta)\psi}\nonumber\\
&=&\Delta\, \bar N^{1/2} R_z(\eta)\bar N^{1/2} \big( \bar N^{-1/2} I(\eta)\bar N^{1/2} - \bar N^{(l-2)/2} I(\eta) \bar N^{-(l-2)/2}\big) \bar N^{(l-2)/2} R_z(\eta)\psi.
\label{c45}
\end{eqnarray}
Using that $\|\bar N^{1/2} R_z(\eta)\bar N^{1/2}\|\le C\eta^{-1}$  (see Lemma \ref{lemmac1}(4)) and that, as we show below,
\begin{equation}
\label{c46}
\sup_{\eta>0} \|\bar N^\alpha\bar I(\eta)\bar N^{-\alpha}\| <\infty,
\end{equation}
for all $\alpha\in\mathbb R$, we obtain from \eqref{c44} and \eqref{c45} that
\begin{equation}
\label{c47}
\|\bar N^{l/2}R_z(\eta)\psi\| \ple  \eta^{-1}\|\bar N^{(l-2)/2}R_z(\eta)\psi\| + \eta^{-1}\|\bar N^{(l-2)/2}\psi\|.
\end{equation}
We now iterate \eqref{c47}. For $l$ even, we obtain after $l/2$ iterations
$$
\|\bar N^{l/2}R_z(\eta)\psi\| \ple  \eta^{-(l+1)/2}\|\psi\|_1 +\sum_{j=1}^{l/2} \eta^{-j} \|\bar N^{(l-2j)/2}\psi\| \ple \big( \|\psi\|_1+\eta^{1/2} \|\bar N^{(l-2)/2}\psi\|\big) \eta^{-(l+1)/2}.
$$
We use $\|\bar N^{1/2} R_z(\eta)\psi\|\le c\eta^{-1/2}\|\psi\|_1$ (see \eqref{c20'}) in the last iteration step. This gives \eqref{c43} for $l$ even. The estimate for $l$ odd is obtained in the same way, iterating \eqref{c47}.

It remains to show the bound \eqref{c46}, which is equivalent to $\|\bar N^\alpha W\bar N^{-\alpha}\|<\infty$, where $W= W(2if_\beta)$. Relation \eqref{m4} (with $D=\one$) gives, for any integer $m\ge 1$,
\begin{equation}
\bar N^m W\bar N^{-m} = \bar N^{m-1} W\big(\one +(\phi+c)\bar N^{-1}\big) \bar N^{-(m-1)} = \bar N^{m-1} W\bar N^{-(m-1)}B_m,
\label{bm}
\end{equation}
where $c$ is a constant, $\phi=\phi(2if_\beta)$ and $B_m= \bar N^{m-1}(\one +(\phi+c)\bar N^{-1}) \bar N^{-(m-1)}$. By using repeatedly the commutation relation $N\phi=\phi N +2^{-1/2}(a^*(2if_\beta)-a(2if_\beta))$ one sees that $B_m$ is bounded. Next, we show that $\bar N^{1/2} B_m\bar N^{-1/2}$ is bounded. It suffices to prove that $\bar N^{1/2}[B_m,\bar N^{-1/2}]$ is bounded. The representation $\bar N^{-1/2}=\pi^{-1}\int_0^\infty x^{-1/2} (\bar N+x)^{-1}d x$ (see \cite{K}, equation (3.53)) gives 
\begin{equation}
\label{bm3}
\bar N^{1/2}[B_m,\bar N^{-1/2}] = \pi^{-1}\int_0^\infty x^{-1/2} \bar N^{1/2} (\bar N+x)^{-1} [\bar N,B_m]  (\bar N+x)^{-1} dx.
\end{equation}
Next, $\|\bar N^{1/2} (\bar N+x)^{-1}\| \le (1+x)^{-1/2}$, $\|(\bar N+x)^{-1}\|\le (1+x)^{-1}$ and $[\bar N,B_m] = \bar N^{m-1} [\bar N,\phi]\bar N^{-m}$, which is easily seen to be bounded. The norm of the integrand in \eqref{bm3} is thus bounded above by a constant times $x^{-1/2}(1+x)^{-3/2}$, which is integrable in $x\in[0,\infty)$. Thus the operator \eqref{bm3} is bounded.

Iterating \eqref{bm} gives $\bar N^m W\bar N^{-m}= WB_1\cdots B_m$. Let $\alpha\ge 0$ and set $\alpha=m+\xi$, with $m=0,1,\ldots$ and $0\le \xi<1$. Then 
\begin{equation}
\label{bm1}
\bar N^\alpha W\bar N^{-\alpha} = \bar N^\xi \bar N^m W\bar N^{-m}\bar N^{-\xi}=\bar N^\xi WB_1\cdots B_m\bar N^{-\xi}.
\end{equation}
To show boundedness of $\bar N^\alpha W\bar N^{-\alpha}$ it suffices to show it for $\bar N^\xi[WB_1\cdots B_m,\bar N^{-\xi}]$, as $WB_1\cdots B_m$ is bounded. The representation $\bar N^{-\xi}=\pi^{-1}\sin(\pi\xi)\int_0^\infty x^{-\xi} (\bar N+x)^{-1}d x$ (see \cite{K}, equation (3.53)) gives 
\begin{equation}
\label{bm2}
\bar N^\xi [WB_1\cdots B_m,\bar N^{-\xi}] = \pi^{-1}\sin(\pi\xi)\int_0^\infty x^{-\xi}\bar N^\xi (\bar N+x)^{-1} [\bar N,WB_1\cdots B_m](\bar N+x)^{-1} dx.
\end{equation}
Using that $\|\bar N^\xi (\bar N+x)^{-1}\|\le (1+x)^{-1+\xi}$, $\|\bar N^{1/2}(\bar N+x)^{-1}\|\le (1+x)^{-1/2}$, and, as we show below, 
\begin{equation}
\label{bm4}
\|[\bar N,WB_1\cdots B_m]\bar N^{-1/2}\|<\infty,
\end{equation}
we see that the norm of the integrand in \eqref{bm2} is bounded from above by a constant times $x^{-\xi}(1+x)^{-3/2+\xi}$, which is integrable. To complete the proof of Lemma \ref{lemmac5}, we show \eqref{bm4}.

Expanding the commutator gives a sum of terms, each being of the form either $T_1=[\bar N,W]B_1\cdots B_m\bar N^{-1/2}$ or $T_2=WB_1\cdots B_k[\bar N,B_{k+1}]B_{k+2}\cdots B_m\bar N^{-1/2}$. $T_1$ is bounded since $[\bar N,W]\bar N^{-1/2}$ and $\bar N^{1/2} B_k\bar N^{-1/2}$ ($k=1,\ldots,m$) are. Finally, $T_2$ is bounded since $[\bar N,B_{k+1}]=\bar N^{k}[\bar N,\phi]\bar N^{-k-1}$ is. This shows \eqref{c46} (for $\alpha\ge 0$; for $\alpha\le 0$ the derivation is the same). The proof of Lemma \ref{lemmac5} is complete.\hfill \qed

\subsection{Proof of Theorem \ref{thmc2}}

We want to prove that $\F(x)$, $x\neq 0$, is invertible. For $y>0$,
\begin{equation}
\label{c50}
\F(x)-iy = \F(x+iy)\big\{ \one +\F(x+iy)^{-1} \big(\F(x)-iy -\F(x+iy)\big)\big\},
\end{equation}
where $\F(x+iy)^{-1}=P(\bar \L-x-iy)^{-1}P$. By Theorem \ref{thmc0}, $z\mapsto PI\bar P(\bar \L-z)^{-1}\bar PIP$ extends to a H\"older continuous map in $z\in\bar{\mathbb C}_+$, with exponent one. Hence
\begin{equation}
\label{c51}
\|\F(x)-iy- \F(x+iy) \| \le \Delta^2\|PI\bar P\big((\bar \L-x-iy)^{-1} - (\bar \L-x)^{-1} \big)\bar PIP\|\le C\Delta^2y,
\end{equation}
uniformly in $x\in\mathbb R$. Moreover, $x\neq 0$ is not an eigenvalue of $\L$, so $w-\lim_{y\rightarrow 0_+}iy(\L-x-iy)^{-1}=0$, from which it follows that 
\begin{equation}
\label{c52}
\lim_{y\rightarrow 0_+}iy\F(x+iy)^{-1}=0.
\end{equation}
Combining this with \eqref{c51} and \eqref{c50} shows for any $x\neq 0$ there is a $y_0$ s.t. if $|y|<y_0$, then $\F(x)-iy$ is invertible, and 
\begin{equation}
\label{c53}
\|(\F(x)-iy)^{-1}\| \le 2\|\F(x+iy)^{-1}\|\le (2 y)^{-1}.
\end{equation}
In the last step, we have again used \eqref{c52}. This implies that the kernel of $\F(x)$ is $\{0\}$. Indeed, if $\F(x)\psi=0$ for some $\psi\in{\rm Ran}P$, $\|\psi\|=1$, then $\|(\F(x)+iy)^{-1}\psi\| =1/y$. But \eqref{c53} gives $\|(\F(x)+iy)^{-1}\psi\|\le 1/(2y)$, a contradiction.

Since $\F(x)$ is invertible ($x\neq 0$) there is a constant $c_x$ s.t. $\|\F(x)^{-1}\|\le c_x$. We have $\F(x)^{-1} = \F(x')^{-1}[\one -(\F(x)-\F(x'))\F(x)^{-1}]$ and for $x'$ close enough to $x$, $\|(\F(x)-\F(x'))\F(x)^{-1}\|<1/2$. It follows that $\|\F(x')^{-1}\|\le 2c_x$. This completes the proof of Theorem \ref{thmc2}\hfill\qed

\bigskip
\bigskip

\noindent
{\bf Acknowledgement.\ } This work has been supported by an NSERC Discovery Grant and an NSERC Discovery Grant Accelerator.

\end{document}